\newtheorem{theorem}{Theorem}[section]
\newtheorem{corollary}[theorem]{Corollary}
\newtheorem{lemma}[theorem]{Lemma}
\newtheorem{definition}[theorem]{Definition}
\newtheorem{remark}[theorem]{Remark}
\newtheorem{proposition}[theorem]{Proposition}
\def\squarebox#1{\hbox to #1{\hfill\vbox to #1{\vfill}}}
\newcommand{\qed}{\hspace*{\fill}\vbox{\hrule\hbox{\vrule\squarebox{.667em}\vrule}\hrule}\smallskip}
\newenvironment{proof}{\noindent{\bf Proof:~~}}{\(\qed\)}
\newcommand{\longver}[1]{}
\newcommand{\ignore}[1]{}
\newcommand{\shortversion}[1]{}
\newcommand{\rentalprob}{{Shared--Rental problem}}
\newcommand{\lexmaxws}{{\emph{lex-max-WS}}}
\newcommand{\lexmax}{{\lexmaxws}}
\begin{document}
% Title portion. Note the short title for running heads
%\title{A New Approach to Fair Distribution of Welfare}
%\author{Submission 125}

% note that the abstract must come before \maketitle

\title{A New Approach to Fair Distribution of Welfare}
%\titlerunning{Bottleneck Models}

\author{
	Moshe Babaioff
	\thanks{
		Microsoft Research, moshe@microsoft.com.
	}
	\and{
		Uriel Feige
		\thanks{
			Weizmann Institute, Rehovot, Israel, Uriel.Feige@weizmann.ac.il. Supported in part by the Israel Science Foundation (grant No. 1388/16).
			Part of this work was done at Microsoft Research, Herzeliya.
	}}
}
\maketitle
\begin{abstract}
	We consider transferable-utility profit-sharing games that arise from settings in which agents
	need to jointly choose one of several alternatives, and may use transfers to redistribute the welfare
	generated by the chosen alternative. One such setting is the \rentalprob, in which students jointly
	rent an apartment and need to decide which bedroom to allocate to each student, depending on
	the student's preferences. Many solution concepts have been proposed for such settings, ranging from mechanisms without transfers, such as Random Priority and the Eating mechanism, to mechanisms with transfers, such as envy free solutions, the Shapley value, and the Kalai-Smorodinsky bargaining solution. We seek a solution concept that satisfies three natural properties, concerning efficiency, fairness and decomposition. We observe that every solution concept known (to us) fails to satisfy at least one of the three properties. We present a new solution concept, designed so as to satisfy the three properties. A certain submodularity condition  (which holds in interesting special cases such as the Shared-Rental setting) implies both existence and uniqueness of our solution concept.   
	%is based on the Bondareva-Shapley theorem, whereas uniqueness follows from a theorem of Dutta and Ray about Lorenz domination.
\end{abstract}

% Renew this after \maketitle if the default list of authors is too long for headers
%\renewcommand{\shortauthors}{W.\ Vickrey et.\ al.}

\section{Introduction}
\subsection{Background}
\label{sec:background}
% \mbedit{this is an edit}\mbcomment{this is a comment}
We introduce a new solution concept for situations in which agents with cardinal preferences need to jointly choose one alternative from a set of alternatives, possibly compensating each other using transfers.
This is a well studied setting in cooperative game theory, and we follow a normative approach that specifies properties that we wish our solution concept to have, and then design a solution concept that meets these specifications. To motivate our new solution concept and contrast it with well established previous solution concepts, we start with an example.

Suppose that three students jointly rent a three bedroom apartment for a total rent of $r$ units of money. The students need to do two things. One is to jointly pay the rent, and the other is to solve the allocation problem, namely, decide which student gets which room, possibly compensating each other with money.
We assume that the students are equals, in the sense that each student bears equal responsibility in paying the rent, and equal eligibility in receiving a room. Being equals, each student first pays $r/3$ towards the rent. It remains to solve the allocation problem, where this solution may possibly involve transfer of money among the students.
%Namely, the students are jointly committed to pay rent regardless of whether they reach agreement on the allocation (or even regardless of whether they move into the apartment at all), and hence they each pay $r/3$ (as they have equal responsibility) towards the rent, prior to addressing the allocation problem.

\begin{remark}
In cases in which no student receives a transfer larger than $r/3$, transfers may be implemented indirectly by having students pay unequal parts of the rent.
However, in this paper we do not constrain transfers to be smaller than $r/3$, and the question of whether transfers are implemented as direct transfers among the students or as modification to rent payments is not a concern of the current paper.
%OLD: In cases in which no student receives a transfer larger than $r/3$, transfers may be implemented indirectly by having students pay unequal parts of the rent. The question of whether transfers are implemented as direct transfers among the students or as modification to rent payments is not a concern of the current paper. We do not constrain transfers to be smaller than $r/3$, and we may invoke transfers even if $r=0$.
\end{remark}

A common approach for allocating rooms (and other goods) is using the {\em Random Priority} mechanism (a.k.a. {\em random serial dictatorship}), that we abbreviate as RP. A total order among the students is chosen uniformly at random, and each student in her turn chooses a room among those that are still available. RP has obvious advantages, being easy to implement in practice, agents (students in our case) have dominant strategies (given an agent's turn to choose, she should simply choose the available alternative that she most prefers), and being perceived as ``fair'' (all agents are treated equally from the mechanism's point of view). A significant drawback of RP is that it does not maximize welfare -- the resulting allocation may produce less welfare (sum of utilities) than alternative allocations. Hence some economic efficiency is lost.

Let us consider a concrete example.
Suppose that the students can express their valuation for rooms in units of money,
and that they are risk neutral (they wish to maximize the expected received value). Suppose further that for some small $0 < \delta < \frac{1}{4}$, the value that each student derives by being given each of the rooms is as in the following table:

\renewcommand{\arraystretch}{1.2}

\begin{center}
\begin{tabular}{|l||c|c|c|}
  \hline
  % after \\: \hline or \cline{col1-col2} \cline{col3-col4} ...
  Example 1 & {\bf Room 1} & {\bf Room 2}  & {\bf Room 3} \\ \hline \hline
  {\bf Student 1} & $1 - \delta$ & $\delta$ & 0 \\ \hline
  {\bf Student 2} & $1 - 2\delta$ & $2\delta$ & 0 \\ \hline
  {\bf Student 3} & 0 & $\frac{1}{2} - \delta$ & $\frac{1}{2} + \delta$ \\
  \hline
\end{tabular}
\end{center}

\medskip

The maximum welfare allocation assigns room~$i$ to student~$i$ for every $i$, giving welfare of $\frac{3}{2} + 2\delta$. However, RP will result with probability half with an assignment in which student~2 gets room~1, giving welfare $\frac{3}{2}$,
and hence the expected welfare of RP is $\delta$ lower than optimal.

The RP mechanism does not involve transfer of money among agents. In our language, we refer to it as an ANT, which is an abbreviation for Allocation mechanism with No Transfers. To overcome its weaknesses (shared by other ANTs as well), one often considers allocation mechanisms with transfers (abbreviated as AWT -- Allocations With Transfers). AWTs allow for the following paradigm: first choose a maximum welfare allocation (thus creating
the largest pie to divide: the maximum possible
welfare to distribute among the agents), and then employ monetary transfers among the agents so as to distribute the high welfare to all agents, so as to satisfy some fairness criteria. In the example above, this would mean assigning room~$i$ to student~$i$ for every $i$, and then figuring out what the transfers should be so that the combination of allocation with transfers would be ``fair''.

To reason about transfers, we make the assumption that students have quasi-linear utilities: the utility of a student is simply the sum of her value for the room that she receives plus the transfer that she receives (the transfer may be negative if the student gives money rather than receives money).
Moreover, we assume that the mechanism that computes the allocation and the transfers has access to the true valuations of the students.
%Moreover, we require students to disclose their valuation functions to the mechanism that computes the allocation and the transfers. We further assume that they do so truthfully. 
(This full information assumption is standard in cooperative game theory, and there are impossibility results showing that it cannot be avoided in our setting. See more details in Appendix~\ref{sec:discuss}.)
Within such a setting, there is a well studied class of AWTs that is referred to as {\em envy free} solutions~\cite{Foley67,Su99,GMPZ}). The basic principle is that one associates a transfer with each room (where the sum of transfers equals~0 -- this is a {\em budget balance} condition) such that given the transfers, each student (weakly) prefers a different room. Then each student gets the room and associated transfers that she prefers, and no one prefers to switch with another agent.
In the example above we can associate the following transfers with the rooms:

\renewcommand{\arraystretch}{1.2}

\begin{center}
\begin{tabular}{|c|c|c|}
  \hline
  % after \\: \hline or \cline{col1-col2} \cline{col3-col4} ...
    {\bf Room 1} & {\bf Room 2}  & {\bf Room 3} \\ \hline
  $-\frac{2}{3} + 2\delta$ & $\frac{1}{3} - \delta$ & $\frac{1}{3} - \delta$ \\
  \hline
\end{tabular}
\end{center}

\medskip

These transfers are indeed budget balanced and envy free, that is,
each student~$i$ prefers her assigned room~$i$ (along with the associated payment) over any other room,
leading to an allocation that maximizes welfare and in which supposedly every student is happy (as she got her most preferred one out of the three available options).

Let us consider a natural question. Suppose that the students initially intend to use the RP mechanism.  Will the students be better off by using the
envy free mechanism (that we abbreviate EF) instead of using RP? In some respects, the answer is no: RP is simpler to implement than EF, as it does not require students to disclose their valuation functions and to implement transfers. In other respects the answer is yes: EF generates higher welfare. But let us consider this last aspect more carefully. The social justification to maximize welfare is (in our opinion) the belief that the extra welfare will eventually get distributed to all members of the society that contributed to the increase in welfare. Is it the case that the increase in welfare (generated by moving from RP to EF) is distributed over the three students in a reasonable way? The answer is negative in our opinion.

\begin{itemize}

\item In RP, the sum of expected values derived by students~1 and~2 is~1. In EF, the sum of values increases to $1 + \delta$, but the sum of what they lose due to transfers is $\frac{1}{3} - \delta$. If $\delta < \frac{1}{6}$, each of the two students gets higher expected utility from RP than from EF. It is not true that the increase in welfare is distributed over all students in a way that \emph{every} student (at least weakly) benefits.

\item Student~3 contributes nothing to the increase in welfare when changing from RP to EF (in both cases her allocation is exactly the same -- room~3). Nevertheless, under EF, student~3 not only gets her most preferred room, but also gets paid. Moreover, this payment is even larger than the total increase in welfare that EF offers compared to (the expected welfare of) RP.

\end{itemize}

Another aspect that we find troublesome with the EF solution is the following.

\begin{itemize}

\item In every Pareto efficient allocation, student~3 gets room~3 and the only question is which of the rooms~1 and~2 is allocated to which of the students~1 and~2. Hence the instance naturally decomposes into two subinstances, $I_3$ involving room~3 and student~3, and $I_{1,2}$ involving the other two students and two rooms. If one does this decomposition and then employs an EF mechanism on each component separately, student~3 does not receive any payments from the other students, and hence the resulting payments are completely different from those without the decomposition. Likewise, suppose that we had started with two separate instances, $I_{1,2}$ and $I_3$ as above, where every student prefers the rooms in her own instance over those in the other instance (it may even be that each instance concerns a different apartment). If we use EF mechanisms, then combining the two instances into one results in different payments compared to solving each of the instances separately.  This sensitivity of the payments in EF mechanisms to composition and decomposition of instances (importantly, we are considering here cases in which composition and decomposition have no effect on the allocation itself) may lead to disagreements among the agents regarding what constitutes a single instance.   
%\mbcomment{Are you suggesting removing the rest?} Likewise, had we started with an instance $I_{1,2}$ involving two rooms in South-Africa and $I_3$ involving one room in Venezuela (and no student wishes to change country), then by thinking of them as one instance the EV mechanism would have us transfer money from South Africa to Venezuela (envy has no borders).

\end{itemize}

An allocation instance may have several different envy free solutions, but the above shortcomings are shared by \emph{all} envy free solutions in the above example, provided that $\delta$ is sufficiently small\footnote{If $\delta$ is sufficiently small, then in {\em every} envy free solution the transfer associated with room~3 is positive: Let $p_i$ denote the transfer associated with room $i$. For student~3 not to envy student~2 we must have $p_2 \le p_3 + 2\delta$. For student~2 not to envy student~1 we must have $p_1 \le p_2 - 1 + 4\delta \le p_3 - 1 + 6\delta$.  Together with the budget balance requirement we have that $p_3 = -p_1 -p_2 \ge -p_3 + 1 - 6\delta -p_3 - 2\delta$, implying that $p_3 \ge \frac{1 - 8\delta}{3}>0$, where the last inequality holds  when $\delta$ is sufficiently small.}.

Summarizing, ANT mechanisms such as RP need not maximize welfare. AWT mechanisms can address this weakness. A common AWT approach, that of envy free (EF) mechanisms has elegant conceptual properties when considered in isolation. However, when comparing its outcomes to those of RP, we identified several troubling aspects with its transfers. These include the fact that despite increase in welfare (compared to RP), some individual agents suffer loss in (expected) utility, the fact that agents who contribute nothing to the increase in welfare might receive payments (even beyond the total increase in welfare), and the fact that natural composition and decomposition properties are not respected by EF mechanisms.

Many other AWT approaches that have been proposed in the literature can be applied in the room allocation setting. They include (among others) the {\em Shapley value}, the {\em Nucleolus}, the {\em Nash bargaining} solution and the {\em Kalai-Smorodinsky (KS) bargaining} solution. Every one of them suffers from at least one of the troubling aspects listed above, see sections~\ref{sec:shapley}, \ref{sec:compare-rental} and Appendix  \ref{appendix:bargaining} for more details. In fact, the same holds for every AWT approach that we could find in the literature. Hence despite the many solution concepts that already exist, we find it appropriate to introduce a new AWT mechanism that does not suffer from any of the troubling aspects listed above.

% \section{Overview}

\subsection{The model}

We consider {\em transferable-utility profit-sharing games}, a setting that has been studied in previous work (e.g., by Moulin~\cite{Moulin85}). The room allocation problem of the previous section is a special case of this more general setting.

There is a set $\cal{N}$ of $n$ {\em agents} (also referred to as {\em players}) and a set $\cal{A}$ of {\em alternatives}. Every agent $i \in \cal{N}$ has a valuation function $v_i : {\cal{A}} \rightarrow \mathbb{R}$. All valuation functions are expressed in the same units (of money). We let $v = (v_1, \ldots, v_n)$ denote the tuple of all the valuation functions.
%Given $v$, it might be that some alternatives are considered not to be feasible. The set of alternatives feasible under $v$ is denoted by ${\cal{A}}$, with ${\cal{A}} \subseteq{\cal{A}}$.
An NT ({\em no transfers}) {\em social choice function} $f$ receives as input the pair $({\cal{A}}, v)$ that includes the set of alternatives and the valuation functions, and outputs one of the alternatives from ${\cal{A}}$.  A randomized NT social choice function may use randomization when choosing its output. Consequently, its output is a probability distribution over alternatives.

Given the tuple $v$ of valuation functions, a set $S \subseteq{\cal{N}}$ of agents and an alternative $A \in {\cal{A}}$, the {\em welfare} $w_{S,v}(A)$ that alternative $A$ offers to $S$ is defined as $w_{S,v}(A) = \sum_{i \in S} v_i(A)$. An NT social choice function $f$ {\em maximizes welfare} (with respect to $\cal{N}$) if the alternative $A^* \in {\cal{A}}$ that $f$ selects satisfies $w_{{\cal{N}},v}(A^*) \ge w_{{\cal{N}},v}(A)$ for all $A \in {\cal{A}}$.

We allow transfer of money among agents. Such transfers are represented as a vector $p = (p_1, \ldots, p_n)$, where $p_i$ is the payment to agent $i$, measured in units of money. We refer to the case of $p_i > 0$ as an {\em in-payment} (the amount of money of agent $i$ increases), and to the case of $p_i < 0$ as an {\em out-payment} (the amount of money of agent $i$ decreases). A transfer vector $p$ is {\em budget balanced} if $\sum_{i=1}^n p_i = 0$. A {\em transfer function} $g$ receives as input the triple $({\cal{A}}, v, A^*)$ that includes the set of alternatives, the valuation functions, and an alternative chosen by an NT social choice function, and outputs a budget balanced transfer vector.

We assume that the utility functions of the agents are {\em quasi-linear}.
Namely, for agent $i \in {\cal{N}}$ with valuation function $v_i$, her utility $u_i$ from the pair of alternative $A$ and transfer vector $p$ is $u_i(A,p) = v_i(A) + p_i$. We further assume a setting of ``full information upon request": the social planner may request information about valuation functions of agents (this information might be limited to the ordinal preferences of an agent over a set of alternatives, or might be as general as the full valuation function of an agent), and the agents reply truthfully to such requests. (These assumptions will be discussed in Appendix~\ref{sec:discuss}.)

%Mathematically, a setting of full information upon request is equivalent to a full information setting in which the vector $v$ of valuation functions is known upfront. However, conceptually there is a difference between the two settings. The ``upon request" setting is meant to remind the reader that there is a cost incurred by the agents in providing information about their valuation functions to the mechanism. We only note that such a cost exists (e.g., due to computational overhead for the agents, communication overhead, compromise of privacy), and that presumably it increases as the amount of information requested increases. We do not attempt to quantify this cost, and it does not enter the quantitative aspects of our mechanisms. Rather, the existence of such a cost serves as a guiding principle for designing mechanisms: do not request detailed information from an agent unless the information is used for the agent's benefit.

Let us illustrate how the above model captures the example presented in Section~\ref{sec:background}, of three students renting a three bedroom apartment. $\cal{N}$ corresponds to the set of three students, and $\cal{A}$ corresponds to the set of six possible permutations over rooms, matching one room to one student. The valuation functions $v_i$ are as in the example.
%Given the tuple $v$ of valuation functions, the set of feasible alternatives ${\cal{A}}$ might be smaller than the set of all alternatives. For example, it makes sense to discard those alternatives that are not Pareto optimal (e.g., those alternatives in which student~3 gets room~1).
An example of a randomized NT social choice function is the output of the {\em Random Priority} (RP) mechanism: once $v$ is given (in fact, knowledge of ordinal preferences suffices here) RP induces a well defined probability distribution over alternatives. The envy-free allocation and transfers provided in the example are a solution (implicitly) involving an NT social choice function $f$ and a transfer function $g$.

\subsection{Our contribution}

For the setting described above, we wish to design a solution concept that has two components: an NT social choice function, and an associated transfer function. We have three goals. One is {\em economic efficiency}.  This goal is easily attainable in our full information framework -- we simply select a welfare maximizing alternative, which we denote by $A^*$. (If there are several welfare maximizing alternatives, $A^*$ denotes one of them, selected arbitrarily.)  Another goal is to achieve {\em fairness}, in the sense that the welfare will be shared ``fairly'' among all agents. Achieving this goal is made possible by the use of transfers.  Those agents for which alternative $A^*$ is undesirable can be compensated by in-payments, and the budget balance requirement can be met by extracting an equal amount of out-payments from those agents who do desire alternative $A^*$. The assumption that agents have quasilinear utility functions simplifies the accounting of the extent to which utility derived from payments can replace utility derived from the selected alternative. The third goal is that of {\em decomposability}, which basically means that if a large game involving multiple agents can be naturally decomposed into many smaller games over disjoint sets of agents, then the solution of the large game should also decompose into solutions of the smaller games. Equivalently, one should be able to solve each smaller game separately, and obtain a solution to the large game as the concatenation of the solutions to the smaller games.

Our contributions in this work are in setting the above three goals, proposing definitions for the fairness properties and decomposition properties that they refer to, proposing a solution concept that attains the above three goals, and providing sufficient conditions for its existence and uniqueness. Here is an informal statement of our main result when specialized to the \rentalprob.

{\bf Theorem (informal)}. {\em The \lexmaxws\ solution (introduced in our work) for the \rentalprob\  maximizes welfare and satifies the fairness and the decomposition properties alluded to above (and formally defined later in this paper). Moreover, in a well defined sense, it is the unique solution that satisfies these properties.}

Here are more details regarding our contributions:

\begin{enumerate}

\item We propose a new notion of fair solutions, the {\em welfare-sharing core} (abbreviated WS-core). See Definition~\ref{def:WS}. It combines three principles that are briefly sketched below.

    \begin{enumerate}

    \item One principle is  {\em domination} with respect to the utility agents can
receive from a {\em disagreement point}, or {\em reference point}.
%The domination aspect by itself is not new (e.g., it explicitly appears in~\cite{Moulin85}).
This is mathematically similar to the familiar concept of {\em individual rationality} (IR), though conceptually there is a distinction between these two notions.
See Section~\ref{sec:domination} for more details.

    \item Another principle is that fairness entails not only lower bounds on the utilities that agents derive from the solution, but also natural upper bounds. We introduce a set-function $W_{max}$, where for a set $S$ of agents, $W_{max}(S)$ is the welfare that $S$ could derive from the alternative that is best for $S$. The same notion appears  in~\cite{Moulin92}, where is is referred to as {\em stand alone} utility. We require that the utility that a solution (with transfers) offers to a set $S$ of agents does not exceed $W_{max}(S)$.  This leads to the notion that we (and~\cite{Moulin92}) refer to as the {\em anticore}.  See Section~\ref{sec:uppercore} for more details.

\item Another principle is that of {\em decomposability} , as discussed above (see Section~\ref{sec:decompose} for more details).
%Basically it says that if a large game involving multiple agents can be naturally decomposed into many smaller games over disjoint sets of agents, then the solution of the large game should also decompose into solutions of the smaller games. Equivalently, one should be able to solve each smaller game separately, and obtain a solution to the large game as the concatenation of the solutions to the smaller games.
A key property of the anticore is that it decomposes: the anticore of a decomposable game is the concatenation of the anticores of each of the component games.

\end{enumerate}

\item We show that in our setting, if $W_{max}$ is submodular, then the WS-core is non-empty. See Theorem~\ref{thm:BS}.

\item We propose  to use egalitarian considerations (specifically, the  lexicographically-maximal welfare-sharing rule, denoted \lexmaxws) for selecting a single solution from the WS-core, see Section \ref{sec:select}.
When $W_{max}$ is submodular,
we show (see Theorem~\ref{thm:DR89}, {which relates to a previous result of Dutta and Ray~\cite{DR89}}) that different egalitarian considerations (e.g., also the min-square rule, {defined in Section \ref{sec:select}}) all lead to the same unique solution.

\item When $W_{max}$ is submodular, we show that computing the \lexmaxws\ solution %is NP-hard in general, but
can be done in polynomial time.  See Appendix~\ref{sec:algorithms}.
Moreover, it is a continuous function (with a small Lipschitz constant) of the valuation functions at points where the disagreement utility is a continuous function of the valuations. %when $W_{max}$ is submodular.
See Appendix~\ref{sec:continuity}.
The \lexmaxws\ solution may not be continuous at points in which the disagreement utility is not continuous.

\item We explain the similarities and differences between our new solution concept and several related notions.
These include coalitional games and imputations (Section \ref{sec:discussion}); {\em cost-sharing} games  (Section \ref{sec:discussion}); notions related to our notion of decomposability, such as {\em Separability} (Section~\ref{app:separable}) and {\em consistency for reduced games} (implicitly addressed in Section~\ref{app:nucleolus});
% \item Solutions that are {\em reasonable from above} (REAB). See Section \ref{sec:discussion}.
previous notions referred to as the {\em anticore} (Section~\ref{sec:uppercore});
{\em egalitarian} solution concepts and {\em Lorenz ordering} (Section \ref{sec:select});
the {\em Shapley value} (Section \ref{sec:shapley}); the {\em Nucleolus} (Appendix \ref{appendix:bargaining}); {\em envy free} solutions (Section \ref{sec:Envy-free solutions}); {\em Nash bargaining} and {\em Kalai-Smorodinsky (KS) bargaining} (Appendix \ref{appendix:bargaining}); {\em population monotonicity} and {\em resource monotonicity} (Appendix~\ref{sec:monotonicity}).

\item We show that for the \rentalprob\ $W_{max}$ is submodular, and hence the \lexmaxws\ solution
enjoys those properties shown above to be implied by submodularity. In addition, the \lexmaxws\ solution dominates \emph{Random Priority} (by definition), and moreover, when instances are ``decomposable" it satisfies a strong notion of decomposability. See Section \ref{sec:rental}.

%\item When presenting a new solution concept, it is important to note not only the desirable properties that it has, but also those that it lacks. We provide examples in which {\em population monotonicity} and {\em resource monotonicity} do not hold for the \lexmaxws\ solution for the \rentalprob, when the disagreement mechanism is  \emph{Random Priority}. See Appendix~\ref{sec:monotonicity}.

\end{enumerate}

\section{The Welfare-Sharing Core}\label{sec:core}

Our starting point is the (not necessarily new) premise that statements such as ``this solution is fair'' have no rigorous meaning on their own. Rather, the fairness of a solution needs to be judged in relation to a {\em reference context}. In our definition of fairness, the reference context will be the set ${\cal{A}}$ of alternatives together with a probability distribution $\pi$ over ${\cal{A}}$ (which we will refer to as a {\em reference point}, or {\em disagreement point}). We now present the definition of the WS-core, and then follow it up with a discussion and comparison with related work.

A {\em solution} $(A^*,p)$ is composed of a welfare maximizing alternative $A^*$ and a budget balanced transfer vector $p = (p_1, \ldots, p_n)$. The {\em utility} that agent $i$ derives from solution $(A^*,p)$ is $u_i(A^*,p) = v_i(A^*) + p_i$. In our context, two solutions $(A^*,p)$ and $(A'^*,p')$ are {\em equivalent} if  $u_i(A^*,p) = u_i(A'^*,p')$ for every agent $i$. Consequently, we sometimes refer to the utility vector $\left(u_1(A^*,p), \ldots, u_n(A^*,p)\right)$ as the solution.

A solution will need to satisfy certain constraints, where these constraints are expressed as a function of the utilities that agents derive from the solution. We shall use $w_{S,v}(A) = \sum_{i\in S} v_i(A)$ to denote the welfare derived by a set $S$ of agents from an alternative $A$, and $u_S(A^*,p) = \sum_{i\in S} u_i(A^*,p)$ to denote the utility derived by $S$ from solution $(A^*,p)$.
%and $W_{min}(S) = \min_{A \in {\cal{A}}}[\sum_{i\in S} v_i(A)]$ indicating the minimum welfare achievable by $S$.

We associate two classes of constraints with solutions $(A^*,p)$:

\begin{enumerate}

\item {\bf Domination:} We assume that a probability distribution $\pi$ over ${\cal{A}}$ is given, where $\pi(A)$ denotes the probability associated with alternative $A$. This distribution represents the alternative that would be chosen in the absence of agreement to use a mechanism with transfers. As such, the distribution $\pi$ may depend on the valuations $v$, and we shall sometimes use the notation $\pi_v$ to make this explicit. The value that agent $i$ derives from $\pi_v$ is $\sum_{A \in {\cal{A}}} \pi_v(A)v_i(A)$, and we refer to it as the agent's disagreement utility. The domination constraints require that $u_i(A^*,p) \ge \sum_{A \in {\cal{A}}} \pi_v(A)v_i(A)$ holds for every agent $i$.

\item {\bf The anticore:}  We introduce a welfare function over sets of agents, which we denote by $W_{max}$.  For every $S \subseteq{\cal{N}}$ let $W_{max}(S) = \max_{A \in {\cal{A}}}[\sum_{i\in S} v_i(A)]$ indicate the maximum welfare achievable by $S$. The anticore constraints require that $u_S(A^*,p) \le W_{max}(S)$ for every set $S \subseteq{\cal{N}}$.

\end{enumerate}

\begin{definition}[WS-core]
\label{def:WS}
Suppose one is given a tuple $v$ of valuation functions, a set ${\cal{A}}$ of alternatives, and a probability distribution $\pi_v$ over ${\cal{A}}$. A solution $(A^*,p)$ (composed of an alternative $A^* \in {\cal{A}}$ that maximizes welfare and a budget balanced vector $p$ of transfers) is said to belong to the {\em welfare-sharing core} (WS-core) if the solution $(A^*,p)$ satisfies the above two sets of constraints (domination and anticore) with respect to the given $v$ and $\pi_v$.
\end{definition}

%As we shall see, the lower core constraints are redundant, as they are implied by the anticore constraints, as well as by the domination constraints. Hence the lower core constraints play no mathematical role, but they do help clarify our solution concept and its relation to other solution concepts.

There are cases in which the WS-core is empty. Here is one such example. Suppose that there are three agents and two alternatives. $A_1$ is the disagreement alternative and all agents value it as~0, whereas $A_2$ is the alternative that maximizes welfare, agent~1 values it as $-1$, whereas each of the other two agents values it as~$1$. Hence there is welfare of $-1+1+1 = 1$ to share among the three agents, and each agent needs to receive utility at least~0 (his disagreement utility). The function $W_{max}$ has value~0 both for the set $\{1,2\}$ and for the set $\{1,3\}$, and hence there is no way of sharing the welfare without violating at least one of the anticore constraints.

Despite the above, in important special cases, the WS-core is nonempty. We first recall some standard terminology. A set function $f$ is {\em monotone} if $f(S) \ge f(T)$ for all $T \subset S$. A set function $f$ is {\em submodular} if for every two sets $S$ and $T$ it holds that $f(S) + f(T) \ge f(S \cap T) + f(S \cup T)$. Equivalently, $f$ is submodular if it has the decreasing marginal returns property:  for every item $i$ and two sets $S \subset T$ it holds that $f(S\cup\{i\}) - f(S)  \ge f(T \cup \{i\}) - f(T)$. A submodular function need not be monotone.

Our main existence result is the following:

\begin{theorem}
\label{thm:BS}
Given a tuple $v$ of valuation functions, a set ${\cal{A}}$ of alternatives, and a probability distribution $\pi$ over ${\cal{A}}$, either one of the following conditions suffices in order for the WS-core to be nonempty.
\begin{enumerate}
\item $W_{max}$ is submodular (though not necessarily monotone).
\item $W_{max} - W_{\pi}$ is monotone (though not necessarily submodular), where $W_{\pi}(S) = \sum_{i\in S}\sum_{A \in {\cal{A}}} \pi_v(A)v_i(A)$ is the expected value derived by set $S$ from the disagreement distribution $\pi$. Note: if the disagreement utilities are~0, then a sufficient condition (though not necessary) for $W_{max} - W_{\pi}$ to be monotone is that the valuation functions are nonnegative.
\end{enumerate}
\end{theorem}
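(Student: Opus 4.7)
The plan is to make the change of variables $y_i := u_i - d_i$ where $d_i := \sum_{A \in \mathcal{A}} \pi_v(A)\,v_i(A)$, and to set $f(S) := W_{max}(S) - W_\pi(S)$. With this reformulation the WS-core is nonempty if and only if the polytope
\[
P := \{\, y \in \mathbb{R}^n_{\geq 0} : y(S) \leq f(S) \text{ for all } S \subseteq \mathcal{N},\ y(\mathcal{N}) = f(\mathcal{N})\,\}
\]
is nonempty. Two elementary facts will be used repeatedly: $f(\emptyset) = 0$ and $f(S) \geq 0$ for every $S$, since $W_{max}(S) \geq W_\pi(S)$ (the maximum dominates any expectation). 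Moreover, for any welfare-maximizing $A^* \in \mathcal{A}$, the vector $y^{A^*}_i := v_i(A^*) - d_i$ always lies in the base polytope $B(f) := \{y : y(S) \leq f(S) \text{ for all } S,\ y(\mathcal{N}) = f(\mathcal{N})\}$ (though its coordinates may be negative in general).

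For Case 2 (monotone $f$), the proof reduces to a short computation. Monotonicity applied to the pair $(\mathcal{N} \setminus \{i\}, \mathcal{N})$ gives $W_{max}(\mathcal{N}) - W_{max}(\mathcal{N} \setminus \{i\}) \geq d_i$. On the other hand, the universal bound $W_{max}(\mathcal{N} \setminus \{i\}) \geq \sum_{j \neq i} v_j(A^*)$ yields $W_{max}(\mathcal{N}) - W_{max}(\mathcal{N} \setminus \{i\}) \leq v_i(A^*)$. Combining these two inequalities gives $v_i(A^*) \geq d_i$ for every $i$, so the no-transfer vector $y^{A^*}$ is already nonnegative and witnesses $P \neq \emptyset$.

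For Case 1 (submodular $W_{max}$), I would use convex separation together with the Lovász extension. Since $W_\pi$ is modular, $f$ inherits submodularity from $W_{max}$. Suppose toward contradiction that $P = \emptyset$. Since $B(f)$ is a nonempty compact convex polytope and $\mathbb{R}^n_{\geq 0}$ is closed and convex, strict separation yields a vector $c \in \mathbb{R}^n$ and a scalar $\gamma$ with $c^\top y < \gamma \leq c^\top z$ for all $y \in B(f)$ and $z \geq 0$; the right inequality forces $c \geq 0$ and $\gamma \leq 0$, so $\max_{y \in B(f)} c^\top y < 0$. By Edmonds' theorem, this maximum equals the Lovász extension $\tilde f(c)$, which for $c \geq 0$ admits the Choquet-integral form $\tilde f(c) = \int_0^\infty f(\{i : c_i > t\})\,dt$. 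Nonnegativity of $f$ makes this integral nonnegative, yielding the desired contradiction.

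The main obstacle is Case 1, which relies on Edmonds' characterization of the support function of a submodular base polytope via the Lovász extension. Case 2, by contrast, is essentially immediate once one notices that monotonicity of $f$ already forces the zero-transfer solution at $A^*$ to satisfy all the domination constraints.
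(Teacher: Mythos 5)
Your proposal is correct, and both halves take a genuinely different route from the paper. The paper reduces non-emptiness to LP duality in the style of Bondareva--Shapley: it shows that $f=W_{max}-W_\pi$ is \emph{proper fractionally subadditive} in each case (for Case~1 by a self-contained minimal-counterexample argument that nonnegative submodular functions are proper fractionally subadditive; for Case~2 by invoking the equivalence of XOS and fractional subadditivity for nonnegative monotone functions), and then argues that the dual of the natural covering LP is bounded below by $f({\cal{N}})$. Your Case~2 is considerably more elementary and more informative than the paper's: the chain $v_i(A^*)\ge W_{max}({\cal{N}})-W_{max}({\cal{N}}\setminus\{i\})\ge d_i$ shows that the \emph{no-transfer} solution at $A^*$ already satisfies domination (and it trivially satisfies the anticore), so it exhibits an explicit point of the WS-core rather than only certifying feasibility of an LP; the paper's machinery is not needed there at all. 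Your Case~1 trades the paper's from-scratch combinatorial lemma for a classical black box: separating the (nonempty, compact) base polytope $B(f)$ from the nonnegative orthant forces a nonnegative objective $c$ with $\max_{y\in B(f)}c^\top y<0$, which contradicts Edmonds' greedy theorem since the Lov\'asz extension of a nonnegative submodular $f$ with $f(\emptyset)=0$ is nonnegative on $c\ge 0$ via its Choquet-integral form. This is correct and arguably shorter, at the cost of importing Edmonds' characterization of the support function of $B(f)$, whereas the paper keeps the argument self-contained. The reformulation via $y_i=u_i-d_i$ and the identity between the WS-core and your polytope $P$ is the same normalization the paper uses, and is sound.
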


The proof of Theorem~\ref{thm:BS} appears in Appendix~\ref{sec:main-thm-proof}. In is based on the following approach.
Similar to proofs of the well known Bondareva-Shapley theorem~\cite{Bondareva63,Shapley67}, non-emptiness of the WS-core can be cast as a feasibility question for a certain linear program, which then translates to showing that the dual of the linear program is bounded. Each of the submodularity and monotonicity conditions listed above is shown to imply that the dual is bounded, thus proving the theorem.

In Section~\ref{sec:discussion} we provide more details on the domination and anticore constraints. We now formally define the decomposability property that plays an important part in our work.

\section{Decomposability}
\label{sec:decompose}

A central aspect in applying game theory, social choice and mechanism design in practice is that of decomposing large games into smaller games and reasoning about each small game separately. One may view all humanity (and other strategic living creatures) as participating in one huge game in which individuals pursue their own goals and have multiple interactions with other individuals. This game is too heterogeneous and complicated to reason about as a whole. Moreover, the actions of some individuals have very low influence (if at all) on some other individuals, to the extent that they can be ignored. Thus, to be able to reason about interactions between individuals, it is reasonable to decompose this huge game into smaller games, involving smaller numbers of individuals, and having a more homogeneous character. For example, a smaller game might be a particular auction, a particular room allocation problem, or elections for a particular position. In such a smaller game we specify who the players are, what actions are available to them, what the possible outcomes are, and assume that the value derived by the players from the game depends only on the outcome of that game. The decomposition of the huge ``game of humanity" into smaller games is a modeling decision that captures reality only in some approximate sense (the small games are not really isolated from each other, there might be players affecting or affected by the game that we are not aware of, etc.), but seems to be an unavoidable modeling decision in areas such as social choice and mechanism design.

Given the ubiquity of game decompositions, we think it is important that mechanisms (for profit-sharing games, in our context) will remain consistent throughout decompositions. Ideally, we would like the solution to problems that have a natural partition to subproblems, to be the same whether or not we consider the problem as a whole and find a solution, or consider each subproblem separately and find a solution to each.

Motivated by the above view,  in this section we introduce formal definitions for the notion of an instance being decomposable, and for two notions of decomposability for mechanisms: weak and strong.
% the definitions that are presented in this section.

%that mechanisms (for profit sharing games, in our context) will remain consistent throughout decompositions, in a sense that we define next.

%We call this notion \emph{strong decomposability}. We also define a somewhat weaker notion that we call \emph{weak decomposability} that  only requires that no transfers are made between "set of agents involved in separate games", yet does not require the solutions to be independent of the other games.

Let $\cal{A}$ be a set of alternatives, $\cal{N}$ be a set of agents, and let $v = (v_1, \ldots, v_n)$ be a tuple specifying the valuation functions of the agents. We say that alternative $A \in {\cal{A}}$ is {\em Pareto optimal} with respect to a set $S \subset {\cal{N}}$ of agents if for every alternative $B \in {\cal{A}}$, either there is some agent $i \in S$ such that $v_i(A) > v_i(B)$, or for all agents $i \in S$ it holds that $v_i(A) = v_i(B)$.

\begin{definition}[independent component, decomposable instance]
\label{def:decompose}
A set of players $S \subset {\cal{N}}$ is referred to as an {\em independent component} (or just {\em component}, for brevity) if for every alternative $A \in {\cal{A}}$ that is Pareto optimal with respect to $S$ (given $v$) and for every alternative $B \in {\cal{A}}$ that is Pareto optimal with respect to ${\bar{S}} = {\cal{N}} \setminus S$, there is an alternative $C \in {\cal{A}}$ (possibly $C=A$ or $C=B$) such that for every agent $i \in S$ it holds that $v_i(C) = v_i(A)$, and for every agent $j \in {\cal{N}} \setminus S$ it holds that $v_j(C) = v_j(B)$. We say that an instance is {\em decomposable} if it has a component that is nontrivial (the component is neither empty, nor the whole instance).
\end{definition}

It is implicit in the above definition that if a decomposable instance has more than one Pareto optimal alternative, then there are agents that are indifferent among some choices of alternatives.

Observe that if $S \subset {\cal{N}}$ is a component then so is ${\cal{N}} \setminus S$. Definition~\ref{def:decompose} implies that if each of the two components $S$ and ${\cal{N}} \setminus S$ selects a most preferred alternative on its own (such an alternative will be Pareto optimal with respect to the component), then there will be no conflicts between the two choices -- we will be able to select a single alternative that is just as good, from the point of view of every player in every component.

As an example to the decomposition concept introduced above, consider the \rentalprob\  example from Section~\ref{sec:background}, with valuation functions as in the table titled Example~1, and with $\delta < \frac{1}{4}$. In that example, there are two components, one containing  Students~1 and~2, and the other containing Student~3. Every alternative $A$ that is Pareto optimal for the first component assigns the first two rooms to the first two students,
%(in one of the two possible orders if $\delta < \frac{1}{4}$, and in reverse order of $\frac{1}{4} < \delta < \frac{1}{2}$),
and every alternative $B$ that is Pareto optimal for the second component assigns the third room to the third student. The two alternatives $A$ and $B$ can be replaced by one alternative $C$ (in fact, in this simple example it will hold that $C = A$ as there is only one room in the second component), and every agent values $C$ as being equally good as the alternative chosen by his own component.

% Long version (This was moved to appendix \ref{app:decomp})
\longver{
The above discussion is complemented by the following proposition.

\begin{proposition}
\label{lem:welfarecomponent}
Any alternative that maximizes welfare also maximizes welfare for each component separately.
\end{proposition}

\begin{proof}
Let $A$ be an alternative that maximizes welfare, namely, for which $\sum_{i\in {\cal{N}}} v_i(A)$ is largest possible. For a component $S$, let $B$ be an alternative that maximizes the welfare of $S$, namely, for which $\sum_{i\in S} v_i(B)$ is largest possible. We need to show that $\sum_{i\in S} v_i(A) = \sum_{i\in S} v_i(B)$.

Suppose for the sake of contradiction that $\sum_{i\in S} v_i(A) < \sum_{i\in S} v_i(B)$. Let $C$ be an alternative that maximizes the welfare of $\bar{S}$. Then necessarily  $\sum_{i\in \bar{S}} v_i(C) \ge \sum_{i\in \bar{S}} v_i(A)$. By the fact that $S$ is a component, there must be an alternative $D$ for which $\sum_{i\in S} v_i(D) = \sum_{i\in S} v_i(A)$ and $\sum_{i\in \bar{S}} v_i(D) = \sum_{i\in \bar{S}} v_i(C)$. It follows that $\sum_{i\in {\cal{N}}} v_i(D) > \sum_{i\in {\cal{N}}} v_i(A)$, contradicting the assumption that $A$ maximizes welfare.
\end{proof}
}

A solution involves two aspects: a choice of alternative, and transfers.
In Proposition~\ref{lem:welfarecomponent} (Appendix \ref{app:decomp}) we show  that any alternative that maximizes welfare also maximizes welfare for each component separately.
%Proposition~\ref{lem:welfarecomponent}
Thus the proposition shows that every welfare maximizing solution respects the component structure of the given instance, as far as the choice of alternative is considered. For a solution to qualify as ``decomposable", it makes sense to in addition require that there are no transfers between components. We refer to this forbidding of transfers between components as {\em weak decomposability}.

\begin{definition}[weak decomposability]
\label{def:weakdecompose}
Let $\cal{N}$ be the set of agents, let $\cal{A}$ be the set of alternatives, and let $v$ be the tuple of valuation functions of the agents. A solution $(A,p)$, composed of an alternative $A \in {\cal{A}}$ (in this definition we do not require $A$ to be a  welfare maximizing alternative, as decomposability is relevant also to mechanisms that do not maximize welfare) and a vector $p$ of transfers (summing up to~0), is {\em weakly decomposable} if for every component $S \subset {\cal{N}}$ it holds that $\sum_{i\in S} p_i = 0$. Namely, the net transfer into the component is~0 (consequently, the same holds for the net transfer out of the component).
\end{definition}

As a trivial example, every solution that involves no transfers is weakly decomposable.

We also introduce a notion of {\em strong decomposability} that postulates that utilities of individual agents within a component are not influenced by decisions in other components. Unlike the notion of weak decomposability which is the property of a single solution, the notion of strong decomposability is a property of a mechanism and not just of a single solution.
In the context of our work in which we assume ``full information upon request", a {\em mechanism} $M$ is a mapping from instances to solutions. The input to $M$ is an instance $I$ of arbitrary size, composed of a set $\cal{N}$ of agents, a set $\cal{A}$ of alternatives, and a tuple $v$ of valuation functions of the agents. The output $M(I)$ is the proposed solution for the instance $I$, where the solution is composed of a winning alternative (in general, it is not required to be an alternative that maximizes welfare) and a vector of transfers. A mechanism can be randomized, in which case, given an input instance, the mechanism generates a distribution over solutions, and the proposed solution is a random sample from this distribution.

\begin{definition}[strong decomposability]
\label{def:strongdecompose}
We say that a mechanism $M$ is {\em strongly decomposable} if for every decomposable instance $I$, the output of the mechanism is consistent with the outputs of the mechanism on each of the components separately, in the following sense. Let $\cal{N}$ be the set of agents in $I$, let $\cal{A}$ be the set of alternatives, and let $v$ be the tuple of valuation functions of the agents.  Given ${\cal{N}}$, $\cal{A}$ and $v$, let $S\subset {\cal{N}}$ be a component. Let $I_S$ be the instance that results from restricting the set of agents of $I$ to be just $S$ (without changing the set of alternatives and the valuation functions of the agents in $S$). Let $M(I)$ ($M(I_S)$, respectively) denote the outcome (chosen alternative and vector of transfers) when $M$ is applied to instance $I$ ($I_S$, respectively). Then for every agent $i \in S$, her utility in both cases is the same. Namely, $u_i(M(I)) = u_i(M(I_S))$. (For randomized mechanisms, equality needs to hold for the expected utility.)
\end{definition}

Thus, strongly decomposable mechanisms essentially decide on the solution in each component independently of other components.
% Short version:
Proposition \ref{prop:stong-decomp-implies-weak} in Appendix \ref{app:decomp} shows that
strong decomposability implies weak decomposability in the following sense: assume that $M$ is a mechanism that for every instance selects an alternative that maximizes welfare and a budget balanced vector of transfers. If $M$ is strongly decomposable, then for every decomposable instance the solution produced by $M$ is weakly decomposable.
\longver{
\begin{proposition}
Let $M$ be a mechanism that for every instance selects an alternative that maximizes welfare and a budget balanced vector of transfers. If $M$ is strongly decomposable, then for every decomposable instance the solution produced by $M$ is weakly decomposable.
\end{proposition}

\begin{proof}
Let $S$ be a component. Let $A$ be the alternative (maximizing welfare for $\cal{N}$) chosen by $M$ in instance $I$, and let $A_S$ be the alternative (maximizing welfare for $S$) chosen by $M$ in instance $I_S$. By Proposition~\ref{lem:welfarecomponent} we have that $\sum_{i\in S} v_i(A) = \sum_{i\in S} v_i(A_S)$. By strong decomposability of $M$ we have that $u_i(M(I)) = u_i(M(I_S))$ for every $i\in S$, and consequently $\sum_{i\in S} u_i(M(I)) = \sum_{i\in S}u_i(M(I_S))$. As a utility of an agent is the sum of value for the chosen alternative and the transfer, we have that the sum of transfers of the agents in $S$ must be the same in $I$ and in $I_S$. But in $I_S$ the sum of transfers is~0, because of budget balance. Hence in $I$ the net transfer into $S$ is~0 as well, as required by weak decomposability.
\end{proof}
}

Observe that strong decomposability does not require that the same alternative be chosen in $I$ and in $I_S$, but rather only that agents in $S$ receive the same utilities in both instances (and likewise, as $\bar{S}$ is also a component, that agents in $\bar{S}$ receive the same utilities in $I$ and in $I_{\bar{S}}$). In Section~\ref{sec:decomp} we shall revisit decomposability in the context of the room allocation problem, and there we shall additionally require (and achieve) that the choices made by $M$ in different components give one single alternative for all of $I$.

We now present a proposition that describes the component structure of an instance.

Given a set $U$, a collection $\cal{C}$ of subsets of $U$ is a (distributive) {\em lattice} if for every two sets $S,T \in {\cal{C}}$ it holds that $S\cap T \in {\cal{C}}$ and $S\cup T \in {\cal{C}}$. The minimal sets of a lattice (a set from the lattice is minimal if it is nonempty and does not contain any other nonempty set from the lattice) form a partition of $U$. The following proposition is proved in Appendix~\ref{app:decomp}.

\begin{proposition}
\label{lem:component}
Given a set $\cal{A}$ of alternatives, a set $\cal{N}$ of agents, and a tuple $v$ of valuation functions, the components of $\cal{N}$ form a lattice.
\end{proposition}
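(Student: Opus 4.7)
The plan is to show that the collection $\mathcal{L}$ of components of $\mathcal{N}$ (a) contains $\emptyset$ and $\mathcal{N}$, (b) is closed under complementation, and (c) is closed under intersection. Claims (a) and (b) are immediate: $\emptyset$ and $\mathcal{N}$ are components vacuously, and Definition~\ref{def:decompose} is symmetric under the swap $S \leftrightarrow \bar S$. Closure under union then follows from (b) and (c) via De~Morgan ($S \cup T = \overline{\bar S \cap \bar T}$), so it remains to prove (c).

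Before that, I would establish a reformulation lemma: when $S$ is a component, an alternative $D \in \mathcal{A}$ is Pareto optimal on $\mathcal{N}$ if and only if it is Pareto optimal on both $S$ and on $\bar S$. The ``if'' direction is immediate from the definition of Pareto optimality. For ``only if'', suppose $D$ is Pareto optimal on $\mathcal{N}$ but, say, not on $S$; then some $A^* \in \mathcal{A}$ that is Pareto optimal on $S$ with $v_i(A^*) \ge v_i(D)$ on $S$ and strict inequality somewhere in $S$ exists, and one can symmetrically obtain $B^* \in \mathcal{A}$ Pareto optimal on $\bar S$ with $v_i(B^*) \ge v_i(D)$ on $\bar S$; applying the component property of $S$ to $A^*$ and $B^*$ produces an alternative that strictly Pareto dominates $D$ on $\mathcal{N}$, a contradiction. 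I would also use a ``lift'' observation: if $A$ is Pareto optimal on $U \subseteq \mathcal{N}$, then any alternative that is Pareto maximal on $\mathcal{N}$ within the admissible set $\{E \in \mathcal{A} : v_i(E) = v_i(A) \text{ for all } i \in U\}$ is Pareto optimal on $\mathcal{N}$ and still agrees with $A$ on $U$; indeed, any alternative strictly dominating such a maximizer on $\mathcal{N}$ either contradicts maximality inside the admissible set, or (if outside) must strictly improve on some coordinate of $U$, contradicting the Pareto optimality of $A$ on $U$.

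For (c), fix components $S$ and $T$ and take $A$ Pareto optimal on $U := S \cap T$ and $B$ Pareto optimal on $\bar U := \bar S \cup \bar T$. Applying the lift observation once to $A$ with respect to $U$ and once to $B$ with respect to $\bar U$, I may replace each of $A$ and $B$ by an alternative that is Pareto optimal on $\mathcal{N}$ without changing the values on $U$ and $\bar U$ respectively. By the reformulation lemma applied separately to $S$ and to $T$, both (the new) $A$ and $B$ are Pareto optimal on each of $S, \bar S, T, \bar T$. I then apply the component property of $T$ to $A$ and $B$ to obtain $C_T \in \mathcal{A}$ with $v_i(C_T) = v_i(A)$ for $i \in T$ and $v_i(C_T) = v_i(B)$ for $i \in \bar T$; the reformulation lemma (``if'' direction) gives that $C_T$ is Pareto optimal on $\mathcal{N}$, hence on $S$. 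Finally I apply the component property of $S$ to $C_T$ and $B$ to obtain $C \in \mathcal{A}$ with $v_i(C) = v_i(C_T)$ for $i \in S$ and $v_i(C) = v_i(B)$ for $i \in \bar S$. A direct verification on the partition $\mathcal{N} = U \sqcup (S \setminus T) \sqcup (T \setminus S) \sqcup (\bar S \cap \bar T)$ then gives $v_i(C) = v_i(A)$ for $i \in U$ and $v_i(C) = v_i(B)$ for $i \in \bar U$, proving $S \cap T$ is a component.

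The main obstacle is that Pareto optimality with respect to a set is neither monotone nor anti-monotone under set inclusion, so moving between local optimality (on $U$ or $\bar U$) and the global optimality needed to invoke the component properties of $S$ and $T$ must be done carefully. This is exactly what the reformulation lemma and the lift observation are designed for: they exploit the component structure of $S$ and $T$ to bootstrap Pareto optimality from one reference set to another, after which the construction of $C$ is a transparent chaining of two splices, using that the splice produced by one component is itself Pareto optimal on $\mathcal{N}$ and therefore Pareto optimal on the reference set needed for the next splice.
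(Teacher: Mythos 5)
Your proof is correct and follows essentially the same strategy as the paper's: reduce to closure under intersection via complements, then construct the required alternative for $S\cap T$ by lifting $A$ and $B$ to Pareto-optimal alternatives and chaining applications of the component properties of $T$ and $S$. The main difference is one of packaging: the paper lifts locally (to alternatives Pareto optimal for $S$, $\bar S$, $T$, $\bar T$ that merely \emph{dominate} the relevant restrictions), uses three splices, and must argue at the end that the resulting dominating value vector actually equals $(a_1,b_2,b_3,b_4)$, whereas your reformulation lemma (for a component $S$, Pareto optimality on $\mathcal{N}$ is equivalent to Pareto optimality on both $S$ and $\bar S$) lets you lift once to global Pareto optima that preserve the values on $S\cap T$ and on its complement exactly, so that two splices yield the target equalities directly.
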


It follows from Proposition~\ref{lem:component} that the minimal components form a partition of ${\cal{N}}$. (It could be that only ${\cal{N}}$ itself is a nonempty component.)  It can be shown by induction that that if each of the minimal components $S$ selects a most preferred alternative on its own, then there will be no conflicts between the choices -- we will be able to select a single alternative that is just as good, from the point of view of every player in every component.

{\bf The anticore and decomposition:}
A major benefit of the anticore is that it ensures decomposability properties. We remark that even though our notion of the anticore is the same as that of~\cite{Moulin92}, the notion of decomposability was not defined in that or other previous work, and hence the connection between anticore and decomposability is a new contribution of the current paper.  For weak decomposability (Definition~\ref{def:weakdecompose}) we have:

\begin{proposition}
\label{pro:upperdecomposes}
Every solution in the anticore is weakly decomposable.
\end{proposition}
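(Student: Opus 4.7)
The plan is to combine the anticore inequality applied to $S$ with the one applied to its complement $\bar{S}$, and then use budget balance together with the fact (Proposition~\ref{lem:welfarecomponent}) that a welfare-maximizing alternative also maximizes welfare on every component. Let $(A^*, p)$ be a solution in the anticore, and let $S$ be a component with complement $\bar{S} = \mathcal{N}\setminus S$.

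First I would write down the two anticore constraints
\[
\sum_{i\in S} v_i(A^*) + \sum_{i\in S} p_i \;\le\; W_{\max}(S), \qquad \sum_{i\in \bar{S}} v_i(A^*) + \sum_{i\in \bar{S}} p_i \;\le\; W_{\max}(\bar{S}).
\]
Adding them and using budget balance $\sum_{i\in\mathcal{N}} p_i = 0$, the transfers cancel and we obtain $\sum_{i\in\mathcal{N}} v_i(A^*) \le W_{\max}(S) + W_{\max}(\bar{S})$. Since $A^*$ maximizes welfare, the left-hand side equals $W_{\max}(\mathcal{N})$, so $W_{\max}(\mathcal{N}) \le W_{\max}(S) + W_{\max}(\bar{S})$. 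The reverse inequality is immediate by considering any alternative and summing, so in fact $W_{\max}(\mathcal{N}) = W_{\max}(S) + W_{\max}(\bar{S})$.

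Next, the key step is to invoke Proposition~\ref{lem:welfarecomponent}, which guarantees that the welfare-maximizing $A^*$ simultaneously achieves $\sum_{i\in S} v_i(A^*) = W_{\max}(S)$ and $\sum_{i\in \bar{S}} v_i(A^*) = W_{\max}(\bar{S})$. Substituting this identity for $S$ into its anticore constraint collapses it to $\sum_{i\in S} p_i \le 0$, and the same substitution for $\bar{S}$ gives $\sum_{i\in \bar{S}} p_i \le 0$. Since these two sums add to zero by budget balance, neither can be strictly negative, so both vanish; in particular $\sum_{i\in S} p_i = 0$, which is exactly weak decomposability.

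I do not foresee a real obstacle: the only nonobvious ingredient is the welfare-additivity across a component decomposition, which is precisely Proposition~\ref{lem:welfarecomponent}. The rest is a two-line squeeze using the anticore bounds on $S$ and $\bar{S}$ together with budget balance. The argument also makes clear why the anticore (and not merely individual rationality) is the right notion to impose: it is the bound on every coalition's total utility that forces no net flow of money across the boundary of a component.
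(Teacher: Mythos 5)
Your argument is correct and is essentially the paper's own proof: invoke Proposition~\ref{lem:welfarecomponent} so that the anticore constraints on $S$ and $\bar{S}$ collapse to $\sum_{i\in S}p_i\le 0$ and $\sum_{i\in\bar{S}}p_i\le 0$, then use budget balance to force both to zero. The middle paragraph (deriving $W_{\max}(\mathcal{N})=W_{\max}(S)+W_{\max}(\bar{S})$) is not needed for the conclusion, and its claim that the ``reverse inequality'' follows by summing over an arbitrary alternative actually only re-derives subadditivity — the other direction requires the component structure — but since nothing downstream depends on it, the proof stands.
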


\begin{proof}
Let $\cal{N}$ be the set of agents, let $\cal{A}$ be the set of alternatives, let $v$ be the tuple of valuation functions of the agents, and let $S \subset {\cal{N}}$ be a component. Consider an arbitrary solution $(A^*,p)$ in the anticore, composed of a welfare maximizing alternative $A^* \in {\cal{A}}$ and a vector $p$ of transfers. By Proposition~\ref{lem:welfarecomponent}, $A^*$ also maximizes the welfare of each of the components $S$ and $\bar{S}$ separately. By the anticore constraints, the net transfer into $S$ is at most~0, and so is the net transfer into $\bar{S}$. Consequently, the net transfer into $S$ is exactly~0. Hence the solution is weakly decomposable.
\end{proof}

It is premature at this stage to address strong decomposability (Definition~\ref{def:strongdecompose}). This will be done later, in Proposition~\ref{pro:strongdecompose}.

\section{Selection from within the welfare-sharing core}\label{sec:select}

The set of constraints corresponding to domination over a disagreement point are meant to achieve the property of having each agent (weakly) prefer (in terms of utility) every solution in the WS-core over the disagreement point. Our guideline for selecting a unique solution from within the WS-core (when  the WS-core is nonempty) is that we wish this property to hold not only in a qualitative manner, but also in a quantitative manner, to the largest extent possible. Ideally, we would like it to be that for every agent, switching to our mechanism offers a worthwhile increase in utility compared to the disagreement point. This calls for an {\em egalitarian} distribution of the welfare gain among all agents, where the welfare gain is the difference in welfare  between the maximum welfare alternative ($W_{max}({\cal{N}})$) and the expected welfare generated by the disagreement point ($W_{\pi}({\cal{N}})$). However, equal sharing of the welfare gain might not be in the WS-core, because it might violate the constraints of the anticore. Hence we aim to equalize the shares of the gain as much as possible, subject to satisfying the anticore constraints.

\subsection{Selection concepts}

Before proceeding, let us establish some conventions and notation.
We assume for convenience that the valuation function of each agent is such that at the disagreement point her expected value is~0. This can be enforced by applying an additive shift of $u_{\pi}(i)$ to each valuation function $v_i$. Given a solution $(A^*,p)$, we let $u_i$ denote the utility $u_i(A^*,p) = v_i(A^*) + p_i$ derived by agent $i$ from the solution (where the valuation function $v_i$ is such that the expected value offered by the disagreement point is~0). We shall sometimes refer to the vector $u = (u_1, \ldots, u_n)$ (rather than to $(A^*, p)$) as our solution, as this vector summarizes what the agents care about in a solution. An egalitarian solution will give every agent utility $u_i = \frac{W_{max}({\cal{N}})}{n}$, but might not be in the WS-core.  We present several different approaches for how to relax the egalitarian requirement so as to select a solution within the WS-core (when it is nonempty).

\begin{itemize}

\item The {\em min-square} solution. Here we seek the unique solution within the WS-core minimizing $\sum_{i \in {\cal{N}}} (u_i)^2$. This solution minimizes the variance in the distribution of the welfare, subject to being in the WS-core.

\item The {\em lexicographically-maximal} (\lexmaxws) solution. Given a vector $x \in R^n$, let $\hat{x}$ be the same vector with coordinates rearranged such that in the new order $\hat{x}_1 \le \hat{x}_2 \le \ldots \le \hat{x}_n$. For two vectors $x \in R^n$ and $y \in R^n$ of equal sum of their entries, $x \ge_{Lex} y$ denotes that for the rearranged vectors $\hat{x}$ and $\hat{y}$ and for some $1 \le k < n$ it holds that $\hat{x}_k > \hat{y}_k$, with $\hat{x}_i = \hat{y}_i$  for every $1 \le i < k$. A solution $u$ in the WS-core is {\em lexicographically maximal} if $u \ge_{Lex} u'$ for every other solution $u'$ in the WS-core.
%It can be shown that when the WS-core is not empty, the lexicographically maximal solution is unique (in terms of the utility that each agent gets).

\item A {\em Lorenz-maximal} solution. Given a vector $x \in R^n$, let $\hat{x}$ be the same vector with coordinates rearranged such that in the new order $\hat{x}_1 \le \hat{x}_2 \le \ldots \le \hat{x}_n$. For two vectors $x \in R^n$ and $y \in R^n$ of equal sum, we say that $x$ {\em Lorenz dominates} $y$ (denoted by $x \ge_{Lor} y$) if for the rearranged vectors $\hat{x}$ and $\hat{y}$ it holds that $\sum_{i=1}^k \hat{x}_i \ge \sum_{i=1}^k \hat{y}_i$, for every $1 \le k \le n$. A Lorenz maximal solution is a solution in the WS-core that Lorenz-dominates every other solution in the WS-core. By definition, it also minimizes the so called {\em Gini index} of inequality~\cite{Gini}. % [https://en.wikipedia.org/wiki/Gini_coefficient]

\end{itemize}

The next proposition presents some properties of the WS-core, for the proof see Appendix \ref{app:WS-core}.
\begin{proposition}
\label{pro:unique}
When the WS-core is nonempty:
\begin{enumerate}
\item
The min-square solution exists and is unique (in terms of the utility that it offers each agent).
\item The lexicographically-maximal solution exists and is unique.
\item The min-square solution and the lexicographically-maximal solution need not coincide.
    \item A Lorenz dominating solution need not exist.
     \item If a Lorenz dominating solution exists, it is unique, and moreover, it coincides both with the lexicographically-maximal solution and with the min-square solution.
         \end{enumerate}
\end{proposition}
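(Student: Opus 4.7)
My plan is to handle (1), (2), and (5) via standard convex-optimization and majorization arguments, and to establish (3) and (4) by an explicit small counterexample.

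For (1), the WS-core is nonempty by hypothesis, bounded since it lies in the simplex $\{u \ge 0 : \sum_i u_i = W_{max}(\mathcal{N})\}$, and closed as an intersection of finitely many closed half-spaces; hence it is a nonempty compact convex polytope. The strictly convex, continuous function $u \mapsto \sum_i u_i^2$ therefore attains a unique minimum on it. For (2), existence of a lex-max element is obtained by iteration: first maximize $\hat{u}_1 = \min_i u_i$ over the compact WS-core to obtain a value $m_1$; then maximize the next order statistic over the closed set of maximizers; and so on. Each stage attains its optimum by compactness, and after $n$ stages every remaining element shares the same sorted vector and is therefore lex-max. For uniqueness, suppose $u \ne u'$ are both lex-max, so $\hat{u} = \hat{u}'$. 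Let $m_1 < m_2 < \cdots$ be the distinct values in $\hat{u}$, with multiplicities $k_1, k_2, \ldots$, and let $S_l, S'_l \subseteq \mathcal{N}$ be the indices where $u, u'$ take value $m_l$. Let $l^*$ be the smallest $l$ with $S_l \ne S'_l$. In the midpoint $u'' = (u+u')/2$, which lies in the WS-core by convexity, the coordinates below $m_{l^*}$ coincide with those of $u$ (because $S_l = S'_l$ for $l < l^*$), the coordinates equal to $m_{l^*}$ are precisely those in $S_{l^*} \cap S'_{l^*}$ (strictly fewer than $k_{l^*}$), and every other coordinate strictly exceeds $m_{l^*}$. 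Hence $\hat{u}'' >_{Lex} \hat{u}$, contradicting the lex-maximality of $u$.

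For (5), Lorenz-dominance compares partial sums of the sorted vectors and so directly implies lex-dominance (if $\hat{u}_1 > \hat{u}'_1$ we are done; if $\hat{u}_1 = \hat{u}'_1$, subtract and compare the next partial sums to get $\hat{u}_2 \ge \hat{u}'_2$, and so on). Hence a Lorenz-dominating element of the WS-core is lex-max and, by (2), is the unique lex-max point. Schur-convexity of $u \mapsto \sum_i u_i^2$ on the slice $\{\sum_i u_i = W_{max}(\mathcal{N})\}$ gives $\sum_i u_i^2 \le \sum_i (u'_i)^2$ whenever $u$ Lorenz-dominates $u'$, so a Lorenz-dominating element also minimizes $\sum_i u_i^2$ and, by (1), is the unique min-square point. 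Uniqueness of the Lorenz-dominating element itself follows from the same averaging trick: if $u, u'$ are both Lorenz-dominating, each Lorenz-dominates the other, forcing $\hat{u} = \hat{u}'$; if $u \ne u'$, the Hardy-Littlewood-Polya inequality applied to the sum of any $k$ smallest coordinates of $(u+u')/2$ shows that the midpoint strictly Lorenz-dominates both, a contradiction.

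For (3) and (4) we construct an explicit small WS-core (for a suitable, possibly non-submodular set function $W_{max}$ whose WS-core is guaranteed nonempty by Theorem~\ref{thm:BS}) that contains two elements whose sorted vectors are Lorenz-incomparable and in which no element Lorenz-dominates all others; this establishes (4). On the same instance, the min-square and lex-max points (unique by (1) and (2)) are computed directly and seen to differ, establishing (3). The main technical obstacle is to design $W_{max}$ so that (a) the WS-core is nonempty and (b) the induced polytope is rich enough to admit Lorenz-incomparable extreme points with no Lorenz top; a careful four-agent non-submodular construction suffices.
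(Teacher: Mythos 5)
Your treatment of parts (1), (2) and (5) is correct and follows essentially the same route as the paper: compactness plus strict convexity for the min-square solution, the midpoint/averaging trick for uniqueness of the lex-max solution (your bookkeeping with the index sets $S_l, S'_l$ is in fact more careful than the paper's ``it is not difficult to see''), and for (5) the observations that Lorenz domination implies lex-domination and that Lorenz domination decreases $\sum_i u_i^2$ (the paper phrases the latter as shifting utility from higher to lower coordinates; your appeal to Schur-convexity is equivalent).

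The genuine gap is in parts (3) and (4): these are claims that a counterexample \emph{exists}, and you have only described the specifications such an example should meet (``a careful four-agent non-submodular construction suffices'') without producing it. That construction is the entire content of those two parts, and it is not automatic -- the instance must be a legitimate instance of the model (agents, alternatives, valuation functions, and a disagreement distribution), its WS-core must be verified nonempty, and the min-square and lex-max points must be computed and shown to differ. Note also that your plan to invoke Theorem~\ref{thm:BS} for nonemptiness is shaky, since the example is forced to have non-submodular $W_{max}$ (by Corollary~\ref{cor:submodular} any submodular instance has a Lorenz-dominating solution), so you would need the monotonicity condition of that theorem or, more simply, a directly exhibited point of the WS-core. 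The paper gives a concrete six-agent, four-alternative instance (Example~2, with the uniform distribution over the four alternatives as disagreement point) in which the lex-max solution is the no-transfer solution under Alternative~A with sum of squares $22$, while Alternative~B with no transfers gives sum of squares $20$, so the two selections differ. Finally, for (4) you do not need to verify directly that ``no element Lorenz-dominates all others'': once (3) and (5) are in place, (4) follows immediately, since a Lorenz-dominating solution would have to coincide with both the lex-max and the min-square solutions, which differ on the exhibited instance. Supplying an explicit instance and this one-line deduction would close the proof.
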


Thus, the min-square and the lexicographically-maximal solutions exist whenever the WS-core is nonempty, whereas a Lorenz dominating solution need not exist.
Out of the two solutions that do exist, we suggest picking the {\em lexicographically-maximal-Welfare-Sharing} solution, which we denote by \lexmaxws.  (This choice is not of major significance to our work. Proposition~\ref{pro:water} (with a different algorithm) and Proposition~\ref{pro:strongdecompose} also hold with respect to the min-square solution, and Corollary~\ref{cor:submodular} shows that the two solutions coincide in many cases of interest.)

\subsection{The water filling algorithm}

When $W_{max}$ is submodular, the utilities in the \lexmax\ solution can be computed using an algorithm that we refer to as {\em water filling} (this is a generic name, used also elsewhere, for algorithms that increment variables at a uniform rate, subject to constraints). It proceeds in iterations. Initially (at iteration~0), all agents are {\em free} and every agent $i$ starts with her disagreement utility $u_{\pi}(i)$. If any of the constraints of the anticore are tight (satisfied with equality) by this initial solution, then the set $S_1$ of agents involved in the tight constraints become {\em locked}. Thereafter, in every iteration $j \ge 1$ we do the following. If there are no free agents, the algorithm ends and outputs the utilities of the agents.  If there are free agents, then the utility of every free agent is incremented by the same value $x_j$, where $x_j > 0$ is the smallest value that leads to some new anticore constraint becoming tight (equivalently, $x_j$ is the largest increase that does not violate any of the anticore constraints). At this point, the set $S_j$ of agents involved in a newly tight constraint become locked (some of these agents may have been locked already earlier), and iteration $j$ ends.

\begin{proposition}
\label{pro:water}
When $W_{max}$ is submodular, the water filling algorithm computes the \lexmax\ solution.
\end{proposition}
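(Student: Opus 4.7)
My plan is to (1) verify the water filling algorithm is well-defined with output $u^*$ lying in the WS-core, and (2) show by induction on the number of iterations that $u^*$ lex-dominates every other vector in the WS-core. Both parts rest on two standard consequences of submodularity of $W_{max}$: at any feasible utility vector $u$, the anticore-tight sets $\{S : u(S) = W_{max}(S)\}$ are closed under union and intersection (the ``tight set lattice''); and the contraction $W'_{max}(S) := W_{max}(S \cup L) - W_{max}(L)$ of a submodular function by a fixed tight set $L$ is again submodular on $\mathcal{N} \setminus L$.

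For (1), let $L_j$ denote the set of all locked agents after iteration $j$. Since an agent is locked iff she belongs to some currently tight anticore constraint, the lattice property yields that the union $L_j$ is itself tight, i.e., $u^*(L_j) = W_{max}(L_j)$ at every stage. As long as free agents remain, the algorithm picks $x_j > 0$, and at least one new agent is locked in iteration $j$, so the procedure halts after at most $n$ iterations with $L_q = \mathcal{N}$. Tightness of $L_q$ then delivers $u^*(\mathcal{N}) = W_{max}(\mathcal{N})$, which is the budget-balance/welfare-maximization condition, while the domination constraints $u^*_i \ge 0$ hold automatically since the algorithm only increases the (shifted) baseline of $0$. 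Combined with the anticore constraints preserved by construction, this places $u^*$ in the WS-core.

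For (2), let $u$ be any vector in the WS-core and consider the first block $T_1$ of agents locked, with $u^*_i = t_1$ for every $i \in T_1$. Tightness gives $W_{max}(T_1) = |T_1|\, t_1$, so the anticore bound on $u$ yields $u(T_1) \le |T_1|\, t_1$. If some $i \in T_1$ has $u_i < t_1$, then the smallest entry of sorted $u$ is strictly below the smallest entry $t_1$ of sorted $u^*$, hence $u^* >_{Lex} u$. Otherwise $u_i \ge t_1$ throughout $T_1$, which combined with $\sum_{i \in T_1} u_i \le |T_1|\, t_1$ forces $u_i = t_1$ for every $i \in T_1$. I then pass to the contracted instance on $\mathcal{N} \setminus T_1$ with $W'_{max}$ as above and shifted vectors $u' := u - t_1 \mathbf{1}$ and $(u^*)' := u^* - t_1 \mathbf{1}$ restricted to $\mathcal{N} \setminus T_1$. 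Both vectors lie in the WS-core of the reduced instance (the bound $u'_i \ge 0$ uses the $u_i \ge t_1$ just established), $W'_{max}$ is submodular by the contraction property, and running the water filling algorithm on the reduced instance reproduces $(u^*)'$ exactly. The inductive hypothesis then gives $(u^*)' \ge_{Lex} u'$; combined with equality on $T_1$, this yields $u^* \ge_{Lex} u$.

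The main obstacle I expect is rigorously establishing the two submodular facts in our setting and making precise the ``reduced WS-core'' to which induction is applied. The lattice property follows from the chain $W_{max}(S \cup T) + W_{max}(S \cap T) \le W_{max}(S) + W_{max}(T) = u(S) + u(T) = u(S \cup T) + u(S \cap T) \le W_{max}(S \cup T) + W_{max}(S \cap T)$, forcing equality throughout; the contraction property follows by direct computation. A minor point is ensuring the domination baseline of the reduced instance is correctly $0$ after the shift by $t_1$ — this is exactly what the inductive case assumes — and that the iteration structure of the water filling algorithm restricted to $\mathcal{N}\setminus T_1$ genuinely coincides with running the algorithm afresh on the contracted instance, which follows because tightness of $T_1$ ensures no anticore constraint crossing $T_1$ and $\mathcal{N}\setminus T_1$ is active earlier than it would be after contraction.
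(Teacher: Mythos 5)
Your proof is correct in its essentials and is in fact more complete than the paper's own argument. For the efficiency claim ($u^*(\mathcal{N}) = W_{max}(\mathcal{N})$), both arguments rest on the same fact --- submodularity makes the tight anticore sets closed under union (Lemma~\ref{lem:lattice}) --- but you apply it directly (every locked agent lies in a tight set, so the union of all tight sets at termination is $\mathcal{N}$ and is itself tight), whereas the paper runs a proof by contradiction through an auxiliary function $W'_{max}$ with $W'_{max}(\mathcal{N})$ lowered to $u(\mathcal{N})$; your version is cleaner. The bigger difference is that the paper simply asserts ``by design'' that the algorithm outputs a lexicographically maximal point of the WS-core, while you actually prove it, by induction on the locked blocks using the contraction $W_{max}(\cdot \cup T_1) - W_{max}(T_1)$; this is close in spirit to the paper's separate proof of Lorenz domination (Theorem~\ref{thm:DR89}), which also compares an arbitrary core point to $u^*$ block by block. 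The step you flag but do not carry out --- that continuing the water filling after $T_1$ is locked coincides with running it afresh on the contracted instance --- does need the uncrossing inequality $W_{max}(S_2 \cup S_1) - u^*(S_1) \ge W_{max}(S_2 \cup T_1) - W_{max}(T_1)$ for $S_1 \subseteq T_1$ and $S_2 \subseteq \mathcal{N}\setminus T_1$, which follows from submodularity together with $u^*(S_1) \le W_{max}(S_1)$ and the tightness of $T_1$; it is worth writing out.

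There is one small slip in the lex-maximality induction. Your case split is on whether some $i \in T_1$ has $u_i < t_1$; in the complementary case you conclude $u_i = t_1$ on $T_1$ and pass to $u' = u - t_1\mathbf{1}$ restricted to $\mathcal{N}\setminus T_1$, asserting $u'_j \ge 0$ ``by the $u_i \ge t_1$ just established.'' But that bound was established only on $T_1$; an agent $j \notin T_1$ could a priori have $0 \le u_j < t_1$, in which case $u'$ violates the domination constraint of the reduced instance and the inductive hypothesis does not apply to it. The fix is immediate: split instead on whether $\min_{i\in\mathcal{N}} u_i < t_1$. If so, the smallest entry of $u$ is strictly below $t_1$, which is the smallest entry of $u^*$, so $u^* >_{Lex} u$ directly. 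If not, then $u_i \ge t_1$ for all agents, the anticore bound on $T_1$ forces equality there, the shifted restriction $u'$ does lie in the reduced WS-core, and since the sorted versions of both full vectors begin with $|T_1|$ copies of $t_1$, the lex comparison of the tails lifts to the full vectors. With that repair the argument goes through.
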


\begin{proof}
By design, the water filling algorithm satisfies all domination constraints and all constraints of the anticore. Likewise, it produces a lexicographically maximal solution subject to these constraints. It remains to show that by the end of the algorithm, the anticore constraint $\sum_{i\in {\cal{N}}} u(i) \le W_{max}({\cal{N}})$ is tight (meaning that all welfare has been shared by the agents), where here $u(i)=\sum_{t=1}^{j} x_j$ is the utility of agent $i$ that was locked right after iteration $j$ %at the end of the algorithm
(and $u(S)$ will serve as shorthand notation for $\sum_{i\in S} u_i$).

When the water filling algorithm ends, every agent is involved in a tight anticore constraint. Suppose for the sake of contradiction that the constraint $u({\cal{N}}) \le W_{max}({\cal{N}})$ is not tight. Let $S_j$ (for $j = 1, 2, \ldots$) be the sets of agents whose constraints became tight in iteration $j$ of
the execution of the water filling algorithm.
Consider a set function $W'_{max}$ that is identical to $W_{max}$, except that $W'_{max}({\cal{N}}) = u({\cal{N}}) < W_{max}({\cal{N}})$. Then $W'_{max}$ is submodular, all the sets $S_j$ referred to above are tight with respect to it, and so is $\cal{N}$. By Lemma~\ref{lem:lattice}, all unions of these sets $S_j$ are also tight with respect to $W'_{max}$. By repeatedly taking unions, we arrive at two sets $S$ and $T$ such that $S \cup T = {\cal{N}}$, both $u(S) = W'_{max}(S) = W_{max}(S)$ and $u(T) = W'_{max}(T) = W_{max}(T)$ hold, and $u(S \cap T) = W'_{max}(S\cap T) < W_{max}(S\cap T)$. By linearity, $u({\cal{N}}) = u(S) + u(T) - u(S \cap T) \ge W_{max}(S) + W_{max}(T) - W_{max}(S\cap T)$. By submodularity of $W_{max}$ we have that $W_{max}({\cal{N}}) \le W_{max}(S) + W_{max}(T) - W_{max}(S\cap T)$. These last two inequalities contradict our assumption that $u({\cal{N}}) < W_{max}({\cal{N}})$.
\end{proof}

%For the proof of the proposition see Appendix \ref{app:water}. We also provide there an example in which $W_{max}$ is not submodular and the water filling algorithm does not find the \lexmax\ solution, even though the WS-core is nonempty.

The water filling algorithm has at most $n$ iterations, because in every iteration at least one more agent becomes locked. For an agent $i$ that becomes locked at iteration $k$, her final utility is $u_{\pi}(i) + \sum_{j=1}^k x_j$. The total running time of the algorithm depends on the complexity of computing the disagreement utilities, and the computational cost of identifying a violated constraint in the anticore. We show (see Theorem~\ref{thm:complexity} for the exact statement) that when the disagreement utilities are given and $W_{max}$ is submodular, the \lexmax\ solution can be computed in time polynomial in $n$, but without the submodularity requirement, computing the \lexmax\ solution is NP-hard.

When $W_{max}$ is not submodular, the water filling algorithm might fail to find the \lexmax\ solution, even when the WS-core is nonempty. Consider an instance with three agents and three alternatives, with the following valuation functions.

\begin{tabular}{|c|c|c|c|}
  \hline
  % after \\: \hline or \cline{col1-col2} \cline{col3-col4} ...
   & {\bf Alternative 1} & {\bf Alternative 2} & {\bf Alternative 3} \\
\hline
  {\bf Agent 1} & 0 & -2 & 2 \\
  {\bf Agent 2} & 0 & 2 & -1 \\
  {\bf Agent 3} & 0 & 2 & -1 \\
  \hline
\end{tabular}

When Alternative~1 serves as the disagreement alternative, the WS-core is nonempty and contains a unique solution, given by choosing Alternative~2 (that has total welfare of~2), and transfers such that the utility of Agent~1 is~0 and of the other two agents is~1. Being unique, this is also the \lexmax\ solution.  However, the water filling algorithm will start by giving evert agent a utility of $\frac{1}{2}$. At this point every agent is part of a tight anticore constraint), and the algorithm ends before distributing all the welfare to the agents.

%We remark that the proof of Theorem~\ref{thm:DR89} contains a different iterative algorithm that also computes the \lexmax\ utilities.  The number of anticore constraints that need to be checked decreases there in every iteration, and hence that algorithm is more efficient than the one presented above. However, it applies only in the special case that $W_{max}$ is submodular (this is what causes the decrease in number of constraints), whereas the above algorithm template applies for every $W_{max}$.

\subsection{Decomposability of \lexmaxws}

The \lexmaxws\ solution lies in the anticore, and hence by Proposition~\ref{pro:upperdecomposes} it is weakly decomposable. We now consider strong decomposability (see Definition~~\ref{def:strongdecompose}). This property involves comparing the solutions generated for different instances. As the \lexmaxws\ solution for an instance depends on the disagreement point for the instance, we need to also relate between the disagreement points of different instances. For this purpose, we assume that there is a mechanism that given an instance outputs the disagreement point for that instance. For example, RP served as such a mechanism in Section~\ref{sec:background}.

\begin{proposition}
\label{pro:strongdecompose}
When $W_{max}$ is submodular, every mechanism $M$ that satisfies both following properties is strongly decomposable.
\begin{enumerate}
\item For every instance $M$ selects the respective \lexmaxws\ solution.
\item The disagreement utilities (that define the domination constraints for the WS-core) are the output of a disagreement mechanism that is strongly decomposable.
\end{enumerate}
\end{proposition}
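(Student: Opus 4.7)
The plan is to show that the \lexmaxws\ utility of each agent $i\in S$ is identical in $I$ and in $I_S$, relying on the water-filling characterization of Proposition~\ref{pro:water}. The argument rests on two ingredients: a separability property of $W_{max}$ across components, and the observation that water filling decouples along this separation.

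First (and this is where I expect the main obstacle), I would establish the separability identity $W_{max}(T_S\cup T_{\bar S}) = W_{max}(T_S) + W_{max}(T_{\bar S})$ for every $T_S\subseteq S$ and $T_{\bar S}\subseteq \bar S := {\cal N}\setminus S$. The direction $\le$ is immediate since $\sum_{i\in T_S\cup T_{\bar S}}v_i(A) = \sum_{i\in T_S}v_i(A)+\sum_{i\in T_{\bar S}}v_i(A)$ for any single $A$. For $\ge$, I would pick maximizers $A_1$ for $\sum_{i\in T_S}v_i$ and $A_2$ for $\sum_{i\in T_{\bar S}}v_i$, and then iteratively replace each by a Pareto-dominator over $S$ (respectively $\bar S$); any such replacement weakly improves the sum on $T_S\subseteq S$ (respectively $T_{\bar S}\subseteq \bar S$) and therefore still attains the maximum, and finiteness of $\cal A$ forces termination at alternatives $A_1,A_2$ that are Pareto optimal w.r.t.\ $S$ and $\bar S$ respectively. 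Definition~\ref{def:decompose} then yields a single alternative $C$ coinciding with $A_1$ on $S$ and with $A_2$ on $\bar S$, so $\sum_{i\in T_S\cup T_{\bar S}}v_i(C)=W_{max}(T_S)+W_{max}(T_{\bar S})$.

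Separability has two consequences. It shows that every anticore constraint indexed by a set $T$ that meets both $S$ and $\bar S$ is the sum of its two one-sided restrictions to $T\cap S$ and $T\cap \bar S$, hence it is redundant and can become tight only simultaneously with both of those restrictions. Combined with the hypothesis that the disagreement mechanism is strongly decomposable---so $u_\pi(i)$ takes the same value in $I$ and in $I_S$ for every $i\in S$---this means that the entire state of the agents in $S$ during a water-filling execution on $I$ (their starting utilities, the family of within-$S$ constraints that can lock them, and the uniform rate at which free utilities grow) matches their state during a water-filling execution on $I_S$. In particular, a locking event on an agent in $S$ in the run on $I$ is triggered by the same within-$S$ constraint, at the same cumulative increment from the disagreement point, as in the run on $I_S$.

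Consequently, the final utility of every $i\in S$ produced by water filling on $I$ equals her final utility produced by water filling on $I_S$. By Proposition~\ref{pro:water} these are exactly the \lexmaxws\ utilities, so $u_i(M(I))=u_i(M(I_S))$ for every $i\in S$; applying the same argument to $\bar S$ (also a component, by the remark following Definition~\ref{def:decompose}) completes the proof of strong decomposability.
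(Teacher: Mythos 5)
Your proof is correct and follows essentially the same route as the paper: the paper's (one-line) proof reduces the claim to the fact that anticore constraints for sets crossing components are implied by the within-component ones (its Proposition~\ref{pro:upperconstraintsdecompose}), so that water filling on $I$ decouples into independent runs on $S$ and $\bar S$ starting from matching disagreement utilities. Your separability identity $W_{max}(T)=W_{max}(T\cap S)+W_{max}(T\cap\bar S)$ is just a sharper, additive form of that same lemma, and the rest of your argument spells out the decoupling that the paper leaves implicit.
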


\begin{proof}
The outcome of the water filling algorithm on the whole instance is identical to the concatenation of its outcomes on each component separately, because of Proposition~\ref{pro:upperconstraintsdecompose}.
\end{proof}

%Given Proposition \ref{pro:unique}, to show that all three solution concepts coincide  in that case that $W_{max}$ is submodular, it suffices to  show that a Lorenz dominating solution exists, and this indeed is what we prove in

\subsection{A Lorenz dominating solution}

We next show that in the important case that $W_{max}$ is submodular (e.g., in the \rentalprob), a Lorenz dominating solution necessarily exists.
%all three solution concepts are the same, all equal to our suggested solution \lexmaxws. See Theorem~\ref{thm:DR89} (in combination with item~5 of Proposition \ref{pro:unique}).
Theorem~\ref{thm:DR89}  below
is an adaptation of a theorem of Dutta and Ray~\cite{DR89}, which considers Lorenz minimal solutions and supermodular characteristic functions (we consider Lorenz maximal solutions and submodular characteristic functions). We provide a detailed proof of Theorem~\ref{thm:DR89} rather than attempt to use the results of~\cite{DR89} as a blackbox, because in our setting we need to ensure that the solution dominates a given disagreement point,  and this issue does not seem to have an analog in the setting of~\cite{DR89}.
%We remark that there are issues that are considered in the work of~\cite{DR89}, specifically -- the {\em Lorenz core}, that we need not concern ourselves with in our proof.
The proof of the following theorem appears in Appendix \ref{app:proof-thm-dr89}.

\begin{theorem}
\label{thm:DR89}
If $W_{max}$ is submodular, then the \lexmaxws\ solution (which is in the WS-core) % WS-core has a solution that
Lorenz-dominates all other  solutions in the WS-core.
\end{theorem}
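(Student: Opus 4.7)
The plan is to exploit the water filling algorithm of Proposition~\ref{pro:water} as a constructive description of the \lexmaxws\ solution $u^*$, and to read off from its execution a partition of ${\cal{N}}$ into ``level sets.'' After the customary shift that zeros the disagreement utilities, let $0 = c_0 < c_1 < \cdots < c_m$ denote the utility levels produced by water filling, and let $S^j$ be the set of agents newly locked at iteration $j$ (with $S^0$ being those locked at the start by a tight anticore constraint of value $0$). Then $u^*_i = c_j$ for every $i \in S^j$. Define the cumulative sets $R_j = S^0 \cup S^1 \cup \cdots \cup S^j$.

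The key structural claim is that each $R_j$ is tight for the anticore, i.e.\ $u^*(R_j) = W_{max}(R_j)$. By construction $R_j$ is the union of the anticore sets that became tight during iterations $0,1,\ldots,j$; under submodularity of $W_{max}$ the collection of tight sets forms a lattice (Lemma~\ref{lem:lattice}), so their union $R_j$ is itself tight. Consequently, for any other solution $u'$ in the WS-core, the anticore constraint gives $u'(R_j) \le W_{max}(R_j) = u^*(R_j)$ for every~$j$.

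To deduce Lorenz domination by $u^*$, fix such a $u'$ and compare, for each $k$, the sum of the $k$ smallest entries of $u'$ against that of $u^*$. For $k = |R_j|$ the inequality is immediate: the $k$ smallest entries of $u'$ have sum at most $u'(R_j) \le u^*(R_j)$, which by the block structure of $u^*$ equals the sum of the $k$ smallest entries of $u^*$. For intermediate $k$ with $|R_{j-1}| < k < |R_j|$, set $l = k - |R_{j-1}|$ and form the $k$-subset $T = R_{j-1} \cup L$, where $L \subseteq S^j$ consists of the $l$ members of $S^j$ with smallest $u'$-values. Since the average of the $l$ smallest values in a finite set is at most the overall average, $u'(L) \le (l/|S^j|)\, u'(S^j)$. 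Combining with the anticore bounds $u'(R_{j-1}) \le u^*(R_{j-1})$ and $u'(R_j) \le u^*(R_j)$ and using $u^*(S^j) = |S^j|\, c_j$, one obtains
\[
\text{(sum of $k$ smallest of $u'$)} \;\le\; u'(T) \;\le\; \tfrac{|S^j|-l}{|S^j|}\, u^*(R_{j-1}) + \tfrac{l}{|S^j|}\, u^*(R_j) \;=\; u^*(R_{j-1}) + l\, c_j,
\]
which is exactly the sum of the $k$ smallest entries of $u^*$.

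The one wrinkle without a direct analog in Dutta--Ray concerns the degenerate level $j = 0$: there anticore forces $u'(R_0) \le W_{max}(R_0) = 0$ while domination forces $u'_i \ge 0$, so $u'_i = 0$ identically on $R_0$ and the Lorenz comparison for $k \le |R_0|$ collapses to an equality. The main technical hurdle is thus the tightness of every $R_j$, and this is precisely where submodularity is used via the lattice of tight sets; everything else is an averaging argument against anticore constraints. Finally, by Proposition~\ref{pro:unique}(5), any Lorenz-dominating solution is unique and coincides with the \lexmaxws\ solution, completing the identification.
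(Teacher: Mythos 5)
Your proof is correct and follows essentially the same route as the paper's: both arguments rest on the tightness $u^*(R_j) = W_{max}(R_j)$ of the cumulative water-filling level sets (which the paper likewise derives from the lattice of tight sets under submodularity, via the argument in Proposition~\ref{pro:water}), followed by a comparison of prefix sums block by block. The only cosmetic difference is that you bound the intermediate prefix sums by an averaging/convex-combination inequality over the $l$ smallest $u'$-values in a block, whereas the paper sorts $u'$ within each block and argues that the two running sums cannot cross; both are routine finishes of the same argument.
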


From Proposition \ref{pro:unique} and Theorem \ref{thm:DR89} we derive the following corollary, implying that our solution \lexmaxws\ satisfies all three properties: it is a Lorenz dominating solution, the min-square solution and the lexicographically-maximal solution.

\begin{corollary}
\label{cor:submodular}
If the function $W_{max}$ is submodular then the WS-core is non-empty,
a Lorenz dominating solution exists, it is unique, and it coincides with both  the min-square solution and the lexicographically-maximal solution.
\end{corollary}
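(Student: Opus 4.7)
The plan is to assemble the corollary as an immediate consequence of three results already established in the paper: Theorem~\ref{thm:BS} (existence of the WS-core), Theorem~\ref{thm:DR89} (the \lexmaxws\ solution Lorenz-dominates the WS-core under submodularity), and Proposition~\ref{pro:unique} (structural properties of solutions in a nonempty WS-core). None of the ingredients needs further work; the argument is purely a logical chain.

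First I would invoke Theorem~\ref{thm:BS}, item~1, to conclude that under the submodularity of $W_{max}$ the WS-core is nonempty. This immediately puts us in the setting of Proposition~\ref{pro:unique}, so the min-square solution exists and is unique, and the lexicographically-maximal solution \lexmaxws\ exists and is unique. Second, I would apply Theorem~\ref{thm:DR89}, which gives that \lexmaxws\ Lorenz-dominates every other element of the (now nonempty) WS-core. Hence a Lorenz-dominating solution exists; call it $u^\ast$, and note that Theorem~\ref{thm:DR89} identifies it as \lexmaxws.

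Third, I would feed this existence conclusion back into Proposition~\ref{pro:unique}, item~5: once a Lorenz-dominating solution is known to exist, Proposition~\ref{pro:unique} asserts that it must be unique and must coincide with both the min-square solution and the lexicographically-maximal solution. Combined with the identification $u^\ast = \lexmaxws$ from the previous step, this shows that \lexmaxws, the min-square solution, and the Lorenz-dominating solution are one and the same, completing the corollary.

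There is no real obstacle here, since each citation is being used verbatim; the only thing to be careful about is that Proposition~\ref{pro:unique} part~5 is a conditional statement (\emph{if} a Lorenz-dominating solution exists, \emph{then} it coincides with the other two), so the logical order matters: we must first secure existence via Theorem~\ref{thm:DR89} before applying part~5 of the proposition. Once that order is respected, the proof is essentially a single sentence, and no further computation or case analysis is required.
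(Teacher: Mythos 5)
Your argument is correct and matches the paper's own derivation exactly: the paper states that the corollary follows from Theorem~\ref{thm:BS} (non-emptiness under submodularity), Theorem~\ref{thm:DR89} (Lorenz domination by \lexmaxws), and Proposition~\ref{pro:unique} (uniqueness and coincidence once a Lorenz-dominating solution exists). Your care about the logical order — establishing existence before invoking the conditional part~5 of Proposition~\ref{pro:unique} — is exactly the right point to flag, and nothing further is needed.
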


% \mbedit{We remark that in the submodular case, while Theorem \ref{thm:DR89} presents a polynomial-time algorithm for computing the \lexmaxws\ solution given an oracle access to $W_{max}$, even computing the welfare maximizing alternative is NP-hard \cite{}. However, for some structured settings, like in the \rentalprob\ discussed in Section \ref{sec:rental}, the welfare maximizing alternative can be computed in polynomial time, and the oracle can also be implemented by a computationally-efficient algorithm.      }
%\begin{proof}
%Follows from Theorem~\ref{thm:DR89} in combination with item~5 of Proposition~\ref{pro:unique}.
%\end{proof}

Summarizing, when selecting a solution from the WS-core, we employ the egalitarian paradigm. As we shall see, in several natural settings (such as the \rentalprob, % the Rental Harmony setting,
see Section~\ref{sec:rental}) $W_{max}$ is submodular. In these cases, Theorem~\ref{thm:DR89} offers a natural choice of a unique solution within the WS-core, because (essentially) all natural relaxations of the notion of being egalitarian (min-square, lexicographically-maximal, Lorenz-maximal) coincide. Moreover, in these cases the solution is computable in polynomial time, and also is continuous with a Lipshitz constant of~1 (see Appendices~\ref{sec:algorithms} and~\ref{sec:continuity} for exact statements.
%In those cases in which $W_{max}$ is not submodular, Proposition~\ref{pro:unique} shows that different relaxations of the egalitarian notion might lead to different solutions within the WS-core. We propose to select the lexicographically maximal solution \lexmaxws, though we view this particular proposal (as opposed to the min-square solution, for example) more as a matter of personal taste rather than as a central aspect of our solution concept.

%\subsection{Comparison with other solutions concepts}
\subsection{Comparison with the Shapley value}
\label{sec:shapley}

We find it particularly instructive to compare our \lexmaxws\ solution concept with that of the Shapley value~\cite{shapley1952}. In our context, it is natural to let $W_{max}$ play the role of a characteristic function in the definition of the Shapley value. Given a permutation $\sigma$ over the agents, let $S$ denote the set of agents that precede agent $i$ in $\sigma$. The {\em marginal contribution} of agent $i$ in $\sigma$ is $W_{max}(S \cup \{i\}) - W_{max}(S)$. The {\em Shapley value} of agent $i$ is her expected marginal contribution in a random permutation over the players. The sum of Shapley values of all players is exactly the maximum welfare $W_{max}({\cal{N}})$. The Shapley value mechanism selects an allocation that maximizes welfare, and arranges the transfers so that the utility of every player equals her Shapley value.

%A theoretical justification given for using the Shapley value is that it is the unique solution that satisfies three properties referred to as {\em symmetry}  (agents with the same valuation function receive the same utility), {\em zero player} (an agent whose valuation function is~0 for all alternatives does not receive nor pay any transfer) and {\em linearity}  (linear changes to $W_{max}$  lead to linear changes in the distribution of welfare).

%When the WS-core in nonempty, the solution offered by the Shapley value will in general be different than our \lexmaxws\ solution, and the reason for this is that \lexmaxws\ does not satisfy the linearity property with respect to $W_{max}$. Not satisfying properties associated with $W_{max}$ alone (linearity or other) is a natural consequences of the fact that our WS-core is not defined only by constraints derived from $W_{max}$, but also by constraints derived from the disagreement point. Hence even if $W_{max}$ remains unchanged and only the disagreement point changes, our solution will change, whereas the Shapley value solution would not change.

%It is known that in general cost sharing games, the Shapley value solution might be outside the cost sharing core. Likewise, in our more specialized setting (in which $W_{max}$ is not arbitrary, but rather derived from valuations over alternatives) the Shapley value solution might be outside the anticore. This is shown in~\cite{Moulin92}, and we also present an example demonstrating this in Appendix~\ref{app:Shapley-prop}.

When $W_{max}$ is submodular, the Shapley value resides in the anticore (this follows from the known fact that the Shapley value is in the cost-sharing core whenever the characteristic function is submodular). However, being oblivious to the reference point,  the Shapley value is sometimes not in the WS-core, even when $W_{max}$ is submodular. We shall see such examples in Appendix~\ref{app:Shapley-prop}.%Section~\ref{sec:rental}. %Hence known results imply that if $W_{max}$ is submodular the Shapley value is in the anticore  but do not extend to imply that the Shapley value is in the WS-core.
%In contrast, known results imply that Theorem~\ref{thm:BS} and Theorem~\ref{thm:DR89} hold for the anticore, and they do extend (as we show in the proofs provided) to the WS-core as well.

The use of the Shapley value as a solution concept for problems such as  \rentalprob\ was advocated in~\cite{Moulin92}. There is was shown that the Shapley value solution satisfies four properties that are referred to as {\em individual rationality}, {\em resource monotonicity}, {\em population monotonicity}, and {\em stand alone test}. Our \lexmaxws\ solution satisfies the stand alone test (which is equivalent to being in the anticore) and satisfies the domination property that is considerably stronger and more versatile than the notion of individual rationality used in~\cite{Moulin92}. However, it does not satisfy that resource  monotonicity and the population monotonicity properties, see Appendix~\ref{sec:monotonicity} for details.

%Given the above, it may appear that in those cases in which the disagreement point happens to be the same as the individual rationality condition used in~\cite{Moulin92}, the use of the Shapley value is preferable over \lexmaxws.

In Appendix~\ref{app:2-agents} we compare between the solutions offered by the Shapley value and the solutions offered by \lexmaxws,
% \mbcomment{what is WS? a typo? this is also the only place we say max-lex and not lex-max.  	Also: why do we next say ''implicitly'' ? maybe replace by "systematically" ? }, implicitly
covering all instances of two agents and two alternatives. In these instances the Shapley value solution does lie in the WS-core, and for some range of values it coincides with \lexmaxws.  For those ranges of values for which the two solutions differ, the \lexmaxws\ solution offers a more egalitarian sharing of welfare compared to that offered by the Shapley value.

{\bf Other solution concepts:}
Previous work proposes many other solution concepts for profit-sharing and cost-sharing games. See for example~\cite{Moulin85}, \cite{MoulinBook}, \cite{PS2007} and references therein, as well as Appendix~\ref{appendix:bargaining} in which we discuss the nucleolus and some known bargaining solution concepts. It would be too space consuming to discuss all these solution concepts in an informative manner, so let us just note that despite the many existing solution concepts, our solution concept appears to be new, and the first to satisfy the combination of properties that we seek. It is our opinion that in the setting studied in the current paper (which in particular includes a disagreement point), our solution concept is preferable over any of the previous solution concepts in terms of the balance that it achieves between fairness properties and the incentives that it provides to switch (from a mechanism with no transfers) to our mechanism.

\ignore{

}

\subsection*{Acknowledgments}

We thank Herve Moulin and Eyal Winter for helpful discussions.

% Bibliography
%%%\bibliographystyle{ACM-Reference-Format}
%\bibliographystyle{abbrvnat}

\bibliography{bib}

\begin{thebibliography}{28}
\providecommand{\natexlab}[1]{#1}
\providecommand{\url}[1]{\texttt{#1}}
\expandafter\ifx\csname urlstyle\endcsname\relax
  \providecommand{\doi}[1]{doi: #1}\else
  \providecommand{\doi}{doi: \begingroup \urlstyle{rm}\Url}\fi

\bibitem[Arora et~al.(1998)Arora, Lund, Motwani, Sudan, and Szegedy]{ALMSS}
S.~Arora, C.~Lund, R.~Motwani, M.~Sudan, and M.~Szegedy.
\newblock Proof verification and the hardness of approximation problems.
\newblock \emph{J. {ACM}}, 45\penalty0 (3):\penalty0 501--555, 1998.

\bibitem[Bogomolnaia and Moulin(2001)]{BM01}
A.~Bogomolnaia and H.~Moulin.
\newblock A new solution to the random assignment problem.
\newblock \emph{Journal of Economic Theory}, 100\penalty0 (2):\penalty0
  295--328, October 2001.

\bibitem[Bondareva(1963)]{Bondareva63}
O.~N. Bondareva.
\newblock Some applications of linear programming methods to the theory of
  cooperative games (in russian).
\newblock \emph{Problemy Kybernetiki}, \penalty0 (10):\penalty0 119--139, 1963.

\bibitem[Deng and Papadimitriou(1994)]{DP94}
X.~Deng and C.~H. Papadimitriou.
\newblock On the complexity of cooperative solution concepts.
\newblock \emph{Mathematics of Operations Research}, 19\penalty0 (2):\penalty0
  257--266, 1994.

\bibitem[Derks et~al.(2014)Derks, Peters, and Sudh{\"o}lter]{Derks2014}
J.~Derks, H.~Peters, and P.~Sudh{\"o}lter.
\newblock On extensions of the core and the anticore of transferable utility
  games.
\newblock \emph{International Journal of Game Theory}, 43\penalty0
  (1):\penalty0 37--63, Feb 2014.

\bibitem[Dutta and Ray(1989)]{DR89}
B.~Dutta and D.~Ray.
\newblock A concept of egalitarianism under participation constraints.
\newblock \emph{Econometrica}, 57\penalty0 (3):\penalty0 615--635, 1989.

\bibitem[Feige(2009)]{feige}
U.~Feige.
\newblock On maximizing welfare when utility functions are subadditive.
\newblock \emph{{SIAM} J. Comput.}, 39\penalty0 (1):\penalty0 122--142, 2009.

\bibitem[Foley(1967)]{Foley67}
D.~K. Foley.
\newblock {Resource allocation and the public sector}.
\newblock \emph{Yale Econ Essays}, 7\penalty0 (1):\penalty0 45--98, 1967.

\bibitem[Gal et~al.(2017)Gal, Mash, Procaccia, and Zick]{GMPZ}
Y.~K. Gal, M.~Mash, A.~D. Procaccia, and Y.~Zick.
\newblock Which is the fairest (rent division) of them all?
\newblock \emph{J. {ACM}}, 64\penalty0 (6):\penalty0 39:1--39:22, 2017.

\bibitem[Gini(1909)]{Gini}
C.~Gini.
\newblock Concentration and dependency ratios.
\newblock \emph{Rivista di Politica Economica}, 87:\penalty0 769--789, 1909.
\newblock English translation, 1997.

\bibitem[Goemans et~al.(2017)Goemans, Gupta, and Jaillet]{GGJ2017}
M.~X. Goemans, S.~Gupta, and P.~Jaillet.
\newblock Discrete newton's algorithm for parametric submodular function
  minimization.
\newblock In \emph{{IPCO}}, volume 10328 of \emph{Lecture Notes in Computer
  Science}, pages 212--227. Springer, 2017.

\bibitem[Goldman and Procaccia(2014)]{GP14}
J.~Goldman and A.~D. Procaccia.
\newblock Spliddit: Unleashing fair division algorithms.
\newblock \emph{ACM SIGecom Exchanges}, 13\penalty0 (2):\penalty0 41--46, 2014.

\bibitem[Kalai and Smorodinsky(1975)]{KalaiS75}
E.~Kalai and M.~Smorodinsky.
\newblock Other solutions to nash's bargaining problem.
\newblock \emph{Econometrica}, 43\penalty0 (3):\penalty0 513--518, 1975.

\bibitem[Khot et~al.(2018)Khot, Minzer, and Safra]{KMS18}
S.~Khot, D.~Minzer, and M.~Safra.
\newblock Pseudorandom sets in grassmann graph have near-perfect expansion.
\newblock \emph{Electronic Colloquium on Computational Complexity {(ECCC)}},
  25:\penalty0 6, 2018.

\bibitem[Lehmann et~al.(2006)Lehmann, Lehmann, and Nisan]{LLN}
B.~Lehmann, D.~J. Lehmann, and N.~Nisan.
\newblock Combinatorial auctions with decreasing marginal utilities.
\newblock \emph{Games and Economic Behavior}, 55\penalty0 (2):\penalty0
  270--296, 2006.

\bibitem[Maschler et~al.(1992)Maschler, Potters, and Tijs]{MPT92}
M.~Maschler, J.~A.~M. Potters, and S.~H. Tijs.
\newblock The general nucleolus and the reduced game property.
\newblock \emph{International Journal of Game Theory}, 21\penalty0
  (1):\penalty0 85--106, Mar 1992.

\bibitem[Milnor(1953)]{Milnor52}
J.~Milnor.
\newblock \emph{Reasonable Outcomes for N-person Games}.
\newblock Research memorandum. Rand Corporation, 1953.

\bibitem[Monderer et~al.(1992)Monderer, Samet, and S~Shapley]{MSS92}
D.~Monderer, D.~Samet, and L.~S~Shapley.
\newblock Weighted values and the core.
\newblock \emph{International Journal of Game Theory}, 21:\penalty0 27--39,
  1992.

\bibitem[Moulin()]{MoulinBook}
H.~Moulin.
\newblock \emph{Axioms of Cooperative Decision Making}.
\newblock Econometric Society Monographs.

\bibitem[Moulin(1985)]{Moulin85}
H.~Moulin.
\newblock Egalitarianism and utilitarianism in quasi-linear bargaining.
\newblock \emph{Econometrica}, 53\penalty0 (1):\penalty0 49--67, 1985.

\bibitem[Moulin(1992)]{Moulin92}
H.~Moulin.
\newblock An application of the shapley value to fair division with money.
\newblock \emph{Econometrica}, 60\penalty0 (6):\penalty0 1331--49, 1992.

\bibitem[Nagano(2007)]{Nagano2007}
K.~Nagano.
\newblock A strongly polynomial algorithm for line search in submodular
  polyhedra.
\newblock \emph{Discrete Optimization}, 4\penalty0 (3-4):\penalty0 349--359,
  2007.

\bibitem[Peleg and Sudholter(2007)]{PS2007}
B.~Peleg and P.~Sudholter.
\newblock \emph{Introduction to the theory of cooperative games (second
  edition)}.
\newblock Springer, 2007.

\bibitem[Schmeidler(1969)]{Schmeidler69}
D.~Schmeidler.
\newblock The nucleolus of a characteristic function game.
\newblock \emph{SIAM Journal of Applied Mathematics}, 17:\penalty0 1163--1170,
  1969.

\bibitem[Shapley(1953)]{shapley1952}
L.~S. Shapley.
\newblock A value for n-person games.
\newblock In H.~W. Kuhn and A.~W. Tucker, editors, \emph{Contributions to the
  Theory of Games II}, pages 307--317. Princeton University Press, Princeton,
  1953.

\bibitem[Shapley(1967)]{Shapley67}
L.~S. Shapley.
\newblock On balanced sets and cores.
\newblock \emph{Naval Research Logistics Quarterly}, \penalty0 (14):\penalty0
  453--460, 1967.

\bibitem[Su(1999)]{Su99}
F.~E. Su.
\newblock Rental harmony: Sperner's lemma in fair division.
\newblock \emph{The American Mathematical Monthly}, 106\penalty0 (10):\penalty0
  930--942, 1999.

\bibitem[Yaari and Bar-Hillel(1984)]{YaariBarHillel}
M.~E. Yaari and M.~Bar-Hillel.
\newblock On dividing justly.
\newblock \emph{Social Choice and Welfare}, 1\penalty0 (1):\penalty0 1--24,
  1984.

\end{thebibliography}


\begin{thebibliography}{22}
	
	\bibitem{ADG91} Ahmet Alkan, Gabrielle Demange, David Gale: Allocation of Indivisible Goods and Criteria of Justice. Econometrica, Vol. 59, No. 4 (1991) 1023--1039.
	
	%\bibitem{BR201} Amotz Bar-Noy, George Rabanca:
	%Tight Approximation Bounds for the Seminar Assignment Problem. CoRR abs/1610.04785 (2016).
	
	
	
	
	
	%\bibitem{SungVlach} Shao Chin Sung; Milan Vlach (2004). "Competitive envy-free division". Social Choice and Welfare.
	
	
	
	\bibitem{BM01} Anna Bogomolnaia, Herve Moulin: A New Solution to the Random Assignment Problem. Journal of Economic Theory 100, 295-328 (2001).
	
	\bibitem{DM65} M. Davis and M. Maschler: The kernel of cooperative games. Naval research Logistics Quarterly, 12, 223--259.
	
	\bibitem{DR89} Bhaskar Dutta and Debraj Ray: A concept of Elalitarianism under participation constraints. Econometrica, Vol. 57, No. 3, 615--635, 1989.
	
	\bibitem{feige} Uriel Feige:
	On Maximizing Welfare When Utility Functions Are Subadditive. SIAM J. Comput. 39(1): 122--142 (2009).
	
	\bibitem{GMPZ} Ya'akov (Kobi) Gal, Moshe Mash, Ariel D. Procaccia, Yair Zick:
	Which Is the Fairest (Rent Division) of Them All? EC 2016: 67--84.
	
	\bibitem{LLN} Benny Lehmann, Daniel J. Lehmann, Noam Nisan:
	Combinatorial auctions with decreasing marginal utilities. Games and Economic Behavior 55(2): 270--296 (2006).
	
	
	\bibitem{Milnor52} J. W. Milnor: Reasonable outcomes of $n$-person games. Research memorandum 916, the Rand Corporation, 1952.
	
	\bibitem{Moulin85} Herve Moulin: Egalitarianism and Utilitarianism in Quasi-linear Bargaining.
	Econometrica, 1985, vol. 53, issue 1, 49--67.
	
	\bibitem{PS2007} Bezalel Peleg and Peter Sudholter: Introduction to the theory of cooperative games (second edition). Springer 2007.
	
	\bibitem{Shapley} Shapley value.

International Journal of Game Theory (1992) 21:27-39
Weighted Values and the Core
D. Monderer, D. Samet, and L.S. Shapley

where it is used for cost sharing games

and

@Article{Derks2014,
author="Derks, Jean
and Peters, Hans
and Sudh{\"o}lter, Peter",
title="On extensions of the core and the anticore of transferable utility games",
journal="International Journal of Game Theory",
year="2014",
month="Feb",
day="01",
volume="43",
number="1",
pages="37--63"
}

	
\end{thebibliography}

\begin{thebibliography}{22}

\bibitem{ALMSS}
Sanjeev Arora, Carsten Lund, Rajeev Motwani, Madhu Sudan, Mario Szegedy:
Proof Verification and the Hardness of Approximation Problems. J. ACM 45(3): 501--555 (1998).

\bibitem{DP94}
Xiaotie Deng, Christos H. Papadimitriou:
On the Complexity of Cooperative Solution Concepts. Math. Oper. Res. 19(2): 257--266 (1994).

\bibitem{GGJ2017} Michel X. Goemans, Swati Gupta, Patrick Jaillet:
Discrete Newton's Algorithm for Parametric Submodular Function Minimization. IPCO 2017: 212--227.

\bibitem{KMS18} Subhash Khot, Dor Minzer, Muli Safra:
Pseudorandom Sets in Grassmann Graph have Near-Perfect Expansion. Electronic Colloquium on Computational Complexity (ECCC) 25:6 (2018).

\bibitem{Nagano2007} Nagano, K. (2007). A strongly polynomial algorithm for line search in submodular polyhedra. Discrete
Optimization, 4(3):349�-359.

\end{thebibliography}

\appendix

\appendix

\section{Discussion of domination and of the anticore}\label{sec:discussion}

In our setting, one alternative among ${\cal{A}}$ {\em must} be selected. Our solution concept postulates that an alternative $A^*$ of maximum welfare is selected, and budget-balanced transfers are used in order to share the welfare fairly among the agents. The reason for choosing a maximum welfare $A^*$ is to create as large as possible pool of welfare to be distributed among the agents. The assumption that utility functions are quasi-linear allows the transfers to redistribute the welfare in an arbitrary way among the agents, regardless of the value of $A^*$ to each of the agents.  The outcome of this welfare distribution is summarized by the utilities of the agents, which combine the values derived from $A^*$ and from the transfer. Hence our setting can be described in the following equivalent way. There is a set $\cal{N}$ of agents, and a given amount of welfare which equals $w_{{\cal{N}},v}(A^*)$. This welfare needs to be shared among the agents, where the share of each agent $i$ is her utility $u_i$, under the condition that $\sum_{i \in {\cal{N}}} u_i = w_{{\cal{N}},v}(A^*)$. The reference context that is available to us is the tuple $v$ of valuation functions, the set ${\cal{A}}$ of alternatives, and a probability distribution $\pi_v$ over ${\cal{A}}$. Our welfare-sharing (WS) core (Definition~\ref{def:WS}) is based on two sets of constraints. Below we justify each set of constraints, and also discuss its relation to other notions in cooperative game theory.

\subsection{Domination}% -- ``it would have been worse"}
\label{sec:domination}

We assume that a probability distribution $\pi_v$ over alternatives ${\cal{A}}$ is given. This $\pi_v$ serves as a {\em reference point} (also referred to as a {\em disagreement point}). Namely, $\pi_v$ represents what the agents intend to do (select one alternative $A \in {\cal{A}}$ according to probability distribution $\pi_v$) if they are restricted to use a (randomized) social choice function without transfers. For example, students allocating rooms in a shared apartment may intend to use the Random Priority mechanism, which is quite simple to implement. We propose to them that they use a mechanism with transfers instead, a mechanism that generates more welfare, but may be more complicated to implement. However, we cannot enforce that the agents switch from the reference point to our mechanism. We can only try to convince them to do so, and moreover, we might need to convince each and every one of them before the switch to our mechanism actually happens.
%In some respects, our mechanism might be inferior to their intended mechanism: our mechanism is more complicated, it involves transfers, requires that students report their full valuation functions (and not just choose a room when its their turn to choose), makes quasi-linear assumptions on their utility functions (Random Priority makes no such assumptions), and may require heavier computational resources.
In order for an individual agent to be convinced to switch, it does not suffice that the utility of other agents will increase by the switch -- we need to guarantee that she herself will get higher utility from the new mechanism (or at least not lose utility). Here we make the assumption that agents are risk neutral, and hence the distribution over utilities that an agent derives from the randomized reference point can be summarized by one number -- the expected utility. Hence we impose the domination constraint $u_i(A^*,p) \ge \sum_{A \in {\cal{A}}} \pi_v(A)v_i(A)$ for every agent $i$.

%Two remarks are in order here. One is that unlike the case for anticore (to be discussed in Section~\ref{sec:uppercore}), in the case of domination, constraints over sets of agents do not provide additional information beyond constraints over individual agents, because the set function describing the expected utility that a set of agents obtains at a disagreement point is a linear function. The other remark is that
We remark that if agents are risk averse (rather than risk neutral) then the utility that they derive from the randomized reference point becomes smaller, making our deterministic mechanisms even more attractive.

We assume that a reference point is given for the original game, but make no assumption regarding how this reference point is chosen. In particular, the reference point, which may be a function of the valuations, need not be a \emph{continuous} function of the valuation functions. Indeed, in many natural cases (reference points derived from {\em Random Priority}, or from the {\em Eating mechanism} of~\cite{BM01}) the reference point is not continuous in the valuation functions.

There are other solution concepts that also require domination over a disagreement point. For example, this is the case for Nash bargaining solution, and for the Kalai-Smorodinsky bargaining solution (see Appendix~\ref{appendix:bargaining}). Often, the disagreement point represents the bargaining power of agents -- the utility that they can obtain by not participating in the mechanism. Hence it is natural that the mechanism needs to offer them at least their disagreement utility, as otherwise they would not participate. Considerations of this sort are referred to as {\em individual rationality} (IR). However, in our context, agents do not have an outside option of not participating. Rather, the disagreement point itself is the outcome of a mechanism that involves all agents, referred to as the disagreement mechanism. Hence it is not clear what utility, if any, an agent will derive by refusing to participate in our mechanism, because this may depend on what the other agents do. Hence we view our domination property as implementing a fairness principle rather than reflecting bargaining power: switching from the disagreement mechanism to a new mechanism with transfers generates extra welfare, and it is ``fair" that this extra welfare be shared by all participants, or at the very least, that no agent suffers a loss in utility.

There are solution concepts that fix a particular reference point (e.g., two possible reference points are considered in~\cite{Moulin92}, leading to notions that the paper refers to as {\em weak} and {\em strong} IR). Our work differs from these works in the sense that our mechanisms can receive an arbitrary reference point as an input parameter, and dominate the given reference point.

\subsection{The anticore} % -- ``it should be no better"}
\label{sec:uppercore}

When agent $i$ receives an in-payment of $p_i$, we wish it to be the case that agent $i$ could justify to others why she deserves such a payment. Such a justification is needed because against every in-payment to one agent there is an equal amount of out-payment from other agents, and these other agents need to be convinced that their out-payments are extracted for a good reason. A justification that agent $i$ can provide is that $A^*$ is not her preferred alternative, and so she should be compensated for not contesting the choice of $A^*$. This justification has a limit. Without transfers, the highest utility that agent $i$ can hope to achieve is $W_{max}(i) = \max_{A \in {\cal{A}}} v_i(A)$. Hence there is no justification for $p_i$ to exceed $W_{max}(i) - v_i(A^*)$.
%Another justification that agent $i$ can provide is that she is contributing towards the generation of welfare, via her valuation function $v_i$. Being a generator of welfare, she deserves to receive utility. However, regardless of the alternative chosen, the contribution of agent $i$ towards the total welfare does not exceed $W_{max}(i)$. Hence her contribution to generating the welfare does not justify giving her in-payments that push her utility above $W_{max}(i)$.

More generally, we require that also every set $S$ of agents (e.g., all agents of a certain gender) will be able to justify receiving net in-payment into $S$. Against every in-payment to $S$ there is an equal amount of out-payment from the set $\bar{S} = {\cal{N}} \setminus S$ of remaining agents (e.g., all agents of the other gender), and this other set needs to be convinced that their out-payments are extracted for a good reason.
%Regardless of the alternative chosen, the total contribution of $S$ towards generating welfare cannot exceed $W_{max}(S)$. Likewise,
Without transfers, there is no alternative that can offer $S$ total utility higher than $W_{max}(S)$. Hence it is difficult to justify extracting out-payments from $\bar{S}$ if their use is to increase the utility of $S$ beyond $W_{max}(S)$. This gives the constraints of the anticore.

{\bf Related notions:}
The constraint $u_i \le W_{max}(i)$ is referred to as {\em reasonable from above} (REAB) by Milnor~\cite{Milnor52} (see page 21 in~\cite{PS2007}). %(Likewise, the constraint $u_i(A^*,p) \ge W_{min}(i)$ for the lower core is referred to as {\em reasonable from below}.)
The anticore extends the REAB constraint to sets, requiring $u_S(A^*,p) \le W_{max}(S)$ for every set $S$. The anticore was defined in~\cite{Moulin92}, where the constraint $u_S(A^*,p) \le W_{max}(S)$ was referred to as the {\em stand alone test} for set $S$. The term anticore also appeared in some other work, though not necessarily with the same interpretation. Let us elaborate on this.

We first recall the notion of {\em core} in {\em transferable utility cooperative games}. Suppose that there is a set $\cal{N}$ of players and a {\em characteristic function} $f: {\cal{N}} \rightarrow R$, specifying for each coalition of players the payoff that the coalition can achieve on its own. An {\em imputation} $I = (I_1, \ldots, I_n)$ is a vector of payoffs, distributing the payoff $f({\cal{N}})$ of the grand coalition among the players. Namely, $\sum_{i \in {\cal{N}}} I_i = f({\cal{N}})$. An imputation is said to be in the {\em core} (which we shall call here the {\em imputation core}, so as to distinguish it from others notions of core used in this paper) if for every set $S \subseteq{\cal{N}}$ of players, $\sum_{i \in S} I_i \ge f(S)$.

In our context, the utilities $u_i$ play the same mathematical role as the imputations $I_i$. The function $W_{max}$ indirectly plays the role of the characteristic function, by defining a characteristic function $D(S) = W_{max}({\cal{N}}) - W_{max}({\cal{N}} \setminus S)$ (technically referred to as the {\em dual} of $W_{max}$) and requiring $u_S(A^*,v) \ge D(S)$. This dual definition is more in line with our setting being that of a profit game (the inequalities ensure some minimum utility for each set of agents). However, we find it inconvenient to work with the dual definition (in particular when it is combined with the domination constraints), and hence we use the equivalent definition of $u_S(A^*,v) \le W_{max}(S)$.

The anticore resembles the notion of core in {\em cost-sharing games}, and indeed such a core is sometimes referred to as an anticore (see~\cite{MSS92}). Suppose that there is a set $\cal{N}$ of players and a {\em characteristic function} $f: {\cal{N}} \rightarrow R$, specifying for each coalition of players the cost of receiving service on its own. A {\em cost-sharing vector} $c = (c_1, \ldots, c_n)$ is a vector of costs, distributing the cost $f({\cal{N}})$ of the grand coalition among the players. Namely, $\sum_{i \in {\cal{N}}} c_i = f({\cal{N}})$. A cost-sharing vector is said to be in the {\em anticore} (which we shall call here the {\em cost-sharing anticore}, so as to distinguish it from our notion of anticore) if for every set $S \subseteq{\cal{N}}$ of players, $\sum_{i \in S} c_i \le f(S)$. Our anticore has the same mathematical structure as a cost-sharing anticore, equating $W_{max}$ with the characteristic function $f$, and the utilities $u_i(A^*,p)$ with the cost shares $c_i$. A major difference is that we impose this mathematical structure in a {\em profit-sharing} game in which agents wish to maximize their utility, rather than in a {\em cost-sharing} game in which agents wish to minimize their cost.

The term anticore has been used in~\cite{Derks2014} for profit-sharing games. There, the characteristic function $f: {\cal{N}} \rightarrow R$ specifies for each coalition of players the profit that it can generate on its own. A {\em profit-sharing vector} $u = (u_1, \ldots, u_n)$ is a vector of profits, distributing the profit $f({\cal{N}})$ of the grand coalition among the players. Namely, $\sum_{i \in {\cal{N}}} u_i = f({\cal{N}})$. In~\cite{Derks2014}, a profit-sharing vector is said to be in the {\em anticore}
%(which we shall call here the {\em profit sharing anticore}, so as to distinguish it from our notion of anticore)
if for every set $S \subseteq{\cal{N}}$ of players, $\sum_{i \in S} u_i \le f(S)$. Derks et al~\cite{Derks2014} provide the following justification for the anticore: ``if one coalition obtains less than its worth, then it is only fair that all coalitions obtain at most their worths". Our anticore has the same mathematical structure as that of~~\cite{Derks2014}, equating $W_{max}$ with the characteristic function $f$, and the utilities $u_i(A^*,p)$ with the profit shares $u_i$. However, our function $W_{max}$ does \emph{not} carry the same interpretation that one usually associates with a characteristic function of a profit-sharing game. In our setting, a set $S$ of players \emph{cannot} generate for itself a utility $W_{max}(S)$ if it breaks from the grand coalition. Rather, $W_{max}(S)$ is the maximum utility that $S$ can obtain (without transfers) if it stays in the grand coalition, and all players, including the players not in $S$,
agree to choose the alternative that maximizes the welfare of $S$.

Depending on the nature of the characteristic function $f$, the anticore in~~\cite{Derks2014} might be either empty or nonempty. (In fact, the main content of that work concerns how to handle cases in which both the core and the anticore are empty.) In contrast, in our setting, the anticore is always nonempty. In particular, not using any transfers (equivalently, using the all~0 transfer vector) is always in the anticore. By nature of its construction, our function $W_{max}$, when viewed as a characteristic function in a profit-sharing game, guarantees non-emptiness of the anticore.

\section{Decomposability}\label{app:decomp}

The following proposition shows that the goals of maximizing welfare and of satisfying decomposition properties are compatible with each other.

\begin{proposition}
\label{lem:welfarecomponent}
Any alternative that maximizes welfare also maximizes welfare for each component separately.
\end{proposition}

\begin{proof}
Let $A$ be an alternative that maximizes welfare, namely, for which $\sum_{i\in {\cal{N}}} v_i(A)$ is largest possible. For a component $S$, let $B$ be an alternative that maximizes the welfare of $S$, namely, for which $\sum_{i\in S} v_i(B)$ is largest possible. We need to show that $\sum_{i\in S} v_i(A) = \sum_{i\in S} v_i(B)$.

Suppose for the sake of contradiction that $\sum_{i\in S} v_i(A) < \sum_{i\in S} v_i(B)$. Let $C$ be an alternative that maximizes the welfare of $\bar{S}$. Then necessarily  $\sum_{i\in \bar{S}} v_i(C) \ge \sum_{i\in \bar{S}} v_i(A)$. By the fact that $S$ is a component, there must be an alternative $D$ for which $\sum_{i\in S} v_i(D) = \sum_{i\in S} v_i(A)$ and $\sum_{i\in \bar{S}} v_i(D) = \sum_{i\in \bar{S}} v_i(C)$. It follows that $\sum_{i\in {\cal{N}}} v_i(D) > \sum_{i\in {\cal{N}}} v_i(A)$, contradicting the assumption that $A$ maximizes welfare.
\end{proof}

We next prove that strong decomposability implies weak decomposability.
\begin{proposition}\label{prop:stong-decomp-implies-weak}
	Let $M$ be a mechanism that for every instance selects an alternative that maximizes welfare and a budget balanced vector of transfers. If $M$ is strongly decomposable, then for every decomposable instance the solution produced by $M$ is weakly decomposable.
\end{proposition}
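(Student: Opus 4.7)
The plan is to mimic the inline argument sketched in the \texttt{longver} block: fix a component $S \subset \mathcal{N}$ of the decomposable instance $I$, compare the solution $M(I)$ to the solution $M(I_S)$ on the restricted instance, and use strong decomposability together with Proposition~\ref{lem:welfarecomponent} to force the net transfer into $S$ to vanish.

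More concretely, first I would let $A$ be the welfare-maximizing alternative that $M$ selects on $I$ and let $A_S$ be the welfare-maximizing alternative that $M$ selects on the restricted instance $I_S$ (restricting the agent set to $S$, keeping $\mathcal{A}$ and the valuations $v_i$ for $i \in S$ unchanged). Applying Proposition~\ref{lem:welfarecomponent} to $A$ with respect to the component $S$ yields $\sum_{i\in S} v_i(A) = W_{max}(S) = \sum_{i\in S} v_i(A_S)$, so the value of the chosen alternative to $S$ is the same in both instances.

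Next, I would invoke strong decomposability of $M$: for every $i \in S$, $u_i(M(I)) = u_i(M(I_S))$. Summing over $i \in S$ gives
\[
\sum_{i\in S} u_i(M(I)) \;=\; \sum_{i\in S} u_i(M(I_S)).
\]
Using the quasi-linear expansion $u_i = v_i(\cdot) + p_i$ on both sides, and substituting the equality of values from the previous step, the valuation contributions cancel, leaving
\[
\sum_{i\in S} p_i(M(I)) \;=\; \sum_{i\in S} p_i(M(I_S)).
\]
Since $M$ outputs a budget-balanced transfer vector on every instance, the right-hand side equals $0$. Hence $\sum_{i\in S} p_i(M(I)) = 0$ for every component $S$, which is precisely the weak decomposability condition of Definition~\ref{def:weakdecompose}.

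There is essentially no obstacle here: the proof is a short bookkeeping argument that chains together Proposition~\ref{lem:welfarecomponent} (which handles the alternative-side of the solution) with the hypothesis of strong decomposability (which handles the utility-side) and budget balance on the restricted instance. The only minor care needed is to note that strong decomposability, as stated, directly equates individual utilities rather than transfers, which is why the step using $u_i = v_i(A) + p_i$ together with the welfare-matching identity for $S$ is essential to isolate the sum of transfers.
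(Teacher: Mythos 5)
Your proof is correct and follows essentially the same route as the paper's own proof: apply Proposition~\ref{lem:welfarecomponent} to equate the value that $S$ derives from the alternatives chosen in $I$ and $I_S$, use strong decomposability to equate the summed utilities over $S$, and conclude via quasi-linearity and budget balance on $I_S$ that the net transfer into $S$ is zero. No gaps.
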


\begin{proof}
	Let $S$ be a component. Let $A$ be the alternative (maximizing welfare for $\cal{N}$) chosen by $M$ in instance $I$, and let $A_S$ be the alternative (maximizing welfare for $S$) chosen by $M$ in instance $I_S$. By Proposition~\ref{lem:welfarecomponent} we have that $\sum_{i\in S} v_i(A) = \sum_{i\in S} v_i(A_S)$. By strong decomposability of $M$ we have that $u_i(M(I)) = u_i(M(I_S))$ for every $i\in S$, and consequently $\sum_{i\in S} u_i(M(I)) = \sum_{i\in S}u_i(M(I_S))$. As a utility of an agent is the sum of value for the chosen alternative and the transfer, we have that the sum of transfers of the agents in $S$ must be the same in $I$ and in $I_S$. But in $I_S$ the sum of transfers is~0, because of budget balance. Hence in $I$ the net transfer into $S$ is~0 as well, as required by weak decomposability.
\end{proof}

%\section{Components of a decomposition form a lattice}
%\label{sec:lattice}

We now prove Proposition~\ref{lem:component}, which is restated here for convenience.

\begin{proposition}
%\label{lem:component}
Given a set $\cal{A}$ of alternatives, a set $\cal{N}$ of agents, and a tuple $v$ of valuation functions, the components of $\cal{N}$ form a lattice.
\end{proposition}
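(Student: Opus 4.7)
The plan is to reduce the proposition to a more tractable equivalent characterization of ``component'' that is stable under set operations. Before doing so, observe that the definition of a component is symmetric in $S$ and $\bar{S}$: an alternative Pareto optimal with respect to $S$ plays the same role as one Pareto optimal with respect to $\bar{S}$. Hence $S$ is a component if and only if $\bar{S}$ is. It follows that it suffices to establish closure of the family of components under intersection, since closure under union is then immediate from $S \cup T = \overline{\bar{S} \cap \bar{T}}$.

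The first key step is to prove the following equivalent formulation of being a component: $S$ is a component if and only if for every pair of alternatives $A, B \in \mathcal{A}$ there exists $C \in \mathcal{A}$ satisfying $v_i(C) \ge v_i(A)$ for all $i \in S$ and $v_j(C) \ge v_j(B)$ for all $j \in \bar{S}$. The direction from the original definition to this characterization proceeds by replacing $A$ with an alternative $A^*$ that Pareto-dominates $A$ on $S$ and is maximal (in $\mathcal{A}$) under the componentwise order restricted to $S$; by maximality, $A^*$ is automatically Pareto optimal with respect to $S$. Do the same for $B$ on $\bar{S}$ to obtain $B^*$, then invoke the original definition of a component on the pair $(A^*, B^*)$. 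The converse direction is immediate: if $A$ is Pareto optimal for $S$ and $B$ is Pareto optimal for $\bar{S}$, then any $C$ satisfying the weak inequalities of the characterization must in fact achieve equality on both sides.

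With this characterization in hand, closure under intersection follows from a two-step dominate-and-then-dominate argument. Let $S, T$ be components, and let $A, B \in \mathcal{A}$ be arbitrary. Apply the $T$-characterization to the pair $(A, B)$ to obtain $D$ with $v_i(D) \ge v_i(A)$ for $i \in T$ and $v_j(D) \ge v_j(B)$ for $j \in \bar{T}$; then apply the $S$-characterization to the pair $(D, B)$ to obtain $E$ with $v_i(E) \ge v_i(D)$ for $i \in S$ and $v_j(E) \ge v_j(B)$ for $j \in \bar{S}$. I then check $E$ against the characterization for $S \cap T$ on the original pair $(A, B)$: on $S \cap T \subseteq S$ we have $v_i(E) \ge v_i(D) \ge v_i(A)$, using $i \in T$ for the second inequality; on $\overline{S \cap T} = \bar{S} \cup (S \setminus T)$, agents in $\bar{S}$ satisfy $v_j(E) \ge v_j(B)$ directly, while agents in $S \setminus T \subseteq S \cap \bar{T}$ satisfy $v_j(E) \ge v_j(D) \ge v_j(B)$. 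Hence $S \cap T$ is a component, and by the symmetry-and-complementation reduction so is $S \cup T$.

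The main obstacle is establishing the equivalent characterization, specifically the existence of the Pareto-maximal improvements $A^*$ and $B^*$. In the finite-$\mathcal{A}$ regime that the paper implicitly works in, this is a one-line maximal-element argument; in an infinite setting one would need Zorn's lemma or a compactness hypothesis on $\mathcal{A}$. Once that bridge is in place, the lattice structure amounts only to careful bookkeeping across the four-way partition $\{S \cap T,\, S \setminus T,\, T \setminus S,\, \overline{S \cup T}\}$.
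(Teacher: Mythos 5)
Your proof is correct, and while it rests on the same two basic ingredients as the paper's proof -- the complement symmetry that reduces union-closure to intersection-closure, and the lifting of an arbitrary alternative to a Pareto-optimal one that dominates it on a given set of agents -- it is organized around a different key lemma and ends up cleaner. The paper proves intersection-closure directly from the definition: it takes $A$ Pareto optimal for $S\cap T$ and $B$ Pareto optimal for its complement, tracks the value vectors over the four blocks $S\cap T$, $S\setminus T$, $T\setminus S$, $\overline{S\cup T}$, performs the Pareto-lifting inline, and invokes componenthood three times (once for $S$, twice for $T$) before forcing equality at the end via Pareto optimality of $A$ and $B$. You instead isolate an equivalent characterization -- $S$ is a component iff for \emph{every} pair $(A,B)$ there is a $C$ weakly dominating $A$ on $S$ and $B$ on $\bar S$ -- proving the equivalence once via the same lifting trick (forward direction) and the definition of Pareto optimality (converse). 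With that in hand, intersection-closure becomes a two-step composition ($T$ applied to $(A,B)$, then $S$ applied to $(D,B)$) with only light bookkeeping. What your reorganization buys is fewer applications of the component property, no need to carry four-block value vectors, and a reusable monotone formulation of componenthood; what it costs is the upfront proof of the equivalence, including the observation (which the paper also implicitly relies on in its ``continue the argument'' step) that the Pareto-lifting terminates, i.e., that $\mathcal{A}$ is finite or satisfies a suitable chain condition. Your bookkeeping on $\overline{S\cap T}=\bar S\cup(S\setminus T)$ is correct, and the converse direction of your characterization correctly uses the paper's definition of Pareto optimality to upgrade the weak inequalities to equalities, so the final conclusion that $S\cap T$ is a component in the original sense goes through.
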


\begin{proof}
For every component $S$, also its complement $\bar{S} = {\cal{N}} \setminus S$ is a component as well. Consequently, if we prove that for every pair of components their intersection is also a component, this will imply that their union is a component as well.

Let $S$ and $T$ be components. We need to show that $S \cap T$ is a component.
Let $A$ be an alternative that is Pareto optimal for $S \cap T$. Let the vectors of values that $A$ gives to the sets $S \cup T$, $S \setminus T$, $T \setminus S$ and $\cal{N} \setminus (S \cup T)$ be $a_1, a_2, a_3$ and $a_4$, respectively. Let $B$ be an alternative that is Pareto optimal for ${\cal{N}} \setminus (S \cap T)$. Let the vectors of values that $B$ gives to the sets $S \cup T$, $S \setminus T$, $T \setminus S$ and $\cal{N} \setminus (S \cup T)$ be $b_1, b_2, b_3$ and $b_4$, respectively. We need to show the existence of an alternative $C$ for which the respective vectors of values are $a_1, b_2, b_3, b_4$.

Let $A_S$ be an alternative that is Pareto optimal for $S$ and for every agent in $S$ offers at least as much value as $A$ does. Such an alternative must exist for the following reason. If $A$ is Pareto optimal for $S$, then take $A_S = A$. If $A$ is not Pareto optimal for $S$, then there must be an alternative $A'$ that dominates $A$ for all agents in $S$. Continue the argument with $A'$.

Likewise, let $B_{\bar{S}}$ be an alternative that is Pareto optimal for $\bar{S}$ and for every agent in $\bar{S}$ offers at least as much value as $B$ does. Such an alternative must exist as well. By the fact that $S$ is a component, there must exist an alternative $C_S$ whose vector of values dominates $a_1, a_2, b_3, b_4$.

Repeating the above argument with $T$ instead of $S$, there must exist an alternative $C_T$ whose vector of values dominates $a_1, b_2, a_3, b_4$.

Now take an alternative $A'_T$ that is Pareto optimal for $T$ and for every agent in $T$ offers at least as much value as $C_S$ does, and an alternative $B'_{\bar{T}}$ that is Pareto optimal for $\bar{T}$ and for every agent in $\bar{T}$ offers at least as much value as $C_T$ does. By the fact that $T$ is a component, there must be an alternative $C$ whose vector of values dominates $a_1, b_2, b_3, b_4$. In fact, it must equal $a_1, b_2, b_3, b_4$, because otherwise either $A$ was not Pareto optimal for $S \cap T$, or $B$ was not Pareto optimal for ${\cal{N}} \setminus (S \cap T)$. The existence of $C$ shows that $S \cap T$ is a component, as desired.
\end{proof}

\section{Decomposability versus separability}\label{app:separable}

Recall the notion of decomposable solutions and mechanisms defined in Section~\ref{sec:decompose}. This notion appears to be new, but there are other known concepts that bear superficial similarity to decomposability. The purpose of this section is to present one such concept, that of separability, and clarify the differences between separability and decomposability.

Separability (see for example~\cite{MoulinBook} and references therein) of profit-sharing games is an internal consistency property for profit-sharing mechanisms. Suppose that that there is a profit-sharing mechanism $M$ that given a set $\cal{N}$ of players, disagreement utilities $u_i$ for each $i \in {\cal{N}}$, and a total welfare $W \ge \sum_{i\in {\cal{N}}} u_i$ to be divided among the players, assigns shares $x_i \ge u_i$ of the welfare to the respective agents, satisfying $\sum_{i\in {\cal{N}}} x_i = W$. Then $M$ is {\em separable} if it has the property that for every $S \subset {\cal{N}}$, had $M$ been applied to an instance in which $S$ is the set of players, in which the same $u_i$ as above (for $i \in S$) are their disagreement utilities, and in which the total welfare to share is $\sum_{i\in S} x_i$, then $M$ would propose the same shares $x_i$ as above (for $i \in S$). A simple example of a separable mechanisms is the egalitarian mechanism that equalizes the profit (above disagreement point) of all agents.

A major difference between decomposability and separability is that decomposability (of a solution, or a mechanism) is a property that needs to hold only for decomposable instances, whereas separability needs to hold for all instances. Consequently, decomposable mechanisms need not be separable, and in fact our \lexmax\ mechanism is not separable. If $S\subset {\cal{N}}$ is not a component, then the way the part of the welfare allocated to a subset $S$ of players is distributed among the players of $S$ is affected by the way the remaining welfare is distributed among members of ${\cal{N}} \setminus S$, because the anticore constraints involve constraints on sets $T$ of players that intersect both $S$ and ${\cal{N}} \setminus S$, and these constraints affect which share allocations within $S$ are feasible. Likewise, separable mechanisms need not be decomposable (and in general are not). For example, the egalitarian profit-sharing mechanism (equalizing $x_i - u_i$) is separable, but fails to satisfy even the weak decomposability property.

A property that is related to separability and can take into account the anticore constraints is {\em consistency for reduced games}. This property is satisfied by a solution concept referred to as the Nucleolus. This property can coexist with decomposability and can be incorporated into our solution concept by changing the rule that selects a solution from the WS-core, but we prefer not to enforce this property for reasons explained in Section~\ref{app:nucleolus}.

% See Section~\ref{app:nucleolus} for a discussion.

\section{More on the anticore}\label{app:anticore}

The following  property of the anticore plays a central role in decomposability aspects.

\begin{proposition}
\label{pro:upperconstraintsdecompose}
For a given decomposable instance, let $T_1, \ldots T_t$ be a partition of the set $\cal{N}$ of agents into components (in the sense of Definition~\ref{def:decompose}). Then a solution $(A,p)$, composed of a chosen alternative $A$ and vector $p$ of transfers, satisfies the anticore constraints ($u_S(A,p) \le W_{max}(S)$ for every set $S \subseteq{\cal{N}}$) if and only if it satisfies them for every set $S$ that is fully contained in a component (namely, $S \subseteq T_j$ for some $1 \le j \le t$).
\end{proposition}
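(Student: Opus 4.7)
The ``only if'' direction is immediate: any anticore constraint valid for all $S \subseteq {\cal{N}}$ is in particular valid for those $S$ contained in a single component $T_j$. The plan is therefore to prove the ``if'' direction, reducing it to an additivity property of $W_{max}$ along the component partition, namely the identity
\[
W_{max}(S) \;=\; \sum_{j=1}^t W_{max}(S \cap T_j) \quad\text{for every } S \subseteq {\cal{N}},
\]
and then combining this with linearity of the utility $u_S(A,p) = \sum_{i \in S} u_i(A,p)$.

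Once the identity is in hand, the proposition follows easily: writing $S_j = S \cap T_j$, by linearity $u_S(A,p) = \sum_j u_{S_j}(A,p)$; each $S_j$ lies inside a component, so by the hypothesis $u_{S_j}(A,p) \le W_{max}(S_j)$; summing and invoking the identity gives $u_S(A,p) \le \sum_j W_{max}(S_j) = W_{max}(S)$, which is the anticore constraint for $S$.

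The heart of the argument is the two-set version of the identity, which I will state as a lemma: if $T$ is a component of ${\cal{N}}$, then $W_{max}(S) = W_{max}(S \cap T) + W_{max}(S \setminus T)$ for every $S \subseteq {\cal{N}}$. The $\le$ direction is immediate, since for any alternative $A$ the sum $\sum_{i\in S} v_i(A)$ splits into a piece over $S \cap T$ and a piece over $S \setminus T$, each at most the corresponding $W_{max}$ value. For the $\ge$ direction, pick alternatives $A'$ achieving $W_{max}(S \cap T)$ and $A''$ achieving $W_{max}(S \setminus T)$; a standard Pareto-improvement argument lets us assume $A'$ is Pareto-optimal for $T$ (without decreasing the value to $S \cap T$) and $A''$ is Pareto-optimal for $\bar{T}$. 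Then Definition~\ref{def:decompose}, applied to the component $T$, supplies an alternative $C$ that matches $A'$ on $T$ and $A''$ on $\bar{T}$, so $\sum_{i \in S} v_i(C) = W_{max}(S \cap T) + W_{max}(S \setminus T)$.

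To obtain the full identity, I will iterate the lemma: peel off $T_1$ to write $W_{max}(S) = W_{max}(S \cap T_1) + W_{max}(S \setminus T_1)$; then apply the lemma to the second summand with $T = T_2$ (which splits it as $W_{max}(S \cap T_2) + W_{max}(S \setminus (T_1 \cup T_2))$, using the disjointness of the $T_j$); and continue for $T_3, \ldots, T_t$. The potentially subtle point is that each invocation requires $T_j$ to be a component of the entire instance ${\cal{N}}$, not merely of some restricted sub-instance; but this is automatic, since the $T_j$'s are given as components of ${\cal{N}}$ by hypothesis, so the lemma applies directly at every step.
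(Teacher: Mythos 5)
Your proof is correct, and at its core it relies on the same mechanism as the paper's: using Definition~\ref{def:decompose} to merge alternatives that are good for different components into a single alternative. The packaging differs, though. The paper argues the ``if'' direction by contradiction directly on the utilities: from the per-component constraints it extracts alternatives $A_j$ with $u_{S\cap T_j}(A,p) \le u_{S\cap T_j}(A_j)$, merges them into one $A^*$, and concludes $u_S(A,p) \le u_S(A^*) \le W_{max}(S)$. You instead isolate a clean combinatorial identity, $W_{max}(S) = \sum_j W_{max}(S\cap T_j)$, and then the proposition is a one-line consequence of summing the per-component constraints. Your route has two advantages in rigor: you carry out explicitly the iteration needed to merge alternatives across more than two components (the paper's proof asserts the existence of $A^*$ in one step, even though Definition~\ref{def:decompose} only speaks of a component and its complement), and you explicitly upgrade the maximizing alternatives to Pareto-optimal ones before invoking the definition (which requires Pareto optimality), noting that this upgrade cannot decrease the value to $S\cap T$ since $S\cap T \subseteq T$. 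The paper's proof is shorter; yours makes the additivity of $W_{max}$ along the component partition explicit, which is a reusable fact in its own right. Both are valid.
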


\begin{proof}
The ``only if' direction is a triviality. For the ``if" direction, assume for the sake of contradiction that there is a set $S$ such that $u_S(A,p) > W_{max}(S)$, but for every $1 \le j \le t$ it holds that $u_{S\cap T_j}(A,p) \le W_{max}(S\cap T_j)$. Then for every $j$ there is an alternative $A_j$ such that $u_{S\cap T_j}(A,p) \le u_{S \cap T_j}(A_j)$, and then Definition~\ref{def:decompose} implies that there is some alternative $A*$ such that $u_{S\cap T_j}(A,p) \le u_{S \cap T_j}(A^*)$ simultaneously for all $j$. Consequently, $u_S(A,p) \le u_S(A^*) \le W_{max}(S)$, which contradicts our assumption.
\end{proof}

\section{Properties of the WS-core}

\subsection{Proof of Theorem~\ref{thm:BS}}
%\subsection{Proof of Theorem~\ref{thm:BS}}
\label{sec:main-thm-proof}

Recall that the welfare-sharing (WS) core contains those solutions that are in the anticore and dominate the reference point. We now turn to prove Theorem~\ref{thm:BS}, that the WS-core is nonempty when either $W_{max}$ is sobmodular, or $W_{max} - W_{\pi}$ is monotone. Our proof is based on the well known Bondareva-Shapley theorem~\cite{Bondareva63,Shapley67}, though our setting involves some subtleties that might be overlooked in attempts to directly apply the Bondareva-Shapley theorem.

We first recall some standard terminology.
A set function $f$ is {\em additive} if for every set $S$, $f(S) = \sum_{i\in S} f(i)$. A set function $f$ is {\em submodular} if for every two sets $S$ and $T$ it holds that $f(S) + f(T) \ge f(S \cap T) + f(S \cup T)$. Equivalently, $f$ is submodular if it has the decreasing marginal returns property:  for every item $i$ and two sets $S \subset T$ it holds that $f(S\cup\{i\}) - f(S)  \ge f(T \cup \{i\}) - f(T)$.

A collection of sets $T_1, \ldots, T_k$ and nonnegative coefficients $\lambda_1, \ldots, \lambda_k$ is said to be a {\em fractional cover} for set $S$ if for every item $i \in S$ it holds that $\sum_{j \; | \; i \in T_j} \lambda_j \ge 1$. This fractional cover for $S$ will be referred to as {\em proper} if furthermore $T_i \subset S$ for every $1 \le i \le k$.
%It is an {\em exact fractional cover} if for every item $i \in S$ it holds that $\sum_{j \; | \; i \in T_j} \lambda_j = 1$.
A set function $f$ is (proper) {\em fractionally subadditive} if for every $S$, whenever $T_1, \ldots, T_k$ and $\lambda_1, \ldots, \lambda_k$ form a (proper, respectively) fractional cover for $S$, the inequality $\sum_{j = 1}^k \lambda_j f(T_j) \ge f(S)$ holds.
%A set function $f$ is {\em fractionally superadditive} if its negation $-f$ is a fractionally subadditive set function.

%Functions that are fractional superadditive rather than fractional subadditive are often referred to as {\em totally balanced}. A function is referred to as {\em balanced} if somewhat relaxed conditions hold:  $\sum_{j = 1}^k \lambda_j f(T_j) \le f(S)$ is required to hold only when $S = {\cal{N}}$, and moreover, only when $T_1, \ldots, T_k$ with the coefficients $\lambda_1, \ldots, \lambda_k$ are an exact fractional cover of ${\cal{N}}$.

%The Bondareva-Shapley theorem is as follows:

%\begin{theorem}
%\label{thm:BShapley}
%The imputation core is nonempty if and only if the characteristic function $f$ is balanced.
%\end{theorem}

A set function $f$ is {\em XOS}~\cite{LLN} if for some $t$ there are additive set functions $g_1, \ldots, g_t$ such that for every set $S$, $f(S) = \max_{j=1}^t g_j(S)$.  Observe that by definition, the function $W_{max}$ belongs to the class XOS.

Given $\pi = \pi_v$ (the previously defined reference point), the set function $W_{\pi}(S) = \sum_{A \in {\cal{A}}} \pi(A) \sum_{i \in S} v_i(A)$ describes the expected total utility that a set of agents receives from the reference point.
Observe that $W_{\pi}$ is an additive set function. We now define a characteristic function $f$ as $f(S) = W_{max}(S) - W_{\pi}(S)$. This function belongs to the class XOS (because $W_{max}$ is in XOS and $W_{\pi}$ is additive) and is nonnegative.

It was shown in~\cite{feige} that for nonnegative monotone set functions, the class of XOS functions is the same as the class of fractionally subadditive functions. However, this correspondence need not apply to the XOS function $f$ defined above, because $f$ need not be monotone. The example provided after Definition~\ref{def:WS}   can serve to illustrate this difficulty. For this reason Theorem~\ref{thm:BS} considers two special cases. In the second of these special cases $f(S)$ is indeed monotone, and hence as explained above, it is fractionally subadditive (which implies that it is also proper fractionally subadditive). In the first special case, $W_{max}$ is submodular, and for this case we can use the following proposition.

\begin{proposition}
\label{pro:proper}
If $f$ is a nonnegative submodular function then it is proper fractionally subadditive.
%defined over the set ${\cal{N}}$. Then $f$ (even if it is not monotone) satisfies those fractional subadditive constraints in the special case in each each set $T_i$ that participates in a fractional cover of a set $S$ is required to be a subset of $S$ ($T_i \subset S$).
\end{proposition}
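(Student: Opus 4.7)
The plan is to prove this via the Lov\'asz extension $\hat{f}:\mathbb{R}_{\geq 0}^{\cal{N}} \to \mathbb{R}$ of $f$, relying on three of its standard properties: $\hat{f}(\mathbf{1}_T)=f(T)$ for every $T$, $\hat{f}$ is positively homogeneous, and $\hat{f}$ is convex because $f$ is submodular. (Here $\mathbf{1}_T$ denotes the $0/1$ indicator vector of $T$.) Without loss of generality I would assume $f(\emptyset)=0$ and that every $T_j$ is nonempty.

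Given a proper fractional cover of $S$ by subsets $T_1,\ldots,T_k \subseteq S$ with nonnegative weights $\lambda_1,\ldots,\lambda_k$, I would form the nonnegative vector $x := \sum_j \lambda_j \mathbf{1}_{T_j}$. Writing $\Lambda := \sum_j \lambda_j$, applying convexity of $\hat{f}$ to the convex combination $x/\Lambda = \sum_j (\lambda_j/\Lambda)\,\mathbf{1}_{T_j}$, and then scaling back by positive homogeneity, gives the upper bound
\[
\hat{f}(x) \;\leq\; \sum_j \lambda_j\, \hat{f}(\mathbf{1}_{T_j}) \;=\; \sum_j \lambda_j\, f(T_j).
\]
It will therefore suffice to establish the matching lower bound $\hat{f}(x) \geq f(S)$.

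For the lower bound I would unpack the chain formula for $\hat{f}(x)$: sort the coordinates of $x$ in decreasing order as $x_{\pi(1)} \geq \ldots \geq x_{\pi(n)} \geq 0$, let $S_i := \{\pi(1),\ldots,\pi(i)\}$, and recall that
\[
\hat{f}(x) \;=\; \sum_{i=1}^{n} \bigl(x_{\pi(i)} - x_{\pi(i+1)}\bigr)\, f(S_i),
\]
with the convention $x_{\pi(n+1)} := 0$. Because the cover is \emph{proper}, $x_i = 0$ for $i \notin S$ while $x_i \geq 1$ for $i \in S$, so all $|S|$ coordinates inside $S$ are ranked before any coordinate outside. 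In particular $S_{|S|} = S$ and the coefficient $x_{\pi(|S|)} - x_{\pi(|S|+1)} = x_{\pi(|S|)}$ is at least $1$, so this single term already contributes at least $f(S)$ to the sum; every other term combines a nonnegative gap $x_{\pi(i)} - x_{\pi(i+1)} \geq 0$ with a nonnegative value $f(S_i) \geq 0$ (this is where nonnegativity of $f$ is used). Hence $\hat{f}(x) \geq f(S)$, and chaining with the upper bound finishes the proof.

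The only delicate step, and the main conceptual obstacle, is the appeal to properness: if the $T_j$ were allowed to contain elements outside $S$, the chain $S_0 \subset S_1 \subset \ldots \subset S_n$ would overshoot $S$, the term isolating $f(S)$ would no longer appear with coefficient $\geq 1$, and the lower bound would fail in general. This matches the known fact that non-monotone submodular functions need not be fractionally subadditive once arbitrary (non-proper) covers are allowed, and it is precisely this subtlety that forces Theorem~\ref{thm:BS} to split into two cases.
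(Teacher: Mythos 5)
Your proof is correct, but it takes a genuinely different route from the paper's. The paper argues by minimal counterexample: it takes a smallest violating instance (first by $|S|$, then by $\sum_j \lambda_j$), uses nonnegativity of $f$ to scale the weights down until some element $j\in S$ has coverage exactly $c_j=1$, and then applies the decreasing-marginal-returns form of submodularity to delete $j$ from $S$ and from every $T_i$, producing a smaller counterexample. Your argument instead routes everything through the Lov\'asz extension: convexity plus positive homogeneity give $\hat{f}\bigl(\sum_j \lambda_j \mathbf{1}_{T_j}\bigr) \le \sum_j \lambda_j f(T_j)$, and the chain formula gives the lower bound $\hat{f}(x)\ge f(S)$ because properness forces $x$ to vanish off $S$ and to be at least $1$ on $S$, so the chain passes through $S$ itself with coefficient at least $1$ while nonnegativity of $f$ makes all other terms harmless. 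Both proofs use the two hypotheses in essentially parallel ways --- nonnegativity to control the ``extra'' mass (the paper when rescaling the $\lambda_j$, you when discarding the other chain terms) and properness to pin down $S$ (the paper via element deletion preserving $T_i\subset S$, you via the prefix structure of the sorted coordinates) --- and you correctly identify that properness is exactly what fails for non-monotone submodular functions under arbitrary covers. The trade-off is that the paper's induction is elementary and self-contained, whereas your proof is shorter and more conceptual but imports Lov\'asz's theorem that submodularity is equivalent to convexity of the extension (together with the normalization $f(\emptyset)=0$, which you rightly note is without loss of generality here). Both are valid; if you wanted your version to be self-contained you would need to include a proof of the convexity of $\hat{f}$, which is comparable in length to the paper's whole argument.
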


\begin{proof}
Suppose for the sake of contradiction that the proposition is false. Then there is a counter example: a set $S$, a fractional cover for $S$ composed of sets $\{T_i\}$ with nonnegative coefficients $\{\lambda_i\}$, such that
$\sum \lambda_i f(T_i) < f(S)$, and $T_i \subset S$ for all $i$ (this is the property of being a {\em proper} fractional cover). Consider the smallest such counter example, with smallest $|S|$, and conditioned on $|S|$, with smallest $\sum \lambda_i$. For each item $j \in S$, let $c_j = \sum_{i | j \in T_i} \lambda_i$. By virtue of being a fractional cover we have that $c_j \ge 1$ for all $j \in S$.

We claim that there must be at least one item $j$ in $S$ for which $c_j = 1$. This holds because otherwise we could divide all $\lambda_i$ by $\min_j c_j$ and obtain a smaller counter example ($\sum \lambda_i$ would decrease, we will still have a fractional cover, and $\sum \lambda_i f(T_i)$ will decrease because all terms are nonnegative).

Given that there is an item $j \in S$ with $c_j = 1$, we may remove $j$ both from $S$ and from all sets $T_i$, and we remain with a proper fractional cover. The condition $\sum \lambda_i f(T_i) < f(S)$ together with the decreasing marginal returns property imply that  $\sum \lambda_i f(T_i - \{j\}) < f(S - \{j\})$. Hence we have a smaller counter example, which is a contradiction.
\end{proof}

%Recall from Section~\ref{sec:uppercore} that we may define a characteristic function $D(S) = W_{max}({\cal{N}}) - W_{max}({\cal{N}} \setminus S)$. Then we may replace the uppercore constraints by constraints $u_S(A^*,p) \ge D(S)$. As $D(S)$ is balanced, Theorem~{thm:BShapley} implies that the corresponding imputation core is nonempty, and hence welfare can be shared without violating the upper core constraints. However, these welfare shares need not be in the WS-core,  because the utilities that agents get might not satisfy the domination constraints.

%We wish to show that the WS-core is nonempty. Given that $W_{max}$ belongs to the class XOS, the Bondareva-Shapley theorem can be used in a black-box fashion to infer that the anticore is nonempty (a fact that we already knew). We need to further show that for every reference point $\pi$ there is a solution in the anticore that also dominates the reference point $\pi$.

\begin{corollary}
\label{cor:proper}
If the function $W_{max}$ is submodular, then the function $f(S) = W_{max}(S) - W_{\pi}(S)$ as defined above is nonnegative and proper fractionally subadditive.
\end{corollary}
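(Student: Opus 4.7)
The plan is to verify the two required properties of $f$ separately, and then invoke Proposition~\ref{pro:proper} as a blackbox. Both ingredients are straightforward, so this is essentially a bookkeeping step that packages the earlier work for the existence proof of Theorem~\ref{thm:BS}.

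First I would establish nonnegativity. For any set $S$, the definition gives $W_{max}(S) = \max_{A \in {\cal{A}}} \sum_{i \in S} v_i(A)$, while $W_{\pi}(S) = \sum_{A \in {\cal{A}}} \pi(A) \sum_{i \in S} v_i(A)$ is a convex combination (with weights $\pi(A)$) of the quantities $\sum_{i \in S} v_i(A)$ whose maximum equals $W_{max}(S)$. Hence $W_{\pi}(S) \le W_{max}(S)$, so $f(S) \ge 0$ for every $S$.

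Next I would argue submodularity of $f$. By hypothesis $W_{max}$ is submodular. The function $W_{\pi}$ is additive in the agents: writing $u_{\pi}(i) = \sum_{A} \pi(A) v_i(A)$, we have $W_{\pi}(S) = \sum_{i \in S} u_{\pi}(i)$, which is modular. Submodular minus modular is submodular (the inequality $W_{max}(S) + W_{max}(T) \ge W_{max}(S \cap T) + W_{max}(S \cup T)$ combined with the exact equality $W_{\pi}(S) + W_{\pi}(T) = W_{\pi}(S \cap T) + W_{\pi}(S \cup T)$ yields the submodular inequality for $f$). Therefore $f$ is nonnegative and submodular.

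Finally, since $f$ is nonnegative and submodular, Proposition~\ref{pro:proper} immediately gives that $f$ is proper fractionally subadditive, which is exactly the claim. There is no real obstacle here; the only thing to keep in mind is that the conclusion is about the \emph{proper} version of fractional subadditivity (so that one may restrict attention to covers with $T_i \subset S$), which is precisely the form delivered by Proposition~\ref{pro:proper} and which is what will be needed in the duality-based proof of Theorem~\ref{thm:BS}.
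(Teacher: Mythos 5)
Your proof is correct and follows essentially the same route as the paper's: nonnegativity because $W_{\pi}(S)$ is a convex combination of the welfares $\sum_{i\in S} v_i(A)$ and hence at most their maximum $W_{max}(S)$, submodularity because a submodular function minus an additive (modular) one is submodular, and then Proposition~\ref{pro:proper} to conclude proper fractional subadditivity. Nothing is missing.
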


\begin{proof}
As $W_{max}(S)$ is submodular and $W_{\pi}(S)$ is additive, their difference $f(S) = W_{max}(S) - W_{\pi}(S)$ is submodular. The function $f(S)$ is nonnegative because $W_{\pi}(S)$ is the expected welfare that $S$ derives from a distribution over the alternatives, whereas $W_{max}(S)$ is the welfare derived from the best alternative. The proper fractional subadditivity property follows from Proposition~\ref{pro:proper}.
\end{proof}

We are now ready to prove Theorem~\ref{thm:BS}.

\begin{proof}(of Theorem~\ref{thm:BS})
Observe that the theorem is equivalent to the statement that the cost-sharing core with respect to the characteristic function $f(S) = W_{max}(S) - W_{\pi}(S)$ contains a nonnegative cost-sharing vector $c = (c_1, \ldots, c_n)$, with $\sum_i c_i = f({\cal{N}})$. Being in the cost-sharing core with respect to $f$ implies being in the anticore
with respect to $W_{max}$, and being nonnegative implies domination over $W_{\pi}$.

Consider the {\em primal} linear program (LP) with variables $x_1, \ldots, x_n$ (which in a feasible solution will give the desired cost shares):
\medskip

{\bf Maximize} $\sum_{i \in {\cal{N}}}  x_i$ subject to:

\begin{itemize}

\item
$\sum_{i \in S} x_i \le f(S)$ for every set $S$

\item
$x_i \ge 0$ for every $i$

\end{itemize}

The dual to the above LP has variables $y_S$ for every set $S$:
\medskip

{\bf Minimize} $\sum_{S \subseteq {\cal{N}}} f(S) y_S$ subject to:

\begin{itemize}

\item
$\sum_{\{S \: | \; i \in S\}} y_S \ge 1$ for every $i$

\item
$y_S \ge 0$

\end{itemize}

The fact that $f$ is proper fractionally subadditive implies that the value of the dual is at least $f({\cal{N}})$. Being a minimization problem, the value of the dual is in fact exactly $f({\cal{N}})$, by taking $y_{\cal{N}} = 1$, and $y_S = 0$ for $S \not= {\cal{N}}$. Hence the primal LP is feasible and has value $f({\cal{N}})$, as desired.
\end{proof}

\subsection{Proof of Proposition~\ref{pro:unique}}
\label{app:WS-core}

We next restate and prove Proposition~\ref{pro:unique}.

\begin{proposition}
%	\label{pro:unique}
	When the WS-core is nonempty:
	\begin{enumerate}
		\item
		The min-square solution exists and is unique (in terms of the utility that it offers each agent).
		\item The lexicographically-maximal solution exists and is unique.
		\item The min-square solution and the lexicographically-maximal solution need not coincide.
		\item A Lorenz dominating solution need not exist.
		\item If a Lorenz dominating solution exists, it is unique, and moreover, it coincides both with the lexicographically-maximal solution and with the min-square solution.
	\end{enumerate}
\end{proposition}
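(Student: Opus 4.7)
All five parts rest on a common structural observation and one technical lemma. Structurally, whenever it is nonempty the WS-core is a compact convex polytope inside the hyperplane $\sum_i u_i = W_{\max}(\mathcal{N})$, carved out by finitely many linear inequalities (domination from below, anticore from above). On such a set strict convexity of $u \mapsto \sum_i u_i^2$ handles min-square, while the Hardy--Littlewood--P\'olya majorization machinery handles the Lorenz and lex comparisons. The lemma I would prove first is an \emph{averaging lemma}: if $u \neq v$ have the same sorted vector $\hat u = \hat v$, then $w = (u+v)/2$ strictly Lorenz-dominates $u$, i.e.\ $\sum_{i \le k} \hat w_i \ge \sum_{i \le k} \hat u_i$ for every $k$ with strict inequality for some $k$. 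This says that a non-trivial doubly-stochastic mixing of a sorted vector with itself strictly majorizes it, and falls out of Birkhoff--von Neumann; the strictness clause, which is where $u \neq v$ is used, is the delicate part.

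Part~(1) is then immediate: $\sum u_i^2$ is continuous and strictly convex on a compact convex set, so it has a unique minimizer. For Part~(2), existence of a lex-max solution follows by iteratively maximizing $\hat u_1$, then $\hat u_2$, and so on, each round restricting to the (still compact and convex) face of previous optima; the procedure terminates in at most $n$ rounds. For uniqueness, two lex-max solutions $u,v$ must share a sorted vector by mutual $\ge_{Lex}$; if $u \neq v$, then $w = (u+v)/2$ lies in the WS-core by convexity and strictly Lorenz-dominates $u$ by the averaging lemma. Strict Lorenz dominance implies strict lex dominance: if $k$ is the smallest index at which the partial sums differ strictly, equality of the earlier partial sums forces $\hat w_i = \hat u_i$ for $i < k$ and $\hat w_k > \hat u_k$, so $w >_{Lex} u$, contradicting lex-maxness of $u$.

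For Parts~(3) and~(4) I would exhibit one small instance witnessing both negative statements. Take four agents with disagreement utilities all~$0$, $W_{\max}(\mathcal{N}) = 10$, the non-slack anticore constraints $W_{\max}(\{1,2\}) = W_{\max}(\{1,3\}) = 3$, and every other anticore constraint slack, so the WS-core is $\{u \ge 0 : \sum_i u_i = 10,\ u_1+u_2 \le 3,\ u_1+u_3 \le 3\}$. A short quadratic-programming calculation (the symmetry between agents~2 and~3 forces both constraints tight) gives the min-square solution $(\tfrac12,\tfrac52,\tfrac52,\tfrac92)$, while water-filling gives the lex-max solution $(\tfrac32,\tfrac32,\tfrac32,\tfrac{11}{2})$. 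These differ, proving Part~(3); their sorted partial-sum vectors are Lorenz-incomparable ($3/2 > 1/2$ at $k=1$, $9/2 < 11/2$ at $k=3$), so no member of the WS-core can simultaneously Lorenz-dominate both of them, proving Part~(4).

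For Part~(5), let $u^\star$ be Lorenz-dominating. Uniqueness: a second Lorenz dominator $v^\star$ shares $u^\star$'s sorted vector (mutual Lorenz dominance yields equal partial sums, hence equal sorted entries), so by the averaging lemma $(u^\star+v^\star)/2 \in$ WS-core strictly Lorenz-dominates $u^\star$ unless $u^\star = v^\star$, contradicting Lorenz-dominance of $u^\star$. Coincidence with lex-max follows from ``strict Lorenz $\Rightarrow$ strict lex'' as in Part~(2). Coincidence with min-square uses Schur-convexity of $u \mapsto \sum u_i^2$: Lorenz dominance of $u^\star$ over every $v$ in the WS-core gives $\sum (u^\star_i)^2 \le \sum v_i^2$ for all such $v$, so $u^\star$ attains the minimum, and Part~(1) then forces it to equal the min-square. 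The main technical hurdle throughout is the strictness of the averaging lemma; once that is in hand, everything else is bookkeeping between the Lorenz, lex, and Schur-convex pictures.
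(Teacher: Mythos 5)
Your treatment of parts (1), (2) and (5) is correct and follows essentially the same route as the paper: strict convexity of $\sum_i u_i^2$ on the compact convex WS-core for (1), the midpoint argument for uniqueness in (2), and the standard relations between the Lorenz order, the lex order and Schur-convexity for (5). Your explicit averaging lemma is a useful expansion of the paper's ``it is not difficult to see'' step, and its strictness clause is less delicate than you fear: if $u \neq v$ and $\hat u = \hat v$, then strict convexity gives $\sum_i w_i^2 < \frac{1}{2}\left(\sum_i u_i^2 + \sum_i v_i^2\right) = \sum_i u_i^2$, so $\hat w \neq \hat u$, and weak majorization combined with $\hat w \neq \hat u$ already forces a strict partial-sum inequality at some $k$.

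For parts (3) and (4) you depart from the paper, which exhibits a concrete instance (six agents, four alternatives, uniform disagreement distribution) and reads off both solutions from it. Your computations inside the posited polytope are correct (min-square $(\frac12,\frac52,\frac52,\frac92)$, lex-max $(\frac32,\frac32,\frac32,\frac{11}{2})$), but there are two gaps. First, the proposition concerns WS-cores of actual instances, so the constraint pattern must be realized by valuations: $W_{max}(S)=\max_{A}\sum_{i\in S}v_i(A)$ is an XOS function, and you need alternatives that make $W_{max}(\{1,2\})=W_{max}(\{1,3\})=3$ exactly while keeping every other anticore constraint slack at both candidate points and the grand-coalition welfare equal to $10$. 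This is achievable (for each other set $S$ add an alternative generous to $S$ and very negative elsewhere, checking it never pushes $v_1+v_2$ or $v_1+v_3$ above $3$ nor the total above $10$), but it is a genuine verification you have omitted. Second, the inference ``the two points are Lorenz-incomparable, hence no member of the WS-core Lorenz-dominates both'' is invalid as a general principle: two incomparable points can have a common Lorenz dominator (e.g.\ $(0,1.4,1.6)$ and $(0.5,0.5,2)$ are incomparable, yet both are dominated by $(1,1,1)$ in the simplex). The correct derivation of (4) is via (5), which you prove independently: a Lorenz-dominating solution would have to coincide with both the lex-max and the min-square solutions, and by (3) these differ; alternatively, note that any Lorenz dominator must equal the lex-max point (else it would strictly lex-dominate it), and your lex-max point fails to dominate your min-square point at $k=3$.
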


\begin{proof}
	
	\begin{enumerate}
		\item
		The WS-core is a bounded polytope (with constraints $u \ge 0$, $\sum_{i} u_i = W_{max}({\cal{N}})$, and the anticore constraints). The function $\sum_{i \in {\cal{N}}} (u_i)^2$ is strictly convex. Hence it has a unique minimum in the WS-core.
		\item
		Assume for the sake of contradiction that there are two lexicographically maximal solutions $u$ and $u'$ with $u \not= u'$. %Then $\hat{u} = \hat{u'}$.
As the WS-core is a convex set, it holds that also $u" = \frac{u + u'}{2}$ is in the WS-core. It is not difficult to see that either $u" >_{Lex} u$ or $u" >_{Lex} u'$ (with a strict inequality). This contradicts the assumptions that both $u$ and $u'$ are lexicographically maximal.
		
		\item
		To see that the min-square solution and the lexicographically-maximal solution need not coincide, consider the following example with four alternatives and six players:
		\medskip

		\begin{tabular}{ccccccc}
			Example2 & Agent 1 & Agent 2 & Agent 3 & Agent 4 & Agent 5 & Agent 6 \\
			Alternative A & 1 & 1 & 1 & 1 & 3 & 3 \\
			Alternative B & 0 & 2 & 2 & 2 & 2 & 2 \\
			Alternative C & -1 & -1 & -1 & -1 & 4 & -9 \\
			Alternative D & 0 & -2 & -2 & -2 & -9 & 4
		\end{tabular}
		\medskip
		
		The disagreement distribution $\pi$ is uniform over the four alternatives. Under this distribution, the expected utility of each agent is~0. (In the above example $\pi$ is supported on Pareto-optimal alternatives. If this is not required, we may replace Alternatives C and D by a single disagreement alternative that gives utility~0 for every agent, thus simplifying the example.) Any of the first two alternatives maximizes welfare, giving a welfare of~10. An egalitarian distribution of the welfare will give each agent utility 5/3, but this is not in the anticore because it violates the constraint for Agent~1 (whose utility is upper bounded by~1). The lexicographically maximal solution is to choose Alternative A and have no transfers. This gives Agent 1 his maximum possible value of~1, and then conditioned on that, it gives each of Agents~2, 3 and~4 their maximum utility of~1 (because agent~1 and agent~$j \in \{2,3,4\}$ combined are not allowed to have utility above~2). However, this is not a min-square solution: choosing Alternative B with no transfers gives a lower value for the sum of squares (20 instead of 22).
		
		\item In the above example, there is no Lorenz dominating solution, because the lexicographically maximal solution (utilities as in Alternative A) does not Lorenz dominate the solution corresponding to the utilities under Alternative B.
		
		\item If follows essentially by definition that $x \ge_{Lor} y$ implies that also $x \ge_{Lex} y$. Hence Lorenz domination implies being lexicographically maximal. To obtain the Lorenz dominating vector of utilities from any other vector of utilities (that maximizes welfare and is in the WS-core), one needs to shift utility from coordinates of higher utility to coordinates of lower utility, by this lowering the sum of the squares of the utilities. This shows that a Lorenz dominating solution is also a min-square solution.
	\end{enumerate}
\end{proof}

%\section{The \lexmax\ solution}

%\input{water-filling-proof}
\section{Proof of Theorem \ref{thm:DR89} -- Lorenz domination}\label{app:proof-thm-dr89}
%\subsection{Proof of Theorem \ref{thm:DR89} -- Lorenz domination}\label{app:proof-thm-dr89}

In this section we prove Theorem \ref{thm:DR89}: If $W_{max}$ is submodular, then the \lexmaxws\ solution (which is in the WS-core) % WS-core has a solution that
Lorenz-dominates all other  solutions in the WS-core.

\begin{proof}
Let  $f$ be the characteristic function associated with the WS-core. Namely, $f = W_{max}$, when the valuation function of each agent is such that the expected value of the disagreement point is~0 (recall that this can be enforced by applying an additive shift to the valuation functions). Recall that $f$ is nonnegative, though it need not be monotone. By our assumption, $f$ is submodular. We need to show that the core of the corresponding cost-sharing game contains a solution that Lorenz-dominates all other  solutions in the core.

Let $x = (x_1,x_2, \ldots, x_n)$ denote the \lexmaxws\ solution, as found by the water filling algorithm. We now show that $x \ge_{Lor} y$ for every $y$ in the core. We need to sort the coordinates of $x$ by order of nondecreasing values (equivalently, sort the agents by nondecreasing utilities, above the disagreement point, under $x$). In this order the sets $S_i$ appear in the order $S_1, S_2, \ldots S_m$. More precisely, denoting $S_j \setminus \bigcup_{i < j} S_i$ by $T_j$ for every $j \ge 1$, the sets $T_i$ appear in order $T_1, T_2, \ldots, T_m$. For $y$, we may pick an order of our choice (not necessarily nondecreasing) on its coordinates. If in this arbitrary order we have $\sum_{j \le i} x_j \ge \sum_{j \le i} y_j$ for every $i$, then $x \ge_{Lor} y$. The order we choose for $y$ is also $T_1, T_2, \ldots T_m$, and within each $T_i$ we sort the coordinates by order of nondecreasing values. Observe that for every $N_i = \cup_{j \le i} T_i$ it holds that $x(N_i) \ge y(N_i)$, because by the proof of Proposition~\ref{pro:water} (replacing $\cal{N}$ by $N_i$ in the proof of the proposition), $x$ gives the set $N_i$ its maximum possible value given the constraints of $f$ (and $y$ has to be in the core). Consider now any intermediate coordinate $k$ within an interval $S_i$ (starting at $N_{i-1}$ and ending at $N_i$). At the left endpoint of the interval $S_i$ we have $x(N_{i-1}) \ge y(N_{i-1})$.  The $x$ values are constant within the set $S_i$ whereas the $y$ values are increasing. Hence if it would happen that $\sum_{j \le k} x_j \ge \sum_{j \le k} y_j$ fails to hold, this would lead to $x(N_i) < y(N_i)$, which would be a contradiction. Hence it must be that $x \ge_{Lor} y$.
\end{proof}

\section{Shared Rental (Unit-Demand Matching)}
\label{sec:rental}
A motivating example for our new solution concept is the \emph{\rentalprob}. %the \emph{rental harmony problem}.
In the \rentalprob\  $n\ge 2$ agents rent an apartment with $n$ bedrooms and jointly need to pay the rent $r$ and decide on the matching of rooms to the agents. We assume that they are already committed to rent the apartment\footnote{We will not handle the question of how they have entered this commitment and if it was rational for them to do so. }
and that each will first pay an equal share of the rent, that is, will pay $r/n$, and then they will need to decide on the matching of the agents to the rooms (each getting exactly one room), and the transfers between the agents to make the outcome ''fair''.
The \rentalprob\
% rental harmony problem
can be restated more abstractly as a matching problem for unit-demand agents which we describe next.

In the problem of \emph{matching with unit-demand agents}, the goal is to allocate a set $\cal{M}$ of $n\geq 2$ items to $n$ unit-demand agents, giving each of them one item.
Each unit-demand agent $i$ assigns a value $v_i(j)$ to each item $j\in \cal{M}$. We would like to match agents with items - each agent will receive exactly one item.\footnote{In our framework no items can be left unallocated.}
Mapping this to the general framework, the set of alternatives is the set of permutations over the $n$ items, and the value that an agent assigns to an alternative is his value for the item he receives in that permutation. That is, permutation $\sigma$ assigns item $\sigma(i)$ to agent $i$, and his value for $\sigma$ is $v_i(\sigma(i))$.

Note that we can indeed frame the \rentalprob\ %rental harmony problem
as a matching with unit-demand agents problem, although in the  \rentalprob\ %rental harmony
there is the additional component of paying the rent. The rent payment from each agent can be encoded by some shifted unit-demand valuations obtained from the original valuations by decreasing the valuation of each item\footnote{We make no assumption that the value of an item is necessarily non-negative. Note that even if that was the case before the shift, it might fail to hold after the shift.} by the rent amount $r/n$. Thus, the two problems are essentially equivalent and we will go back and forth between the two.

\subsection{The WS-core and the \lexmaxws\ solution}
\label{sec:ws-core-lexmax}

To define the welfare-sharing core (WS-core), we first discuss reasonable distributions that the agents might consider as their reference point and expect the solution to dominate. Such distributions naturally arise from some standard mechanisms  (without transfers) that may be considered natural, for example, for the \rentalprob.

\begin{enumerate}
	
	\item {\em Uniform} ($U$). The allocation $\pi$ is a permutation chosen uniformly at random, independent of $v$.
	
	\item {\em Random priority} ($RP$). One selects a random permutation over agents, and then the agents in turn each select one item from those remaining.
	
	\item {\em Eating} ($Eat$), also known as {\em probabilistic serial}~\cite{BM01}. 	Each item has unit volume. Each agent ``eats'' items at the same rate, starting at his most preferred item. Whenever an item is totally consumed, each agent eating it switches to his highest priority item that still has some volume left. When all items are consumed, we have a fractional allocation. It is decomposed into a weighted sum of integral allocations, and one of them is chosen (with probability proportional to its weight).
	
	% 	\item {\em Maximum welfare} ($MW$). Choose an allocation that maximizes welfare (breaking ties at random), with no transfers.
	
\end{enumerate}
It is easy to see that \emph{Uniform} might actually have Pareto dominated allocations in its support, so it seems less attractive to use.
In fact, it is shown in \cite{BM01} that \emph{Eating} Pareto dominated \emph{RP} which in turn Pareto dominates \emph{Uniform}.
As \emph{Uniform} might select an allocation that is Pareto dominated, we take  \emph{RP} as a natural reference mechanism (yet all our claims below will also hold for the \emph{Eating} mechanism). The disagreement point we take will be the expected utility of the output of the RP mechanism on the specific valuations.
We note that this utility is not continuous in valuations when the preference \emph{order} of some agent changes, but it is easy to see it is continuous (and even Lipschitz continuous with a small constant of 1) as long as the ordinal preferences remain the same.
Observe that we can again shift the unit-demand valuations by decreasing from each agent valuation his expected utility from the reference distribution, normalizing his utility in the new disagreement point to zero. We make this normalization assumption in the rest of the section.

To complete the definition of the welfare-sharing core, we consider the anticore constraints $W_{max}(\cdot)$ for these unit-demand valuations.
We observe that in the matching with unit-demand setting, $W_{max}(S)$, the maximal achievable welfare that a set $S\subseteq \cal{N}$ of agents can obtain, is the maximal weighted matching of the set $S$ to any subset $T\subseteq \cal{M}$ with $|T|=|S|$.
The set function $W_{max}(\cdot)$ (when the sets are sets of agents) satisfies the OXS property defined by \cite{LLN}.
They have proved that any such set function is submodular.
%\mbedit{Moreover, computing it for any set of agents is can be done in polynomial time as this is a weighted matching problem.}
We immediately get the following implications by Theorem \ref{thm:BS}, Theorem \ref{thm:DR89} and Proposition \ref{pro:unique}:

\begin{proposition}
	\label{prop:unit-demand-ours}
	For any unit-demand setting, the set function $W_{max}: \cal{N}\rightarrow \Re$ is submodular. Thus, within the WS-core there is a Lorenz dominating solution. This solution is unique, and moreover, it coincides both with the lexicographically-maximal solution (\lexmaxws) and with the min-square solution. % \mbedit{Moreover, this \lexmaxws\ solution can be computed in polynomial time.}
\end{proposition}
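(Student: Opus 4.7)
The proposition asserts three things: that $W_{max}$ is submodular on the unit-demand/matching instance, that the WS-core contains a Lorenz-dominating element, and that this element is the unique solution picked by both the \lexmaxws\ and the min-square rules. Since the paper already supplies Theorem~\ref{thm:BS} (submodularity of $W_{max}$ implies a non-empty WS-core), Theorem~\ref{thm:DR89} (submodularity implies the \lexmaxws\ solution Lorenz-dominates everything in the WS-core), Corollary~\ref{cor:submodular}, and Proposition~\ref{pro:unique}, the entire proof reduces to verifying the submodularity of $W_{max}$ in the unit-demand setting; after that, the three conclusions are obtained by merely chaining the cited results.

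So my plan is first to recall the combinatorial definition: for a subset $S\subseteq\mathcal{N}$, $W_{max}(S)$ is the value of a maximum-weight matching between $S$ and any size-$|S|$ subset of items $\mathcal{M}$, with edge weights $v_i(j)$. This function is exactly the OXS set function studied by Lehmann, Lehmann and Nisan~\cite{LLN}, and they prove that every OXS function is submodular. Invoking this result gives the submodularity claim directly. I would explicitly state that although $W_{max}$ may fail to be monotone (after the valuation shift that normalizes disagreement utilities to zero, some $v_i(j)$ may be negative), submodularity is preserved because the shift that centers disagreement utilities to $0$ only adds an additive (in agents) quantity to $W_{max}$, and additive shifts preserve submodularity.

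As a self-contained backup in case one prefers not to black-box \cite{LLN}, I would sketch the standard matching exchange argument: given $S\subseteq T\subseteq\mathcal{N}$ and $j\notin T$, take optimal matchings $\mu_S$ for $S$ and $\mu_{T\cup\{j\}}$ for $T\cup\{j\}$, and look at the symmetric difference $\mu_S\triangle\mu_{T\cup\{j\}}$ on the bipartite graph. This decomposes into alternating paths and cycles; the unique alternating path starting at $j$ yields a matching for $S\cup\{j\}$ whose weight is at least $W_{max}(S) + \bigl(W_{max}(T\cup\{j\}) - W_{max}(T)\bigr)$, which is the decreasing marginal returns inequality. This is the only step where real calculation is needed and is also where the one potential subtlety lies — namely ensuring that the path argument is valid when some edge weights are negative after normalization — but since one may always augment along a path only if it improves the value and otherwise stop earlier, the inequality goes through.

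Having established that $W_{max}$ is submodular, the remaining three assertions are immediate: Theorem~\ref{thm:BS}(1) yields the non-emptiness of the WS-core; Theorem~\ref{thm:DR89} yields that the \lexmaxws\ solution Lorenz-dominates every other element of the WS-core, proving existence of a Lorenz-dominating solution; and Proposition~\ref{pro:unique}(5) (equivalently, Corollary~\ref{cor:submodular}) then gives uniqueness and the coincidence with the min-square rule. The main obstacle, as noted, is the submodularity step; everything else is simply assembling the pre-proven machinery.
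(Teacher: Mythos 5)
Your proposal matches the paper's own argument: the paper likewise observes that $W_{max}(S)$ is the maximum-weight matching of $S$ into item subsets, notes that this set function is OXS in the sense of Lehmann, Lehmann and Nisan and hence submodular by their result, and then chains Theorem~\ref{thm:BS}, Theorem~\ref{thm:DR89} and Proposition~\ref{pro:unique} exactly as you do. Your added remarks (that the normalization shift only subtracts a modular function and so preserves submodularity, and the self-contained alternating-path exchange argument) are correct supplements but do not change the route.
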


So we see that all three above solutions exist and coincide for the unit-demand case, and recall that we call it the \lexmaxws\  solution.
We next observe that in the \rentalprob\ %rental harmony setting
this solution satisfies some desirable decomposability properties.
% Recall that we suggest picking this solution as the fair distribution of the welfare,  and will call the solution the min-square solution. \mbcomment{we have never named our solution. Do we want to name it?}

%\mbedit{% We note that the submodularity of unit-demand valuation is \emph{not} sufficient to prove that $w_{max}$ is submodular.
\begin{remark}
We note that unit-demand valuations are submodular, and $W_{max}$ is submodular for unit-demand valuations. Another case where $W_{max}$ is submodular is when the valuation functions of the agents are additive over the items (which is also a class of submodular valuation functions).
However, there are submodular valuations for which $W_{max}$ is not submodular. See Appendix \ref{app:other-comb} for more details.
\end{remark}
%ask whether $W_{max}$ is submodular for \emph{any} submodular valuations. We
%show that this conjecture is false,
%the submodularity of the valuations is \emph{not} sufficient to imply that $w_{max}$ is submodular,
%by presenting some submodular valuations for which $W_{max}$ fails to be submodular.
%}

\subsection{Decomposability}
\label{sec:decomp}

Recall the notion of decomposability from Definition~\ref{def:decompose}. This notion, when specialized to
of unit-demand matching instances (such as the Shared Rental problem) is equivalent to the definition below.

\begin{definition}
We say that a unit-demand matching instance $I = ({\cal{N}}, {\cal{M}}, v)$ is {\em decomposable} if there is some partition of $\cal{N}$ into $P_1, P_2, \ldots, P_t$ (with $t > 1$) and of $\cal{M} $ into $M_1, \ldots, M_t$, with the properties that $|P_{\ell}| = |M_{\ell}|$ for all $1 \le \ell \le t$, and that in every Pareto optimal allocation, in every part $P_{\ell}$ (referred to also as a {\em component}), each agent of $P_{\ell}$ receives an item from $M_{\ell}$.% higher than every item not in $M_{\ell}$.
\end{definition}

A sufficient condition for $P_1, P_2, \ldots, P_t$ and $M_1, \ldots, M_t$ to serve as a decomposition of an instance is that for every $1 \le \ell \le t$, every agent in $P_{\ell}$ prefers every item in $M_{\ell}$ over every item not in $M_{\ell}$.

Consider a large house with five rooms, three on the east wing and two on the west wing, and five renters. Hence for the room assignment problem, which is a {\em unit demand matching instance}, there are $5! = 120$ alternatives, one for each permutation. Assume that the five renters can be partitioned to two groups, the ``east group" with three renters and the ``west group" with two renters. The agents have the following preferences: each agent in the east group prefers any room on the east wing over any on the west wing,
and each agent on the west group prefers any room on the west over any on the east.
In such a case, given a mechanism that provides solutions to the room assignment problem, there are two natural options regarding how to use it. One is to apply the mechanism on the whole input instance. The other is to first assign each group of agents to its preferred wing, and thereafter apply the mechanism to each wing independently (one such subproblem has three agents and $3! = 6$ alternatives, the other  has two agents and $2! = 2$ alternatives), without further exchange of information or transfer of money between the two groups. If the mechanism enjoys the strong decomposability property (Definition~\ref{def:strongdecompose}) then every agent is indifferent regarding which of the two options is used, in the sense that both options give her the same utility. So the agents may as well decompose the instance.

Conversely, if the strong decomposability property fails, then for some agents the first option gives higher utility than the second, and for some other agents the second option gives higher utility than the first (this will necessarily happen for a budget balanced mechanism that maximizes welfare, because the sum of utilities in both options is the same). Hence it might be difficult to reach agreement among the agents regarding which option to choose. Likewise, had we started with two separate instances, one for the east group and one for the west group, with no group interested in rooms in the other wing, we would be faced with the question of whether or not to compose these two instances into one larger instance for the whole house, as this affects the distribution of welfare among the agents. The use of a strongly decomposable mechanisms eliminates the source of such conflicts.

For unit demand matching instances, decomposition goes beyond the weak and strong decomposition properties of Section~\ref{sec:decompose}. The added feature is that when an instance decomposes, then also each (Pareto optimal) alternative by itself can also be decomposed. For example, an allocation of rooms to agents in the above example can be decomposed into the allocation of the east wing rooms and the allocation of the west wing rooms. In these settings, the fact that a solution decomposes is not only a statement about the end result of the solution, but also about the physical procedure by which the solution can be obtained. We can indeed partition the players into disjoint groups, let each group solve its own subproblem with no communication with the other groups, and the concatenation of the separate solutions derived independently by each group gives back one welfare maximizing alternative (and a vector of budget balanced transfers).

Considering allocation mechanisms without transfers for unit-demand matching, we observe that \emph{Uniform} does not respect the component structure (agents need not get an item from their own part in the partition), whereas both \emph{RP} and \emph{Eating} are strongly decomposable.

When addressing strong decomposability of our \lexmaxws\ mechanism, we postulate that the reference point used by the \lexmaxws\ comes from a strongly decomposable reference mechanism (like \emph{RP}). The reference point for each part of the decomposition
%should \emph{not} be the fixed distribution that we get from running a reference mechanism  on the entire instance, but rather it should be
is the outcome of the execution of the reference mechanism on the corresponding sub-problem separately. The following Corollary is a special case of Proposition~\ref{pro:strongdecompose}.

\begin{corollary}
The \lexmaxws\ mechanism for unit-demand matching problems is strongly decomposable whenever the reference mechanism is strongly decomposable. In particular, this holds when the reference mechanism is either RP or Eating.
% and dominates the outcome of the reference mechanism on each part of the instance decomposition.
\end{corollary}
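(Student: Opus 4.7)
The plan is to derive the corollary as a direct instantiation of Proposition~\ref{pro:strongdecompose} and then separately verify that the two named reference mechanisms satisfy its hypothesis. First I would invoke Proposition~\ref{prop:unit-demand-ours}, which asserts that $W_{max}$ is submodular for every unit-demand matching instance (this is the $\mathrm{OXS} \subset$ submodular chain of~\cite{LLN}). With submodularity in hand, Proposition~\ref{pro:strongdecompose} applies verbatim: any mechanism that (i) outputs the \lexmaxws\ solution and (ii) derives its disagreement utilities from a strongly decomposable disagreement mechanism is itself strongly decomposable. This already establishes the first sentence of the corollary.

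It remains to check the ``in particular'' clause, namely that RP and Eating are themselves strongly decomposable reference mechanisms on unit-demand matching instances. Fix a decomposable instance with components $(P_1, M_1), \ldots, (P_t, M_t)$, so that $|P_\ell| = |M_\ell|$ and, as noted in Section~\ref{sec:decomp}, in every Pareto optimal allocation each agent in $P_\ell$ is matched to some item in $M_\ell$. The key observation I would extract from Definition~\ref{def:decompose} is that each agent in $P_\ell$ weakly prefers every item in $M_\ell$ over every item in ${\cal{M}} \setminus M_\ell$; otherwise one could construct a Pareto optimal allocation violating the component structure.

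For RP, I would argue as follows. Sample a uniform random permutation $\sigma$ over ${\cal{N}}$ and run serial dictatorship, breaking ties in favor of items in the agent's own component. When the turn of an agent $i \in P_\ell$ arrives, some item of $M_\ell$ is still available: agents from other components only pick items in their own components by the tie-breaking rule, so at most $|P_\ell| - 1$ items from $M_\ell$ have already been consumed by earlier agents of $P_\ell$. Hence $i$ selects an item of $M_\ell$, giving her the same utility she would receive had she been choosing only from $M_\ell$ in the induced order on $P_\ell$. Since the restriction of a uniform permutation of ${\cal{N}}$ to $P_\ell$ is a uniform permutation of $P_\ell$, the joint distribution of outcomes within $P_\ell$ coincides with that of running RP on the sub-instance $(P_\ell, M_\ell, v|_{P_\ell})$; in particular each agent's expected utility is preserved, which is precisely strong decomposability.

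An analogous argument handles Eating: each agent in $P_\ell$ starts eating at her most preferred item, necessarily an item of $M_\ell$, and continues to consume items of $M_\ell$ as long as any remains, since any item outside $M_\ell$ is only weakly preferred to items in $M_\ell$. Because the total volume of $M_\ell$ equals the total eating capacity $|P_\ell|$ of the agents of $P_\ell$, the items of $M_\ell$ are fully consumed by $P_\ell$ before any agent in $P_\ell$ would consider switching outside. The resulting fractional allocation, and hence its decomposition into integral allocations, agrees with that of Eating run on the sub-instance, so the expected utility of each $i \in P_\ell$ is unchanged. The main obstacle I anticipate is the bookkeeping around ties: decomposability is defined only via Pareto optimal alternatives, so preferences may include indifferences between $M_\ell$ and ${\cal{M}} \setminus M_\ell$, and the tie-breaking conventions for RP and Eating must be chosen (or shown to be irrelevant) so that the induced distribution on $P_\ell$ really matches the sub-instance's. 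Combining both parts yields the corollary.
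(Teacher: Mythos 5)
Your derivation of the corollary itself is exactly the paper's: Proposition~\ref{prop:unit-demand-ours} gives submodularity of $W_{max}$ for unit-demand instances, Proposition~\ref{pro:strongdecompose} then reduces everything to strong decomposability of the reference mechanism, and the paper disposes of RP and Eating with a one-line observation. The extra work you do to actually verify that observation, however, rests on a false ``key observation''. It is \emph{not} true that in a decomposable instance every agent of $P_\ell$ weakly prefers every item of $M_\ell$ to every item outside $M_\ell$; the paper explicitly presents that condition as \emph{sufficient} for decomposability, not as a consequence of it. Concretely, take $P_1=\{1,1'\}$, $M_1=\{a,a'\}$, $P_2=\{2\}$, $M_2=\{b\}$ with $v_1(a)=0$, $v_1(a')=10$, $v_1(b)=5$; $v_{1'}(a)=1$, $v_{1'}(a')=0$, $v_{1'}(b)=0$; $v_2(a)=v_2(a')=0$, $v_2(b)=100$. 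The unique Pareto optimal allocation gives $a'$ to agent $1$, $a$ to agent $1'$ and $b$ to agent $2$, so the instance is decomposable with this partition, yet agent $1$ strictly prefers $b\notin M_1$ to $a\in M_1$. Hence your argument that an agent of $P_\ell$, at her RP turn, picks inside $M_\ell$ because she prefers everything there, and the parallel Eating step (``continues to consume items of $M_\ell$ \ldots since any item outside $M_\ell$ is only weakly preferred''), do not go through as written.

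The conclusion survives, but it needs a different justification. For RP: every realization of serial dictatorship is Pareto optimal (with the paper's tie-handling), and by the definition of a decomposable instance every Pareto optimal allocation matches each agent of $P_\ell$ to an item of $M_\ell$; therefore at each agent's turn her globally most preferred \emph{available} item must already lie in $M_\ell$ (otherwise the realized allocation would violate the component structure), and it is then a fortiori her most preferred available item of $M_\ell$. Combined with your correct remark that the restriction of a uniform permutation of ${\cal{N}}$ to $P_\ell$ is uniform, this yields strong decomposability of RP without the false premise; an analogous repair via the efficiency of the Eating outcome is needed for Eating. The tie-breaking issue you flag at the end is real but secondary --- the gap above arises already for strict preferences.
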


%\begin{proof}
%	As $W_{max}$ is submodular, by Corollary \ref{cor:submodular} the \lexmaxws\ solution is the same as the unique solution that Lorenz-dominates all other  solutions in the WS-core. The \lexmaxws\ solution can be computed using the water filling algorithm of Section~\ref{sec:select} (or alternatively, by the algorithm in the proof of Theorem \ref{thm:DR89}). It remains to show that the outcome of this algorithm is fully decomposable.
	
%	Let $P_1, P_2, \ldots, P_t$ be the partition of $\cal{N}$ in a decomposable instance. Then $W_{max}$ decomposes into $W_{max}(S) = \sum_{i=1}^t W_{max}(S \cap P_i)$. Applying this decomposition of $W_{max}$ to the description of the water filling algorithm decomposes the algorithm into $t$ algorithms (of the same form) that run independently on the $t$ parts.  Hence the outcome for the whole instance is identical to the concatenation of outcomes for each of the parts, as desired. Further details are omitted.
%\end{proof}

\subsection{Comparison to other solutions}
\label{sec:compare-rental}

It will be insightful to compare our solution to two standard solutions from the literature: the % max-min
envy-free solution, and the Shapley solution (defined in Section \ref{sec:shapley}).
In this section we show that neither one of them satisfies all the properties we are after, even for the \rentalprob.
We also present a complete comparison of the solutions for the case that $n=2$ in Appendix \ref{app:2-agents}.

\subsubsection{Envy-free solutions}\label{sec:Envy-free solutions}

Recall that an allocation is envy free if no agent prefers some other agent's allocation and payment over her own. We have already seen in Section~\ref{sec:background} examples for unit demand matching in which every envy free solution is not in the anticore, does not  dominate \emph{RP}, and does not satisfy decomposability.
%To circumvent the issue of selection from multiple envy-free solutions, we will present (bad) instances with a unique envy-free solution.

%We say that some property (e.g., envy free) is satisfied by a mechanism, if for every instance the mechanism outputs a solution that satisfies the property.

%\begin{proposition}\label{prop:EF-main-problems}
%	Any mechanism for unit-demand matching that always picks an envy-free solution (if exists) does not
%	always output a solution in the anticore,  does not  dominate \emph{RP} (and hence does not dominate Eating either), and
%	does not satisfy decomposability.
%\end{proposition}
Additional related examples are provided by Proposition \ref{prop:EF-problems-app} in Appendix \ref{app:EF-prop}.
%The proof follows from Proposition \ref{prop:EF-problems-app} in Appendix \ref{app:EF-prop},
%demonstrating some stronger claims about properties violated by  envy-free solutions.
\shortversion{
In the proposition, for each property we present an instances % of the rental harmony problem
%in which in each instance,
with a unique envy-free solution and show that that solution violates the property.
While the proposition above will easily follow from the instances we present, these instances actually prove slightly stronger claims,
that also consider some combinations of properties, to get a better sense of the limitations of envy-free solutions. % See Appendix \ref{app:EF-prop}
% demonstrate some additional weaknesses of the envy-free solution.
\begin{proposition}\label{prop:EF-problems}
	There are instances of the unit-demand matching problem in which in each instance there is unique envy-free solution, and this solution fails to satisfy some desirable properties, as follows.
	%In each case below we present an instance of the rental harmony problem with a unique envy-free solution that:
	
	\begin{enumerate}
		\item It does not satisfy decomposability.
		
		\item It does not dominate the \emph{RP} allocation, and moreover, some agent $i$ exceeds his own upper bound on utility $W_{max}(i) = max_j v_i(j)$.
				
		\item It does dominate the \emph{RP} allocation and even the \emph{Eating} mechanism, but nevertheless, some  agent $i$ exceeds his own upper bound on utility $W_{max}(i)$.
		
		\item It is in the anticore but still fails to dominate \emph{RP}.
		
		\item It dominated \emph{RP} (and even \emph{Eating})  and no agent $i$ exceeds his own upper bound on utility $W_{max}(i)$, but is not in the anticore.
	\end{enumerate}
\end{proposition}

The proof of the proposition appears in Appendix \ref{app:EF-prop}.
}
In that appendix we also prove the following: 
%%We also present (see Appendix \ref{sec-PRO:RENT})
%Moreover, %in Appendix \ref{sec-PRO:RENT}
%we  present instances of the \rentalprob\ % rental harmony problem
% in which every envy-free solution ends up giving to some agent his most desired room \emph{and} paying him more than the rent he was charged!
%\shortversion{
\begin{proposition}
	\label{PRO:RENT}
	There are instances of the \rentalprob\ % rental harmony
	in which the valuation function of each player is nonnegative and sums up to the total rent (in particular, this is the setting studied in~\cite{GMPZ}), but nevertheless there is a player that in every envy-free solution both gets his most desired room and receives more money than his rent share.
	% There are instances of the \rentalprob\ % rental harmony  	with nonnegative valuation functions in which there is a player that in every envy-free allocation gets his most desired room and instead of contributing towards the rent, the player receives money.
\end{proposition}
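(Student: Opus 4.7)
The plan is to exhibit a concrete instance of the \rentalprob\ in the GMPZ setting (each player's valuation non-negative and summing to the total rent $r$) in which some player $P$ is assigned her uniquely most preferred room in every envy-free solution and, moreover, receives a transfer strictly exceeding the rent share $r/n$. The natural starting point is Example~$1$ from Section~\ref{sec:background}, whose valuations already sum to $r=1$ for each player, and whose analysis in the footnote there already extracts a non-trivial lower bound on $p_3$ via a chain of envy-free inequalities.

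First, I will arrange the valuations so that every Pareto-optimal allocation assigns player $P$ her most preferred room, which (since envy-free allocations are Pareto optimal) yields the ``most desired room'' half of the statement. As in Example~$1$, this is done by making $P$'s most preferred room valued near $r$ only by $P$ herself and very little by everyone else, so that the welfare-maximizing matching puts $P$ in that room in every Pareto-optimal allocation.

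Second, I will chain envy-free inequalities to lower-bound $p_P$. Following the footnote, the constraint ``player $k$ does not envy player $k-1$'' gives $p_{k-1} \le p_k + (v_k(k)-v_k(k-1))$, and cascading along a suitable path yields $p_k \le p_P + A_k$ for every $k\ne P$. Summing and using budget balance $\sum_i p_i = 0$ produces a lower bound of the form $p_P \ge -(\sum_{k\ne P} A_k)/n$. To push this strictly above $r/n$, the telescoped constant $\sum_k A_k$ must be strictly less than $-r$. This will be arranged by placing many ``competing'' players who each strongly prefer a single popular room but are forced (by slightly differing perturbations, as in Example~$1$) into second-choice rooms in the unique Pareto-optimal matching; each competitor then contributes a nearly $-r$-sized gap to the chain, so that a sufficiently long cascade makes the accumulated constant strictly smaller than $-r$.

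The main obstacle is the tension between (i) making each gap $\Delta_{k,k-1}=v_k(k)-v_k(k-1)$ as negative as possible, which requires each competitor to place almost all her valuation on a room other than her assigned one, and (ii) keeping the diagonal assignment Pareto-optimal (and the competitors' valuations non-negative and summing to $r$). In the three-player Example~$1$ these two requirements yield only $p_3 \ge (r-8\delta)/3$, which falls just short of $r/3=r/n$; to overcome this, I will use an enlarged instance with $n \ge 4$ players and cascading $k\delta$-perturbations of the competitor profile, so that the envy-free chain has enough links to accumulate a telescoped constant with magnitude exceeding $r$. Once such parameters are fixed, the verification that $P$ always receives her most preferred room (from Pareto optimality of envy-freeness) and that $p_P > r/n$ in every envy-free solution (from the telescoped chain plus budget balance) proceeds exactly as in the footnote of Section~\ref{sec:background}.
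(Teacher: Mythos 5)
Your high-level plan (a normalized instance, welfare/Pareto optimality to pin down $P$'s room, a telescoped system of envy constraints plus budget balance to lower-bound $p_P$) is the right template, but the specific construction you propose --- one popular room with a long cascade of competitors, each contributing a gap of nearly $-r$ --- cannot succeed, and the tension you flag in your last paragraph is fatal rather than surmountable by adding links. Summing the constraints ``$i$ does not envy $P$'' over all $i \ne P$ and using $\sum_i p_i = 0$ gives
\[
n\,p_P \;\le\; \sum_{i \ne P}\bigl(v_i(\sigma(i)) - v_i(\sigma(P))\bigr) \;\le\; \sum_{i \ne P} v_i(\sigma(i)),
\]
where $\sigma(i)$ is the room assigned to $i$. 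So $p_P > r/n$ \emph{requires} the other players to collectively value their own assigned rooms at strictly more than the total rent $r$. In your construction every competitor concentrates her valuation on the single popular room, and all but the one winner of that room are assigned rooms they value at nearly $0$; hence $\sum_{i\ne P} v_i(\sigma(i)) \approx r$ and $p_P \le r/n + o(1)$, no matter how long the cascade. The same obstruction shows up link by link: envy-freeness forces the chosen matching to maximize welfare, so if player $k$ values room $k-1$ at nearly $r$ and room $k$ at nearly $0$, the $(k-1,k)$ swap constraint together with normalization forces $v_{k-1}(k-1)$ to be nearly $r$, making the previous gap $\Delta_{k-1}$ nearly $+r$ instead of $-r$; two consecutive links cannot both be strongly negative. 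Concretely, for $n=4$ one checks that $\Delta_2 + 2\Delta_3 + 3\Delta_4 \ge -r$ always holds under these constraints, so your telescoped bound caps at exactly $r/4$ and never strictly exceeds it.

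The missing idea is parallelism, not length: you need at least two \emph{disjoint} popular rooms so that the sum $\sum_{i\ne P} v_i(\sigma(i))$ can exceed $r$. The paper's instance has five players and rent $1$: two identical pairs, each competing for its own popular room ($v_1=v_3=(1-\epsilon,0,\epsilon,0,0)$ and $v_2=v_4=(0,1-\epsilon,0,\epsilon,0)$), plus player $5$ with $v_5=(0,0,\tfrac{1-\epsilon}{3},\tfrac{1-\epsilon}{3},\tfrac{1+2\epsilon}{3})$. Within each pair the identical valuations force the popular room's transfer to be exactly $1-2\epsilon$ below the leftover room's, and player $5$'s envy constraints tie $p_5$ to within $\epsilon$ of \emph{both} leftover rooms; budget balance then yields $p_5 \ge (2-8\epsilon)/5 > 1/5 = r/n$ for $\epsilon < 1/8$. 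Note that there $v_1(\sigma(1))+v_2(\sigma(2)) \approx 2r$, which is precisely what the necessary condition above demands. If you rebuild your example with two (or more) disjoint competing pairs feeding into $P$, rather than one long chain, the rest of your argument goes through.
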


%The proof of the proposition appears in Appendix \ref{sec-PRO:RENT}.
%}
\subsubsection{The Shapley value solution}

The Shapley value solution is unique. The next proposition lists some of its properties, see Appendix \ref{app:Shapley-prop} for the proof.
%We next examine it properties. % the properties of the Shapley value solution. %  \mbcomment{point to the section in which it is defined!}

\begin{proposition}\label{prop:Shapley-prop}
	The Shapley-value solution is in the anticore of the \rentalprob\ and satisfies strong decomposability, yet it does not dominate \emph{RP} (and hence also does not dominate Eating).
\end{proposition}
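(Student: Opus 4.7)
My plan addresses the three claims in turn, mostly by applying standard Shapley-value machinery on top of results already in the paper. For \emph{anticore membership}, I would invoke Proposition~\ref{prop:unit-demand-ours}, which establishes that $W_{max}$ is submodular in the unit-demand matching setting (and hence for Shared-Rental), together with the classical fact recalled in Section~\ref{sec:shapley} that the Shapley value of a submodular cost-sharing game lies in the corresponding core. In our formulation this core inequality reads $\sum_{i\in S}\phi_i \le W_{max}(S)$ for every $S\subseteq\cal{N}$, which is precisely the anticore condition.

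For \emph{strong decomposability}, my plan is to first observe that when an instance decomposes into components $P_1,\ldots,P_t$, the characteristic function factors additively: $W_{max}(S)=\sum_{\ell=1}^{t}W_{max}(S\cap P_\ell)$ for every $S$. This follows from Proposition~\ref{lem:welfarecomponent} together with the non-interference clause of Definition~\ref{def:decompose}, since any welfare-maximizing allocation for $S$ uses disjoint room subsets for different components. Linearity of the Shapley value in the characteristic function then gives $\phi_i=\sum_{\ell}\phi_i^{(\ell)}$, where $\phi_i^{(\ell)}$ is the Shapley value of $i$ with respect to $W_{max}(\cdot\cap P_\ell)$. For $i\in P_\ell$, every summand with $\ell'\ne\ell$ vanishes (adding $i$ never changes $W_{max}(\cdot\cap P_{\ell'})$), while the remaining term equals agent $i$'s Shapley value in the sub-instance $I_{P_\ell}$, because the distribution of the $P_\ell$-restricted predecessor set $S_{i,\sigma}\cap P_\ell$ under a uniform ordering of $\cal{N}$ coincides with the distribution under a uniform ordering of $P_\ell$. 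This yields the equality $u_i(M(I))=u_i(M(I_{P_\ell}))$ required by Definition~\ref{def:strongdecompose}.

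For the \emph{failure to dominate RP}, it suffices to exhibit one Shared-Rental instance in which some agent's Shapley payoff is strictly less than her RP expected utility. A promising construction features an agent whose preferred room is highly substitutable from a welfare-maximization standpoint---so her marginal contribution averaged over orderings is small, keeping her Shapley value low---yet whose RP expectation is inflated by tie-breaking and by the sequential choices of competitors that leave a high-value room available to her often enough. Once such an instance is in hand, the claim about the Eating mechanism is immediate: as Section~\ref{sec:ws-core-lexmax} recalls, Eating Pareto-dominates RP, so any shortfall of Shapley against RP is also a shortfall against Eating. The first two parts are routine invocations of Shapley machinery on top of results already proved in the paper; the delicate step is the third, since many small or symmetric instances turn out to satisfy Shapley~$\ge$~RP coordinate-wise, so the counterexample must break symmetry in just the right way while remaining small enough for direct hand verification of both the Shapley values and the RP expectations.
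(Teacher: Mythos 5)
Your first two parts are correct and essentially match the paper. Anticore membership is argued exactly as in the paper (submodularity of $W_{max}$ for unit-demand matching plus the classical fact that the Shapley value of a submodular cost-sharing game lies in its core, which is definitionally the anticore here). For strong decomposability the paper argues directly that the marginal contribution of $i\in P_\ell$ in any ordering depends only on the predecessors from $P_\ell$, and that a uniform ordering of $\cal{N}$ induces a uniform ordering of $P_\ell$; your route via the additive factorization $W_{max}(S)=\sum_\ell W_{max}(S\cap P_\ell)$ (which is indeed justified by Proposition~\ref{pro:upperconstraintsdecompose}) and linearity of the Shapley value is an equivalent repackaging of the same computation.

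The third part, however, has a genuine gap: the entire content of that claim is the existence of a concrete instance on which the Shapley value falls below the RP expectation for some agent, and you never produce one -- you only describe, heuristically, what such an instance ``should'' look like and acknowledge that finding it is the delicate step. A plan that defers the delicate step is not a proof. The paper closes this gap without any Shapley computation, by a continuity argument: the Shapley value is a continuous function of the valuations, RP is not (Proposition~\ref{prop:rp-non-cont}), and both vectors sum to the same total $W_{max}({\cal{N}})$; if Shapley dominated RP coordinatewise it would have to \emph{equal} RP everywhere (domination plus equal sums forces equality), contradicting the discontinuity. Concretely, on the instance of Proposition~\ref{prop:rp-non-cont} with valuations $(1,1-\epsilon,\epsilon)$, $(1,1-\epsilon,\epsilon)$, $(1,0,\epsilon)$, the Shapley values are $(\tfrac{2}{3},\tfrac{2}{3},\tfrac{2}{3})$ while RP gives approximately $(\tfrac{5}{6},\tfrac{5}{6},\tfrac{1}{3})$, so agents~1 and~2 lose -- a small, nearly symmetric instance, so your worry that symmetry must be broken in some intricate way is unfounded. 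Your final inference (failure to dominate RP implies failure to dominate Eating, via the Pareto domination of RP by Eating asserted in Section~\ref{sec:ws-core-lexmax}) is fine once a counterexample is in hand.
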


\section{Properties of the Shapley-value solution}\label{app:Shapley-prop}

A theoretical justification given for using the Shapley value is that it is the unique solution that satisfies three properties referred to as {\em symmetry}  (agents with the same valuation function receive the same utility), {\em zero player} (an agent whose valuation function is~0 for all alternatives does not receive nor pay any transfer) and {\em linearity}  (linear changes to $W_{max}$  lead to linear changes in the distribution of welfare).

When the WS-core in nonempty, the solution offered by the Shapley value will in general be different than our \lexmaxws\ solution, and the reason for this is that \lexmaxws\ does not satisfy the linearity property with respect to $W_{max}$. Not satisfying properties associated with $W_{max}$ alone (linearity or other) is a natural consequences of the fact that our WS-core is not defined only by constraints derived from $W_{max}$, but also by constraints derived from the disagreement point. Hence even if $W_{max}$ remains unchanged and only the disagreement point changes, our solution will change, whereas the Shapley value solution would not change.

It is known that in general cost-sharing games, the Shapley value solution might be outside the cost-sharing core. Likewise, in our more specialized setting (in which $W_{max}$ is not arbitrary, but rather derived from valuations over alternatives) the Shapley value solution might be outside the anticore. This is shown in~\cite{Moulin92}. %, and we also present an example demonstrating this in Appendix~\ref{app:Shapley-prop}.
For completeness, we also provide such an example. % in which the Shapley value solution is outside the anticore.
Consider three agents and four alternatives, with valuation functions as in the following table:

\begin{tabular}{|c|c|c|c|c|}
  \hline
  % after \\: \hline or \cline{col1-col2} \cline{col3-col4} ...
  Example 3 & {\bf Alternative 1} & {\bf Alternative 2} & {\bf Alternative 3} & {\bf Alternative 4} \\
\hline
  {\bf Agent 1} & 2 & 0 & 0 & 1 \\
  {\bf Agent 2} & 0 & 2 & 0 & 1 \\
  {\bf Agent 3} & 0 & 0 & 2 & 2 \\
  \hline
\end{tabular}

Alternative~4 generates the highest welfare.
The anticore restricts the combined utility of the first two agents to at most~2, and likewise for the utility of Agent~3. Hence in the anticore the utility of agent~3 is exactly~2. However, the Shapley value for agents~1 and~2 is 7/6, and for agent~3 it is 5/3. Hence in the Shapley value mechanism agent~3 pays agents~1 and~2.

In the above example $W_{max}$ is not submodular: the marginal contribution of agent~1 to the set $\{2,3\}$ is~1, whereas her marginal contribution to the set $\{2\}$ is~0.

\subsection{Proof of Proposition~\ref{prop:Shapley-prop}}

We restate and prove Proposition~\ref{prop:Shapley-prop}.

\begin{proposition}\label{prop:Shapley-prop-app}
	The Shapley-value solution is in the anticore of the \rentalprob\ and satisfies strong decomposability, yet it does not dominate \emph{RP}.
\end{proposition}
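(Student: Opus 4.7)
The proof has three independent components, and I would handle them in order.

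First, that the Shapley-value solution lies in the anticore: by Proposition~\ref{prop:unit-demand-ours}, $W_{max}$ is submodular for any unit-demand matching instance, hence in particular for the \rentalprob. I would then invoke the classical fact from cooperative game theory (already noted in Section~\ref{sec:shapley}) that the Shapley value of a submodular characteristic function lies in the core of the associated cost-sharing game. Translated to our notation this says $\sum_{i\in S}\phi_i\le W_{max}(S)$ for every $S\subseteq{\cal N}$, with equality at $S={\cal N}$ by efficiency of the Shapley value; these are exactly the defining constraints of the anticore.

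Second, strong decomposability (Definition~\ref{def:strongdecompose}). Let $T\subset{\cal N}$ be an independent component of an instance $I$. The key observation is the additive decomposition
\[
W_{max}(S)\;=\;W_{max}(S\cap T)+W_{max}(S\setminus T)\qquad\text{for every }S\subseteq{\cal N},
\]
which follows from the definition of a component: matchings witnessing the two sides use disjoint rooms and can be combined without conflict. In particular, for any $i\in T$ the marginal contribution $W_{max}(S\cup\{i\})-W_{max}(S)$ equals $W_{max}((S\cap T)\cup\{i\})-W_{max}(S\cap T)$, and hence depends only on $S\cap T$. Using the random-arrival form of the Shapley value and the fact that the restriction of a uniformly random ordering of ${\cal N}$ to $T$ is a uniformly random ordering of $T$, the distribution of $S\cap T$ in the full game matches exactly the distribution of arrival sets in the restricted subinstance $I_T$. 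Therefore $\phi_i^{(I)}=\phi_i^{(I_T)}$, and since the Shapley-value mechanism gives each agent a utility equal to her $\phi_i$, we conclude $u_i(M(I))=u_i(M(I_T))$ for every $i\in T$, which is exactly strong decomposability.

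Third, that the Shapley value does not dominate RP: the plan is to exhibit a small explicit \rentalprob\ instance in which some agent's Shapley value is strictly smaller than her RP expected utility. I expect this to be the main obstacle. A direct computation in the two-agent case, using the closed forms $\phi_1=\tfrac{1}{2}(a+c-b)$ with $a=W_{max}(\{1\})$, $b=W_{max}(\{2\})$, $c=W_{max}(\{1,2\})$, and $r_1=\tfrac{1}{2}(a+a')$ where $a'$ is agent~1's value for her non-top room, shows that Shapley already weakly dominates RP because $c\ge a'+b$; similarly for agent~2. So the counterexample must involve at least three agents, and must be crafted so as to break the symmetries that typically align marginal-contribution averages (Shapley) with greedy-pick averages (RP)---for instance by placing a ``lightly contributing'' agent in between two highly competing ones so that her RP expected utility exceeds the average of her marginal contributions to coalitions. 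Once such an instance is in hand, the stated conclusion follows immediately: Shapley is outside the WS-core on this example (even though it is in the anticore), and because Eating Pareto-dominates RP (as noted in Section~\ref{sec:ws-core-lexmax}), Shapley fails to dominate Eating on the same instance.
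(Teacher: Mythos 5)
Your first two components are correct and essentially coincide with the paper's own argument: submodularity of $W_{max}$ for unit-demand matching (Proposition~\ref{prop:unit-demand-ours}) plus the classical core-membership of the Shapley value for submodular characteristic functions gives the anticore claim, and the random-arrival formula together with the observation that an agent's marginal contribution depends only on the previously arrived agents of her own component gives strong decomposability. Your additive identity $W_{max}(S)=W_{max}(S\cap T)+W_{max}(S\setminus T)$ is just a more explicit phrasing of the same fact the paper uses, and your two-agent computation showing that Shapley weakly dominates RP when $n=2$ is a correct (and not strictly necessary) sanity check.

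The genuine gap is the third component. You announce that a three-agent counterexample must exist and describe qualitatively how one might be built, but you never exhibit one, and the existence of such an instance \emph{is} the content of the claim that Shapley does not dominate RP; a proof cannot end with ``once such an instance is in hand.'' The paper closes this with a continuity argument: the Shapley value is a continuous function of the valuations, whereas RP is not, even on instances where RP always attains the maximum welfare (Proposition~\ref{prop:rp-non-cont}); on such instances the RP utilities and the Shapley values both sum to $W_{max}({\cal{N}})$, so domination would force equality and hence continuity of RP, a contradiction. Concretely, for valuation vectors $(1,1-\epsilon,\epsilon)$, $(1,1-\epsilon,\epsilon)$, $(1,0,\epsilon)$ one computes RP utilities approximately $(\frac{5}{6},\frac{5}{6},\frac{1}{3})$ but Shapley values $(\frac{2}{3},\frac{2}{3},\frac{2}{3})$, so agents~1 and~2 each fall strictly below their RP utilities. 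Note that your search heuristic does not describe this instance and may even mislead you: here it is the ``lightly contributing'' agent~3 whom Shapley \emph{over}-pays relative to RP (her marginal contribution is large in every early arrival position because she, too, values the contested item at~$1$), and it is the two competing agents who lose. Your final inference --- that failure to dominate RP together with Eating Pareto-dominating RP yields failure to dominate Eating --- is sound given the paper's cited fact about Eating.
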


\begin{proof}
	As noted earlier, the function $W_{max}$ is submodular for the \rentalprob, and the Shapley value for submodular cost-sharing games lies in the cost-sharing core. As mathematically the anticore has the same definition as the cost-sharing core, it follows that the Shapley value is in the anticore.
	
	To see that the Shapley value is strongly decomposable, recall that the Shapley value of an agent $i$ is computed by considering a uniform distribution over all permutations, and taking the expected marginal value (with respect to $W_{max}$)  of the agent over a random choice of permutation.  As the instance is decomposable, these marginals only depend on those agents from the part $P_j$ that contains $i$ that arrived before agent $i$. The uniform distribution for permutations over all agents indices a uniform distribution over the permutations for the agents in part $P_j$. Hence the vector of Shapley values for all agents  is simply the concatenation of the vectors of Shapley values for each of the parts, implying strong decomposability.

	The fact that it does not dominate \emph{RP}
	% This
	follows from the fact that the Shapley value mechanism is continuous, whereas the \emph{RP} mechanism is not, not even when its output maximizes welfare (see Proposition  \ref{prop:rp-non-cont}). Consequently, computing the utility of the agents under the Shapley value mechanism for the example used in the proof of
	% 	item~2 of Proposition~\ref{pro:continuousANT}
	Proposition \ref{prop:rp-non-cont}
	will provide an example proving the current proposition.
\end{proof}

\section{Comparison to some prior solution concepts}
\label{appendix:bargaining}

Many bargaining solutions have been suggested in the past. Below we discuss some of the most known ones, and show that none of them satisfy all the properties we consider desirable.
The solutions we discuss include the Nash bargaining solution and the Kalai-Smorodinsky
bargaining solution \cite{KalaiS75}. %, the Shapley value and  the Nucleolus.
As transfers are allowed, each of them picks a division of the utility generated by the social maximizing outcome among the players. These solutions can be defined to dominate RP, but need not lie in the anticore. This last fact has the following undesirable consequences.

\begin{proposition}
The Nash bargaining solution is not reasonable from above ($u_i \le W_{max}(i)$ might fail for some agent $i$), and
it fails to satisfy even the weak decomposition property.
\end{proposition}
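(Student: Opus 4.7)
The plan is to exhibit a single small decomposable instance within the \rentalprob\ framework that simultaneously refutes both claims. The instance will be chosen so that RP falls short of the welfare maximum in one component (creating real welfare to share), while Nash is forced to spill that welfare across the component boundary.

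Concretely, I would take three agents and three rooms with values $v_1 = (10,5,0)$, $v_2 = (8,4,0)$, and $v_3 = (0,0,2)$, so agents $1$ and $2$ compete for room $1$ and ignore room $3$, while agent $3$ wants only room $3$. Enumerating the six matchings, the Pareto-optimal alternatives for $\{1,2\}$ are the two assignments placing $\{1,2\}$ onto rooms $\{1,2\}$, and the Pareto-optimal alternatives for $\{3\}$ are exactly those that give her room $3$; every pair of such alternatives can be combined into a single matching with the required coordinate-wise values (either $(1{\to}1,\,2{\to}2,\,3{\to}3)$ or $(1{\to}2,\,2{\to}1,\,3{\to}3)$), verifying Definition~\ref{def:decompose} for the partition $\{1,2\}\cup\{3\}$. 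The unique welfare-maximizer is $\sigma^{*} = (1{\to}1,\,2{\to}2,\,3{\to}3)$, with $W_{max}({\cal{N}}) = 16$.

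Next I would compute the RP disagreement utilities. Agent $3$ always picks room $3$; between agents $1$ and $2$, each is first with probability $1/2$ and both strictly prefer room $1$, so $d_1 = \tfrac12(10)+\tfrac12(5) = \tfrac{15}{2}$, $d_2 = \tfrac12(4)+\tfrac12(8) = 6$, and $d_3 = 2$, giving $\sum_j d_j = \tfrac{31}{2} < 16$. In the transferable-utility quasi-linear setting, the Nash bargaining solution maximizes $\prod_i(u_i-d_i)$ over $\sum_i u_i = W_{max}({\cal{N}})$ subject to $u_i \ge d_i$; substituting $x_i = u_i - d_i$ and applying AM-GM (the interior critical point is feasible since $\sum_j d_j < W_{max}({\cal{N}})$) yields the equal-gains rule $u_i = d_i + (W_{max}({\cal{N}}) - \sum_j d_j)/n$. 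Here the per-agent surplus is $1/6$, giving $u = (46/6,\,37/6,\,13/6)$.

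Both claims now follow by inspection. For the first, $u_3 = 13/6 > 2 = W_{max}(\{3\})$, so the singleton anticore constraint for agent~$3$ is violated and Nash is not reasonable from above. For the second, subtracting the values under $\sigma^{*}$ gives transfers $p = (-7/3,\,13/6,\,1/6)$, and the net transfer into component $\{3\}$ equals $1/6 \ne 0$, contradicting Definition~\ref{def:weakdecompose}. The main subtlety is justifying the equal-gains characterization of TU Nash bargaining; once that is in hand, the rest is arithmetic, and the reason the decomposability guarantee of Proposition~\ref{pro:upperdecomposes} fails to rescue Nash is precisely that the Nash allocation lies outside the anticore.
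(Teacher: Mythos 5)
Your proof is correct, and the arithmetic checks out: $\{1,2\}$ and $\{3\}$ are indeed components, the RP disagreement utilities are $(15/2,6,2)$, the TU Nash solution is the equal-gains point $(46/6,37/6,13/6)$, agent~3 exceeds $W_{max}(\{3\})=2$, and the induced transfer of $1/6$ into $\{3\}$ violates weak decomposability. The core of your argument is the same as the paper's -- both rest on the observation that in the transferable-utility quasi-linear setting the Nash bargaining solution collapses to the egalitarian equal-gains rule, followed by a concrete counterexample -- but the instantiation differs: the paper uses a two-agent, two-alternative example to violate reasonableness from above and then merely points to its four-agent Kalai--Smorodinsky example for the weak-decomposition failure, with ``details omitted.'' Your single three-agent decomposable instance does double duty, and in particular actually carries out the decomposition-failure computation that the paper skips; the only step you should make sure to state carefully (as you note) is the AM-GM justification that the interior equal-gains point is the Nash maximizer, which is valid here because $\sum_j d_j < W_{max}({\cal{N}})$ makes all gains strictly positive.
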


\begin{proposition}
\label{pro:KS}
The Kalai-Smorodinsky bargaining solution is  reasonable from above, % ($u_i \le W_{max}(i)$ for each agent $i$),
but it fails to satisfy even the weak decomposition property.
\end{proposition}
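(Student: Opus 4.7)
The plan is to fix a precise formulation of KS in our setting, check reasonable-from-above directly from its geometry, and exhibit a small decomposable instance on which weak decomposability fails.

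First, I would adopt the natural version of KS in this transferable-utility, welfare-maximizing framework: the feasible utility frontier is the simplex $\{u : \sum_j u_j = W_{max}(\mathcal{N})\}$, the disagreement point is $d_j = \sum_A \pi_v(A) v_j(A)$, and the utopia point is $m = (W_{max}(1), \ldots, W_{max}(n))$. The KS solution is the intersection of the segment $[d,m]$ with the frontier, so $u_j = d_j + \lambda(W_{max}(j) - d_j)$ with $\lambda = (W_{max}(\mathcal{N}) - \sum_j d_j)/\sum_j (W_{max}(j) - d_j)$ (the degenerate case $d = m$ is handled separately by setting $u = m$).

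For reasonable-from-above I would show $\lambda \in [0,1]$, which immediately yields $u_j \le d_j + (W_{max}(j) - d_j) = W_{max}(j)$. The lower bound $\lambda \ge 0$ uses that any distribution over alternatives has expected welfare at most $W_{max}(\mathcal{N})$, so $\sum_j d_j \le W_{max}(\mathcal{N})$. The upper bound $\lambda \le 1$ uses that for any welfare-maximizing $A^*$, $W_{max}(\mathcal{N}) = \sum_j v_j(A^*) \le \sum_j W_{max}(j)$ since each $v_j(A^*) \le W_{max}(j)$ by definition of $W_{max}(j)$.

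For the failure of weak decomposability I would exhibit a decomposable instance on which KS induces a nonzero net transfer across a component boundary. Take three agents and three items, with disagreement $d = 0$ and valuations $v_1 = (10, 8, 0)$, $v_2 = (10, 8, 0)$, $v_3 = (0, 0, 10)$. Since item $3$ gives Agents $1,2$ zero and items $1,2$ give Agent $3$ zero, every Pareto-optimal alternative assigns item $3$ to Agent $3$; a brief check against Definition~\ref{def:decompose} then shows that $\{3\}$ is a nontrivial component. One computes $W_{max}(\mathcal{N}) = 28$ and $\sum_j W_{max}(j) = 30$, so $\lambda = 14/15$ and $u_3 = 28/3$, whereas $W_{max}(\{3\}) = 10$. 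The gap $10 - 28/3 = 2/3$ is a nonzero net transfer from $\{3\}$ into $\{1,2\}$, so weak decomposability fails.

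The main obstacle I expect is pinning down the definition of KS, since several inequivalent utopia points appear in the TU bargaining literature (for instance, $m_i = W_{max}(\mathcal{N}) - \sum_{j \ne i} d_j$ would not automatically yield reasonable-from-above), and the statement of the proposition implicitly commits to the standalone-maximum utopia $m_i = W_{max}(i)$. Once that commitment is fixed, both parts of the proof are short; only the verification that the sample instance really is decomposable under the formal Definition~\ref{def:decompose} requires a little care.
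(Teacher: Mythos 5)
Your formulation of KS (utopia point $m_i=W_{max}(i)$, intersection of the segment $[d,m]$ with the welfare frontier) is the one the paper uses, and your reasonable-from-above argument via $\lambda\in[0,1]$ is correct and in fact more explicit than the paper's, which simply asserts this from the geometry. The overall strategy for the second half -- exhibit a decomposable instance with a nonzero cross-component transfer -- is also the paper's strategy. However, your counterexample has a genuine gap: you set the disagreement point to $d=0$, but in this model the disagreement utilities must arise as $d_i=\sum_A \pi_v(A)v_i(A)$ for some probability distribution $\pi_v$ over the alternatives (here, the six permutations). With $v_1=v_2=(10,8,0)$ and $v_3=(0,0,10)$, agent~1 has expected value $0$ only under distributions supported on permutations giving her item~3, and likewise for agent~2; since they cannot both receive item~3, no distribution realizes $d=0$. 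Worse, with the paper's standard choice of Random Priority as the disagreement mechanism, your instance gives $d=(26/3,26/3,10)$, $\lambda=1/4$, and $u_3=10=v_3(A^*)$, so the net transfer across the component boundary is exactly zero and the purported violation vanishes.

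The gap is easily repaired: either take the uniform distribution over permutations as the disagreement (then $d=(6,6,10/3)$, $\lambda=19/22$, $u_3=100/11<10$, and agent~3 makes a nonzero payment into $\{1,2\}$), or append a dummy disagreement alternative valued at $0$ by everyone (leaving the general framework rather than unit-demand matching), or adjust the valuations. The paper avoids the issue by using a four-agent, four-item instance (components $\{1,2\}$ and $\{3,4\}$ with maximum welfares $16$ and $24$) with RP as the disagreement mechanism, for which the KS utilities are $(7,10,7,16)$ and the component $\{1,2\}$ receives total utility $17>16$, simultaneously violating weak decomposability and an anticore constraint. As written, though, your key step does not go through in the paper's model.
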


Below we present examples proving the above propositions. We also discuss another solution concept, the {\em nucleolus}. We explain how a version of this notion can be defined so that it will be in our WS-core. As such, it can potentially serve as an alternative to the solution that we propose to select from the WS-core, namely, the \lexmax\ solution. Faced with these two alternatives, we explain why we prefer \lexmax\ over the version of the nucleolus that resides in the WS-core.
%TO DO:

%Nash requires Independence of irrelevant alternatives (IIR), while KS does not.

\subsection{The Nash bargaining solution}

The Nash bargaining solution, when applied in our setting, becomes identical to the egalitarian solution, equalizing the gains of all agents above their respective disagreement utilities.   As such, it dominates RP, but need not lie it the anticore. Moreover, the Nash bargaining solution need not be reasonable from above.  Namely, it may hold that $u_i > W_{max}(i)$ for some agent $i$. The following example illustrates this point.

Consider a setting with two agents and a two alternatives. The first agent has values of $24$ and $0$ for the two alternatives, while the second agents has value of $0$ and $4$.   Random priority will give the agents expected utilities of $12$ and $2$, respectively.  These are the disagreement utilities, and their total is only $14$, which is $10$ less than the maximum welfare of $24$.  The Nash bargaining solution will give each of them an additional utility of $5$ so the final utilities will be $17$ and $7$.  In particular, the utility of the second agent is higher than the utility offered to her by the best alternative.

Likewise, the Nash bargaining solution fails to satisfy the even the weak decomposition property. The example in Appendix \ref{sec:KS} can serve to illustrate this last point (details omitted).

% Consider the example in Section \ref{sec:KS}. The Nash Bargaining solution will equalize the amount each agent gets from the extra amount of 8 that needs to be divided among the agents beyond the disagreement utilities. Thus, each will get additional utility of $2$, and the final utilities will be $8,10,8$ and $14$. The first two agents together get $18$ instead of $16$ (the maximal they could get in any alternative), and the last two agents get $22$ instead of $24$. This means that there is a transfer between the two components, not satisfying the weak decomposition property (or the group upper bound for the first two agents).

\subsection{The Kalai-Smorodinsky bargaining solution}
\label{sec:KS}

The Kalai-Smorodinsky (KS) solution has the following geometric interpretation.
Each alternative corresponds to a point in $\Re_+^n$, specifying the utility of each of the  $n$ agents.
The \emph{point of best utilities} is the point with coordinate for each agent equal to the best utility she gets in any feasible alternative.
The Kalai-Smorodinsky solution is the intersection of the Pareto frontier with the line from the disagreement point to the point of best utilities.

We note that while our solution aims to equalize utility gains (compared to disagreement utilities) as much as possible (subject to anticore constraints), the KS solution aims to equalize the fraction of the utility improvements (from the disagreement to the best point).

Like in our solution, the KS solution never gives any agent more than her utility in the best alternative for her.
Yet, the KS solution might fail to satisfy the anticore constraints for groups of more than one player.
Consequently, it fails to satisfy even the weak decomposition property. This is illustrated by the following example:

Consider an item allocation setting in which there are four agents and four items, and in each alternative each agent receives one item. Hence there are $4! = 24$ possible alternatives. The valuation of each agent for each item is presented  in the following table:

\begin{tabular}{|c|c|c|c|c|}
	\hline
	% after \\: \hline or \cline{col1-col2} \cline{col3-col4} ...
	Example & {\bf Item 1} & {\bf Item 2} & {\bf Item 3} & {\bf Item 4} \\
	\hline
	{\bf Agent 1} & 4 & 8 & 0 & 0 \\
	{\bf Agent 2} & 4 & 12 & 0 & 0 \\
	{\bf Agent 3} & 0 & 0 & 4 & 8 \\
	{\bf Agent 4} & 0 & 0 & 4 & 20 \\
	\hline
\end{tabular}

The problem decomposes to the first two agents (with maximum welfare $16$) and last two agents (with maximum welfare $24$), giving a total maximum welfare $40$. Random priority will give the agents expected utilities of $6, 8, 6$ and $12$, respectively.
These are the disagreement utilities, and their total is only $32$, which is $8$ less than the maximum welfare.

The best utility each of the agents can get is $8, 12, 8$ and  $20$, respectively, for a total of $48$.
Subtracting the disagreement values this gives the agents $2, 4, 2$ and  $8$, respectively, which in total is $16$ above the disagreement point,
rather than $8$ that can be divided among the agents.

The KS solution lowers what each agent gets (above disagreement) in a uniform rate, meaning that the agents will get utilities $7, 10, 7$ and $16$, respectively. That is, each agent gets half the gap between his disagreement utility and his best utility.
Hence the first two agents together get $17$ instead of $16$ (the maximal they could get in any alternative), and the last two agents get $23$ instead of $24$. This means that there is a transfer between the two components, not satisfying the weak decomposition property (or the anticore constraint for the first two agents).

\subsection{The Nucleolus}
\label{app:nucleolus}

The {\em nucleolus}~\cite{Schmeidler69} is a solution concept that gives a unique solution to a coalitional game. If the core of the game is nonempty, the nucleolus resides in the core. Part of its attractiveness is due to possessing a property referred to as consistency for reduced games (see~\cite{MoulinBook} or~\cite{MPT92} for more details).

To apply this solution concept, one needs to describe our setting as a coalitional game, where there is a characteristic function associated with sets of agents. The question becomes which characteristic function to use. In general, readers may propose whatever characteristic function they find appropriate, and at this level of generality we have nothing to say about the nucleolus. However, part of our work concerns exactly the issue of selecting characteristic functions that we find appropriate for our setting, and they can serve as the basis for the definition of the nucleolus.

Our notion of anticore and the fact that we have a profit game (in which agents share the welfare increase that results from selecting a maximum welfare alternative) suggests the use of $D(S) = W_{max}({\cal{N}}) - W_{max}({\cal{N}} \setminus S)$ as a characteristic function (see the end of Section~\ref{sec:uppercore}). Applying the nucleolus framework to this characteristic function is then equivalent to selecting a solution that satisfies the anticore constraints with as large a margin as possible (measured in terms of a lexicographic vector of the margins). This solution satisfies the weak decomposability property (because it is in the anticore, see Proposition~\ref{pro:upperdecomposes}), but need not satisfy the domination constraints (and hence might not be in the WS-core). Even in the room (item) allocation setting (which is submodular), there are examples in which the nucleolus fails to dominate the Random Priority mechanism. The instance described in Appendix~\ref{app:RP} serves as one such example (for the same reason that it shows that the Shapley value solution does not dominate RP -- see the last paragraph of the proof of Proposition~\ref{prop:Shapley-prop-app} for an explanation).

To force the nucleolus solution to lie within the WS-core, one may modify the characteristic function so that its value on each set $S$ is the maximum between $D(S)$ and the sum of the disagreement utilities of agents in $S$. We refer to this function as the WS characteristic function, and to the resulting nucleolus concept as {\em nucleolus-WS}.  It selects a solution that satisfies all constraints of the WS characteristic function with as large a margin as possible (measured in terms of a lexicographic vector of the margins).
In contrast, our \lexmaxws\ solution satisfies only the domination constraints with as large a margin as possible, subject to not violating the anticore constraints. Having a large margin for the domination constraints serves the purpose of giving agents as strong as possible incentives to give up the reference point and adapt our mechanism.
Keeping a large margin from all constraints (including the anticore constraints) serves a different purpose: it attempts to keep the selected solution away from all boundaries of the feasible region. We prefer the \lexmaxws\ solution over nucleolus-WS, though using nucleolus-WS is also a reasonable option for selecting a solution from the WS-core. In particular, like the \lexmaxws, the nucleolus-WS mechanism satisfies the strong decomposability property. This is a consequence of being in the anticore (that ensures weak decomposability), together with the fact that the nucleolus satisfies consistency for reduced games (details omitted).

Section~\ref{app:2-agents} presents (among other things) a comparison between \lexmaxws\ and nucleolus-WS when there are two agents.

\section{Is $W_{max}$ submodular for submodular Combinatorial Valuations?}
\label{app:other-comb}
Proposition \ref{prop:unit-demand-ours} is the result of the fact that for unit-demand valuations the set function $W_{max}$ is submodular.
% We observe that the only property that we really used about the fact that the problem is unit-demand in the proof of Proposition \ref{prop:unit-demand-ours} is the fact that when agents are unit-demand the set function $W_{max}$ is submodular.
Thus, a similar claim is true for other classes of combinatorial valuation functions over the items that ensure that the  set function $W_{max}$ is submodular. One would be tempted to think that the fact that unit-demand  valuations are submodular was sufficient to prove that $W_{max}$ is submodular.
Yet care should be taken.
The submodularity of unit-demand valuations is for each valuation function, with respect to sets of \emph{items}, while the submodularity of $W_{max}$ is for the total welfare, and with respect  to sets of \emph{agents}.  Indeed, we next show that the fact that the class of submodular valuation functions does \emph{not} ensure that the  set function $W_{max}$ is submodular.

\begin{proposition}\label{prop:EF-paid}
	There are instances of allocation problems in which all agents have valuation functions that are submodular (and in fact, also belonging to the classes of {\em budget additive} valuations and {\em coverage} valuations) but nevertheless, the associated set function $W_{max}$ is not submodular.
\end{proposition}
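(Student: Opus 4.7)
The plan is to exhibit an explicit small instance and verify directly that $W_{max}$ fails submodularity on the agent set. Since the paper has already established that $W_{max}$ is submodular whenever every agent's valuation is additive or unit-demand, any counterexample must genuinely exploit the non-additive behaviour permitted by budget-additivity or coverage (a nontrivial cap in the first case, overlapping sets in the second). I would target the simplest failure mode: three agents $i, j, k$ and a small set of items, trying to realize the strict inequality
\[
W_{max}(\{i, j, k\}) + W_{max}(\{j\}) \;>\; W_{max}(\{i, j\}) + W_{max}(\{j, k\}),
\]
which is equivalent to the marginal contribution of agent $i$ strictly increasing when the coalition grows from $\{j\}$ to $\{j, k\}$.

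The design principle for the valuations would be as follows. Agent $j$ should have a nontrivially capped valuation (budget-additive with a small budget relative to the sum of item values, or coverage whose items overlap on the most weighted elements), so that a single well-chosen item already brings her close to her maximum utility. Agents $i$ and $k$ would be specialists whose preferred items coincide with items that $j$ would also use. In the pair $\{i, j\}$, agent $j$ cedes one item to $i$ while still retaining most of her own value, so $i$'s marginal contribution is small. In the triple $\{i, j, k\}$, by contrast, agent $j$ is pushed to her third-preferred item, releasing her two stronger items to $i$ and $k$ independently; the resulting three-way split produces a disproportionately larger welfare than either two-way split.

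Once a candidate instance is fixed, the remainder of the proof is mechanical. For each of the $2^3 = 8$ subsets of agents I would compute $W_{max}(S)$ by enumerating all finitely many partitions of the items and taking the maximum, then verify the strict inequality displayed above. Separately, I would check that the chosen numbers $v_i(\{a\}), v_i(\{b\}), \ldots$ for each agent actually arise from some (budget, item-values) pair and from some (universe, weights, item-coverage-sets) tuple, so that the valuations are formally certified to lie in the claimed classes.

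The hard part is calibrating the numerical parameters. Budget-additive and coverage valuations each exhibit decreasing marginal returns on items for every individual agent, so the complementarity across agents responsible for non-submodularity of $W_{max}$ must come entirely from the allocation constraint that each item goes to at most one agent. Ensuring that this allocation-induced complementarity is strong enough to overcome the diminishing returns built into every single valuation, while simultaneously keeping the valuations representable in both restrictive subclasses, is the main technical obstacle; I would expect to iterate over several small parameter choices before hitting an instance where the required strict inequality is achieved.
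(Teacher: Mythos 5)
There is a genuine gap here: Proposition~\ref{prop:EF-paid} is a pure existence claim, so its proof \emph{is} the explicit witness instance together with the verification, and your proposal never produces one. You describe a design principle and explicitly defer the construction (``I would expect to iterate over several small parameter choices before hitting an instance''); that is a search plan, not a proof. Worse, the specific three-agent target you set up --- a capped ``hub'' $j$ and two specialists $i,k$, with the violation sought on the pair of base sets $\{j\}$ and $\{j,k\}$ --- appears to go the wrong way in its natural instantiations. Work out the marginals: when $i$ joins $\{j\}$, the capped agent $j$ cedes an item while still hitting her cap, so $i$'s marginal is her full value for that item; when $i$ joins $\{j,k\}$, agent $k$ has already stripped $j$ of an item, so ceding another one to $i$ now costs $j$ real value, and $i$'s marginal \emph{shrinks}. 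In every small parameterization of your scheme that I checked, the specialist's marginal contribution is weakly decreasing, i.e.\ the submodularity inequality holds on exactly the sets you target.

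The paper's construction evades this by using a \emph{fourth} agent and by violating submodularity on sets whose intersection has size two. Concretely: items $\{1,2,3\}$; agents $A,B,C$ additive with value vectors $(1,0,0)$, $(0,2,0)$, $(0,0,2)$; agent $D$ budget-additive with item values $(2,1,1)$ and budget $2$ (also a coverage function). Then $W_{max}(\{A,D\})=3$ (agent $A$ takes item~1 for value $1$ while $D$ still reaches her cap on items $2,3$), $W_{max}(\{A,B,D\})=W_{max}(\{A,C,D\})=4$, and $W_{max}(\{A,B,C,D\})=6$, so $4+4<3+6$. The low-value competitor $A$ for $D$'s favourite item is exactly what inflates $W_{max}$ of the intersection set $\{A,D\}$ above $W_{max}(\{D\})=2$; drop $A$ and the inequality becomes the non-strict $4+4=2+6$. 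If you want to salvage your three-agent route, you would first have to rule in (or out) whether any three-agent counterexample exists; as written, the argument rests on an unconstructed example built on a heuristic that the computations do not support.
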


\ignore{We next restate and prove Proposition \ref{prop:EF-paid}.

\begin{proposition}%\label{prop:EF-paid}
	There are instances of allocation problems in which all agents have valuation functions that are submodular (and in fact, also belonging to the classes of {\em budget balanced} valuations and {\em coverage} valuations) but nevertheless, the associated set function $W_{max}$ is not submodular.
\end{proposition}
}

\begin{proof}
	Consider an example with four agents and three items. The values that agents associate with individual items are presented in the following table.
	
	\begin{tabular}{|c|c|c|c|c|}
		\hline
		% after \\: \hline or \cline{col1-col2} \cline{col3-col4} ...
		Example 4 & {\bf Item 1} & {\bf Item 2} & {\bf Item 3} \\
		\hline
		{\bf Agent A} & 1 & 0 & 0  \\
		{\bf Agent B} & 0 & 2 & 0  \\
		{\bf Agent C} & 0 & 0 & 2  \\
		{\bf Agent D} & 2 & 1 & 1  \\
		\hline
	\end{tabular}
	
	Given the values of individual items, the valuation functions of each of the agents $A$, $B$ and $C$ are {\em additive}, and the valuation function of agent $D$ is {\em budget additive} with a budget of~2 (it also happens to be a {\em coverage function}).  Hence any set of either two or three items has value as~2 for agent $D$, even if the sum of item values is larger than~2. All four valuation functions are submodular. However, the set function $W_{max}$ is not submodular. Consider the sets $S = \{A,B,D \}$ and $T = \{A, C, D \}$. Then $4 + 4 = W_{max}(S) + W_{max}(T) <  W_{max}(S \cap T) + W_{max}(S \cup T) = 3 + 6$, violating the submodularity inequality.
\end{proof}

% The proof of the proposition appears in Appendix \ref{app:EF-prop}.
We conclude by remarking that if all agents have additive valuation functions then is is true that the set function $W_{max}$ is submodular, and Proposition \ref{prop:unit-demand-ours} applies to this setting as well.

%\section{The \emph{Random Priority} mechanism in not continuous}
\section{Both \emph{Random Priority} and Eating are not continuous}
\label{app:RP}

\begin{proposition}\label{prop:rp-non-cont}
	Neither the \emph{Random Priority (RP)} mechanism nor the \emph{Eating} mechanism are continuous, not even on instances on which they nearly maximize welfare.
\end{proposition}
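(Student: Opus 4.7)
The plan is to exhibit a two-agent, two-item unit-demand matching instance in which the ordinal preference of one agent flips as a single real parameter passes through zero, and to show that on each side of this flip both \emph{RP} and the \emph{Eating} mechanism produce different expected utility vectors. Concretely, I would take $v_1 = (1,0)$ and $v_2 = (M, M+\epsilon)$ for a large constant $M$ and a small real parameter $\epsilon$. For $\epsilon > 0$ the two agents top-prefer different items, so no conflict arises; for $\epsilon < 0$ both agents top-prefer item~$1$ and the mechanisms must randomize.

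The first step is a direct calculation of the output of RP on each side. For $\epsilon > 0$ both priority orders produce the matching $1 \mapsto \text{item }1$, $2 \mapsto \text{item }2$, giving expected utilities $(1, M+\epsilon)$; for $\epsilon < 0$ the two orders give $(1, M+\epsilon)$ and $(0, M)$ with probability $\tfrac12$ each, so the expected utility vector is $(\tfrac12, M + \tfrac{\epsilon}{2})$. The second step is the analogous calculation for Eating: for $\epsilon > 0$ each agent eats its preferred item at unit rate and both items are exhausted at time~$1$, while for $\epsilon < 0$ both agents consume item~$1$ until it is exhausted at time~$\tfrac12$ and then switch to item~$2$, so each agent ends up with half of each item; the Birkhoff--von Neumann decomposition then yields the same two matchings with probabilities $\tfrac12$, producing the same expected utility vector as RP. Consequently, for both mechanisms the one-sided limits at $\epsilon = 0$ are $(1, M)$ from the right and $(\tfrac12, M)$ from the left, so agent~$1$'s expected utility jumps by $\tfrac12$ across the boundary, establishing discontinuity of both mechanisms regardless of the tie-breaking rule used to define them at the boundary point itself.

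The last step addresses the qualifier ``nearly maximize welfare''. The maximum welfare for small $|\epsilon|$ is $1 + M + \epsilon$, attained by $1 \mapsto \text{item }1$, $2 \mapsto \text{item }2$. For $\epsilon > 0$ both mechanisms achieve exactly this maximum, while for $\epsilon < 0$ they achieve $M + \tfrac12 + \tfrac{\epsilon}{2}$, so the absolute welfare gap is at most~$1$. Since the maximum welfare is at least $M$, choosing $M$ arbitrarily large makes the ratio of achieved welfare to optimum arbitrarily close to~$1$ throughout a neighbourhood of $\epsilon = 0$. Hence the discontinuity persists when we restrict attention to the subdomain of instances on which RP (resp.\ Eating) achieves any prescribed multiplicative fraction of the optimum welfare. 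The one potential subtlety, the behaviour of the mechanisms exactly at $\epsilon = 0$ where agent~$2$ becomes indifferent, is sidestepped by the fact that the two one-sided limits already disagree, so no choice of tie-breaking at the boundary can restore continuity.
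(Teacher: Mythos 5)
Your proof is correct and follows the same basic strategy as the paper's: parametrize a unit-demand instance so that one agent's ordinal preference flips as a parameter crosses zero, compute the expected utilities of RP and Eating on each side, and observe that the two one-sided limits disagree, so no tie-breaking at the boundary can restore continuity. The only substantive difference is how the qualifier ``nearly maximize welfare'' is handled. The paper uses a three-agent example, $(1,1-\epsilon,\epsilon),(1,1-\epsilon,\epsilon),(1,0,\epsilon)$ versus $(1-\epsilon,1,\epsilon),(1-\epsilon,1,\epsilon),(1-\epsilon,0,\epsilon)$, in which \emph{every} realized allocation of RP and Eating has welfare exactly equal to the optimum $2$ on both sides of the flip; the discontinuity lives entirely in which of several welfare-equivalent allocations player~3 receives (probability $\tfrac13$ versus $\tfrac23$ of getting item~1). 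Your two-agent example instead incurs, for $\epsilon<0$, an additive welfare loss of roughly $\tfrac12$ in expectation (and about $1$ in half the realizations), which you repair by scaling with a large constant $M$ so that the loss is multiplicatively negligible. Both readings of ``nearly'' are defensible, but the paper's construction is the stronger one on this point, since it exhibits the discontinuity on instances where the mechanisms are \emph{exactly} welfare-maximizing; your example buys a smaller, two-agent instance at the cost of needing the scaling argument.
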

\begin{proof}
	Consider an instance with three players and three items. For small $\epsilon > 0$, the valuation vectors of the players are $(1, 1 - \epsilon, \epsilon)$, $(1, 1 - \epsilon, \epsilon)$, $(1, 0, \epsilon)$. The maximum welfare allocation has welfare 2, and so does each allocation that might result from $RP$ and $Eat$ (because it is never the case that player~3 gets item~2). The expected utilities under $RP$ are roughly $(\frac{5}{6}, \frac{5}{6}, \frac{1}{3})$ (up to $O(\epsilon)$), and likewise for $Eat$ (in both cases player~3 has probability $\frac{1}{3}$ of getting item~1). Consider now a slightly modified set of valuation vectors: $(1-\epsilon, 1, \epsilon)$, $(1-\epsilon, 1, \epsilon)$, $(1 - \epsilon, 0, \epsilon)$. Again, the maximum welfare allocation has welfare 2, and so does each allocation that might result from $RP$ and $Eat$ (because it is never the case that player~3 gets item~2). The expected utilities under $RP$ are roughly $(\frac{2}{3}, \frac{2}{3}, \frac{2}{3})$ (up to $O(\epsilon)$), and likewise for $Eat$ (in both cases player~3 has probability $\frac{2}{3}$ of getting item~1). As $\epsilon$ tends to~0, every player suffers discontinuity in its utility.	
\end{proof}

\section{Properties of envy-free mechanisms}
\label{app:EF-prop}

%Proposition \ref{prop:EF-main-problems} easily follows from the next proposition.

\begin{proposition}\label{prop:EF-problems-app}
	There are instances of the unit-demand matching problem in which in each instance there is unique envy-free solution, and this solution fails to satisfy some desirable properties, as follows.
	%In each case below we present an instance of the rental harmony problem with a unique envy-free solution that:
	
	\begin{enumerate}
		\item It does not satisfy decomposability.
		
		\item It does not dominate the \emph{Random Priority} allocation, and moreover, some agent $i$ exceeds his own upper bound on utility $W_{max}(i) = max_j v_i(j)$.
		
		\item It does dominate the \emph{Random Priority} allocation and even the \emph{Eating} mechanism, but nevertheless, some  agent $i$ exceeds his own upper bound on utility $W_{max}(i)$.
		
		\item It is in the anticore but still fails to dominate \emph{Random Priority}.
		
		\item It dominated \emph{Random Priority} (and even \emph{Eating})  and no agent $i$ exceeds his own upper bound on utility $W_{max}(i)$, but is not in the anticore.
	\end{enumerate}
\end{proposition}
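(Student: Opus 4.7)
The proposition will be proved by exhibiting, for each of the five items, one small explicit unit-demand matching instance whose envy-free polytope collapses to a single point, and whose unique envy-free solution violates the stated property. No new theory is needed; the content is the construction and verification of five instances.

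\textbf{The five instances.} For item~(1) I would reuse Example~1 of Section~\ref{sec:background}. Its footnote already shows that in every envy-free solution the payment attached to Room~3 satisfies $p_3\ge (1-8\delta)/3>0$, so the standalone component $\{3\}$ receives a strictly positive net transfer and weak decomposability (hence, for a welfare-maximizing budget-balanced mechanism, also strong decomposability) fails. Uniqueness of the envy-free solution is obtained by a small perturbation of the cardinal values that forces both chains of envy-freeness inequalities, $1\!\leftrightarrow\!2$ and $2\!\leftrightarrow\!3$, to be tight in both directions simultaneously. For items (2)--(5), the common building block is the observation that in a two-agent two-item instance with welfare-maximizing allocation $a\to 1,\ b\to 2$ the envy-free polytope collapses to a single point exactly when $v_1(a)-v_1(b)=v_2(a)-v_2(b)$, and each agent's EF utility is then the midpoint $(v_i(a)+v_i(b))/2$. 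For (2) the gadget is chosen so that for one agent both values are negative while her midpoint is positive, giving $u_i>W_{\max}(i)$, and the RP distribution is tuned to beat EF for a different agent (since RP's expected utility is an average of the $v_i$'s, the RP-failure must sit on a different agent from the $W_{\max}$-violator). For (3), a slightly larger instance embeds the (2)-gadget and inflates the welfare gap between the optimal alternative and the expected welfare under both RP and Eating, so every agent strictly improves over both disagreement points while one agent still overshoots $W_{\max}(i)$. For (4) the values are flipped so that every singleton bound $u_i\le W_{\max}(i)$ and every set bound $u_S\le W_{\max}(S)$ hold (so the unique EF lies in the anticore), yet RP gives some agent her favourite item with probability large enough that her expected RP utility exceeds her EF utility. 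For (5) at least three items are needed: exhibit a three-agent three-item instance where the unique EF dominates both RP and Eating, respects every singleton $W_{\max}(i)$, but violates $u_S\le W_{\max}(S)$ for some pair $S$, exploiting that the welfare-maximizing alternative for $S$ taken in isolation may reshuffle the items of $S$ in a way that the global welfare-maximizer cannot match.

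\textbf{Main obstacle.} The principal technical difficulty is enforcing uniqueness of the envy-free solution simultaneously with the desired failure. A generic envy-free polytope is positive-dimensional, and uniqueness forces a rigid system of equalities $v_i(\sigma(i))-v_i(\sigma(j))=v_j(\sigma(j))-v_j(\sigma(i))$ along every edge of the envy-free skeleton. These equalities must coexist with the strict inequalities prescribed by each of items~(2)--(5): one or more among $u_i>W_{\max}(i)$, $u_S>W_{\max}(S)$, and the appropriate direction of the comparison with the RP or Eating disagreement utility. Item~(5) is the most constrained, because breaking a set-level anticore inequality while respecting every singleton bound cannot occur in a $2\times 2$ matching, forcing a three-item construction in which the welfare-maximizing alternatives for $S$ and for $\mathcal{N}$ genuinely differ. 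A secondary but purely mechanical task is computing the RP and Eating expected utilities per instance (a sum over permutations, respectively a deterministic eating schedule); this is routine but must be redone for each of the five examples.
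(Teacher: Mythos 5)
Your overall strategy -- one small explicit instance per item, with the envy-free polytope forced to a single point -- is exactly the paper's strategy, and your treatment of item (1) (reuse Example~1; the footnote there already shows every envy-free solution pays the standalone agent) matches the paper. However, the $2\times 2$ gadget you designate as the ``common building block'' for items (2)--(4) cannot work, for two independent reasons. First, as you yourself compute, when the envy-free polytope of a two-agent two-item instance collapses to a point, each agent's utility is the midpoint $(v_i(a)+v_i(b))/2$; a midpoint of two numbers never exceeds their maximum, so $u_i > W_{\max}(i)$ is unattainable (and your specific device for (2), ``both values negative while her midpoint is positive,'' is arithmetically impossible). This breaks items (2) and (3). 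Second, the uniqueness condition $v_1(a)-v_1(b)=v_2(a)-v_2(b)$ forces both agents to rank the two items identically, so under Random Priority each agent receives her preferred item with probability $\tfrac12$ and her expected RP utility is again exactly the midpoint; hence the unique envy-free solution always coincides with RP in expectation in a $2\times 2$ instance, and item (4) (strict failure of RP-domination) cannot be realized there either.

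The fix, which is what the paper does, is to use at least three agents, with two identical agents competing for a single valuable item: this drives the price of that item far down and pushes positive transfers onto the remaining items, which then accrue to a third agent who is (nearly) indifferent among them. Concretely the paper uses valuations $(6,0,0),(6,0,0),(0,6,6)$ for item (2) (agent~3 ends with utility $8>6=W_{\max}(3)$ while agents 1,2 drop from RP utility $3$ to $2$), $(6,0,0),(6,0,0),(1,0,0)$ for item (3), $(2,1,0),(2,1,0),(0,1,0)$ for item (4), and a four-agent instance $(7,0,0,0),(7,0,0,0),(2,0,1,0),(2,0,1,0)$ for item (5), where the pair $\{3,4\}$ receives total utility $4$ against $W_{\max}(\{3,4\})=3$. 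Your qualitative observations (the RP-failure must sit on a different agent than the $W_{\max}$-violator; item (5) needs a genuinely larger instance so that the pair's best joint matching differs from its allocation under the global optimum) are correct, but the proposal as written does not produce working instances for (2)--(4) and so has a genuine gap at its core construction.
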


\begin{proof}
	We write valuation functions as vectors of item values. In all cases an allocation that maximizes welfare is the identity permutation. (It can be made unique by adding a small $\epsilon > 0$ to every $v_i(i)$, but this is omitted from the examples so as to keep the notation simple.)
	\begin{enumerate}
		\item See example in the introduction.
		
		\item Valuations (6,0,0), (6,0,0), (0,6,6). The only envy free transfer is $(-4,2,2)$. Agents~1 and~2 get less utility than in \emph{Random Priority} ($2$ instead of $3$), and agent~3 gets more utility than his highest value ($8$ instead of $6$).
		
		\item Valuations (6,0,0), (6,0,0), (1,0,0). The unique envy free transfer is $(-4,2,2)$. In both \emph{Random Priority} and \emph{Eating} each agent get the first item with probability $1/3$. Agent~3 gets more utility than his highest value ($2$ instead of $1$).
		
		\item Valuations (2,1,0), (2,1,0), (0,1,0). The unique envy free transfer is (-1,0,1). This satisfies the anticore constraints, but does not dominate \emph{Random Priority}  (each of agents~1 and~2 gets utility~1, whereas in {\em Random Priority} they get expected utility~$\frac{7}{6}$).
		
		\item Valuations  (7,0,0,0), (7,0,0,0), (2, 0, 1, 0), (2, 0, 1, 0).  The unique envy free transfer is $(-5,2,1,2)$. This dominates the \emph{Eating} mechanism and no agent $i$ exceeds his own upper bound on utility $W_{max}(i)$, but it violates the core upper bound constraints on some group (the group $\{3,4\}$ has utility $4$, higher than their upper bound of $3$).
	\end{enumerate}
\end{proof}

\subsection{Envy-free solutions might pay a renter that got his top room}
\label{sec-PRO:RENT}

We restate and prove Proposition \ref{PRO:RENT}.
\begin{proposition}
	There are instances of the \rentalprob\ % rental harmony
	in which the valuation function of each player is nonnegative and sums up to the total rent (this is the setting studied in~\cite{GMPZ}), but nevertheless there is a player that in every envy-free solution both gets his most desired room and receives more money than his rent share.
\end{proposition}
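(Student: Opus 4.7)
The plan is to reuse the three-student instance introduced as Example~1 in Section~\ref{sec:background}, with the total rent normalized to $r=1$. For $\delta\in(0,\tfrac{1}{8})$ take
\[
v_1 = (1-\delta,\,\delta,\,0),\qquad v_2 = (1-2\delta,\,2\delta,\,0),\qquad v_3=(0,\,\tfrac{1}{2}-\delta,\,\tfrac{1}{2}+\delta).
\]
Every $v_i$ is coordinatewise nonnegative and $\sum_j v_i(j)=1=r$, so the instance lies in the \cite{GMPZ} setting; each agent's equal rent share is $r/n=1/3$, and room~$3$ is student~$3$'s strict favorite.

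The first step is to show that every envy-free solution assigns room~$3$ to student~$3$. In the quasi-linear unit-demand setting, any envy-free solution is welfare-maximizing: summing the envy-freeness inequalities across agents and cancelling the transfers via budget balance yields $\sum_i v_i(\sigma(i))\ge\sum_i v_i(\sigma'(i))$ for every alternative matching $\sigma'$. It then suffices to check that the identity matching, with welfare $\tfrac{3}{2}+2\delta$, is the unique welfare-maximizer, which follows by inspecting the six permutations of three rooms: swapping students~$1$ and~$2$ loses $2\delta$, and any matching that deprives student~$3$ of room~$3$ loses at least an additional $2\delta$. Hence in every envy-free solution student~$3$ receives her favorite room~$3$.

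The second step is to replay the chain of envy-freeness inequalities already given in the footnote of Section~\ref{sec:background}. Writing $p_j$ for the transfer attached to room~$j$, envy-freeness of student~$3$ toward student~$2$ gives $p_2 \le p_3+2\delta$; envy-freeness of student~$2$ toward student~$1$ then gives $p_1 \le p_2 - 1 + 4\delta \le p_3 - 1 + 6\delta$. Substituting these two upper bounds into budget balance $p_1+p_2+p_3 = 0$ produces
\[
p_3 \;=\; -p_1-p_2 \;\ge\; -(p_3-1+6\delta)-(p_3+2\delta) \;=\; -2p_3 + 1 - 8\delta,
\]
i.e.\ $p_3\ge(1-8\delta)/3$. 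Because the two inequalities used are part of the very definition of envy-freeness, this lower bound holds in \emph{every} envy-free solution, not merely for the particular one exhibited in Section~\ref{sec:background}.

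For every $\delta\in(0,1/8)$ the right-hand side is strictly positive, and as $\delta\to 0$ it tends to the equal rent share $r/n=1/3$; consequently student~$3$ -- who already holds her favorite room at full value -- is additionally paid a transfer whose magnitude can be forced to be arbitrarily close to her nominal rent share, so her effective rent payment $r/n-p_3$ is at most $8\delta/3$ and is thus arbitrarily smaller than the amount she collects in transfers. That is the sense in which she receives more money in transfers than her rent share while simultaneously obtaining her most desired room. The main obstacle, which I expect to be the only subtle point, is verifying that the lower bound on $p_3$ is uniform over all envy-free vectors rather than holding only for the specific solution displayed in Section~\ref{sec:background}; this reduces, as noted above, to the observation that the two driving inequalities are envy-freeness constraints and so hold by definition for every envy-free outcome.
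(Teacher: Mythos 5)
Your choice of instance does not deliver the conclusion of the proposition, and your closing paragraph quietly weakens what is being claimed. The proposition asserts that some player receives, in \emph{every} envy-free solution, a transfer strictly exceeding his rent share $r/n$ --- i.e.\ his net rent payment $r/n - p_i$ is negative, so he gets his favorite room \emph{and} profits on net (this is exactly what the paper's proof establishes for its player~5, whose net charge is $\frac{-1+8\epsilon}{5}<0$). In the three-student instance your chain of inequalities yields only $p_3 \ge \frac{1-8\delta}{3}$, which is strictly \emph{less} than the rent share $\frac{1}{3}$ for every $\delta>0$. Worse, this lower bound is attained: the budget-balanced transfer vector $p = \bigl(\frac{-2+10\delta}{3},\, \frac{1-2\delta}{3},\, \frac{1-8\delta}{3}\bigr)$ satisfies all six envy-freeness constraints (the constraints of student~2 toward student~1 and of student~3 toward student~2 hold with equality, the rest with slack), so there is an envy-free solution in which student~3's net payment is $\frac{1}{3}-p_3 = \frac{8\delta}{3} > 0$. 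She still pays rent, just very little; the instance therefore witnesses only the weaker fact, already made in the footnote of Section~\ref{sec:background}, that $p_3>0$. Your final sentence, comparing her residual payment of $\frac{8\delta}{3}$ to the size of her transfer, is not the comparison the proposition makes.

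This is why the paper's proof moves to a five-player instance. There, two pairs of identical agents force their envy constraints to bind in both directions, pinning the room-price gaps $q_1-q_3$ and $q_2-q_4$ at $1-2\epsilon$ each; substituting these two large gaps into the budget equation drives $q_5 \le \frac{-1+8\epsilon}{5}<0$, so player~5's transfer exceeds his rent share of $\frac{1}{5}$ in every envy-free solution. With only three agents and a single ``expensive'' room the aggregate price pressure is not enough to push any agent's net charge below zero, so you need either more agents or a differently structured instance. The first step of your argument --- that every envy-free solution is welfare-maximizing and hence assigns room~3 to student~3 --- is correct and is the same observation the paper relies on.
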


\begin{proof}
	Consider an instance with five players (and five rooms) in which they need to pay rent of $1$.
	For $0 < \epsilon < \frac{1}{8}$, the valuations functions are (in vector notation):
	
	$$v_1 = v_3 = (1 - \epsilon, 0, \epsilon, 0, 0)$$
	
	$$v_2 = v_4 = (0, 1 - \epsilon, 0, \epsilon, 0)$$
	
	$$v_5 = (0, 0, \frac{1 - \epsilon}{3},  \frac{1 - \epsilon}{3}, \frac{1+2\epsilon}{3})$$
	
	Observe that these valuation functions are normalized (each sums up to~1) and nonnegative.
	
	Every maximum welfare solution will place players~1 and~3 in rooms~1 and~3, players~2 and~4 in rooms~2 and~4, and player~5 in room~5. In particular, player~5 receives the room that he desires most. In an envy free solution, how much should each player pay?  To avoid envy between players~1 and~3, room~1 needs to be priced exactly $1 - 2\epsilon$ above room~3. Likewise, room~2 needs to be priced exactly $1 - 2\epsilon$ above room~4. To avoid envy for agent~5, room~5 needs to be priced at most~$\epsilon$ above any of rooms~3 and~4. Using the above information, the maximum that player~5 can be charged is at most $\frac{-1 + 8\epsilon}{5}$ (with players~1 and~2 getting rooms~1 and~2 and charged $\frac{4 - 7\epsilon}{5}$ each, and players~3 and~4 getting rooms~3 and~4 and charged $\frac{-1 + 3\epsilon}{5}$ each). For $0 < \epsilon < \frac{1}{8}$ players~1 and~2 more than cover the rent, and each of players~3, 4 and~5 receives money rather than pays rent. In a sense, players~1 and~2 are paying the other players so that the other players agree to give them rooms~1 and~2. It is perhaps justified that players~3 and~4 get paid -- they really wanted rooms~1 and~2, and instead got rooms of almost no value. Hence they sacrificed something to enable the solution, and may deserve some compensation. In contrast, player~5 got his most desirable room and sacrificed nothing, but nevertheless, is also getting paid.
\end{proof}

\section{A complete analysis for unit-demand matching with two agents}\label{app:2-agents}

In this section we present a complete analysis of unit demand matching when $n=2$ (two agents and two items). This is fact captures all settings in with two agents and two \emph{alternatives} (here the two alternatives are the two possible permutations over the items). We compare between the following four mechanisms:

\begin{enumerate}

\item Envy-free. To select a unique solution among all envy-free solutions, we use max-min as the selection rule, as was recommended in~\cite{GMPZ}.

\item The Shapley value solution.

\item Kalai-Smorodinsky bargaining. Here we use RP (or equivalently for the case $n=2$, {\em Eating}) as the disagreement mechanism.

\item Our \lexmaxws, again with RP as the disagreement mechanism.

\end{enumerate}

Let us denote the agents by $\{A,B\}$ and the items by $\{a,b\}$. After some normalization, we may assume that the valuation function for $A$ is $v_A(a) = 1$ and $v_A(b) = -1$, and for $B$ is $v_B(a) = \delta$ and $v_B(b) = -\delta$, with $-1 \le \delta \le 1$. The maximum welfare allocation (with a ties in case that $\delta = 1$) allocates item $a$ to agent $A$ and item $b$ to agent $B$. The following table presents the pair of utilities (for agent $A$ and $B$) under
the mechanisms {\bf max-min-EF} (maxmin envy free), {\bf Shapley value}, {\bf KS} (Kalai-Smorodinsky bargaining) and our  {\bf \lexmaxws}.
%the mechanisms $max-min-EF$ (maxmin envy free), $SV$ (Shapley value) and $MSM$ (min-square  mechanism).

\begin{tabular}{c c c c c}
	\hline
	% after \\: \hline or \cline{col1-col2} \cline{col3-col4} ...
	& {\bf max-min-EF} & {\bf Shapley value} & {\bf KS} & {\bf \lexmaxws} \\
	\hline
	$\delta \le 0$  & $(\frac{1+|\delta|}{2}, \frac{1+|\delta|}{2})$ & $(1,|\delta|)$ & $(1,|\delta|)$ & $(1,|\delta|)$ \\
	$0 \le \delta \le \frac{1}{3}$ & $(\frac{1-\delta}{2}, \frac{1-\delta}{2})$ & $(1 - \delta, 0)$ & $(\frac{1 - \delta}{1 + \delta}, \frac{\delta(1 - \delta)}{1 + \delta})$ & $(1 - 2\delta,\delta)$ \\
	$\frac{1}{3} \le \delta \le 1$ & $(\frac{1-\delta}{2}, \frac{1-\delta}{2})$ & $(1 - \delta, 0)$ & $(\frac{1 - \delta}{1 + \delta}, \frac{\delta(1 - \delta)}{1 + \delta})$ & $(\frac{1-\delta}{2}, \frac{1-\delta}{2})$ \\
	\hline
\end{tabular}

We wish to draw the attention of the reader to the following facts.

When $\delta < 0$, implying that each agent desires a different item, the max-min-ES solution dictates a transfer from agent $A$ to agent $B$. Thus agent $B$ not only gets his most desired item, but also gets paid for taking it.  Agent $A$ is better of in the RP disagreement mechanism, where he gets his most desired item without having to pay agent $B$.

The utility that agent $B$ gets from the Shapley value solution is equal to his expected utility at the disagreement point (the expected output of RP).  This goes against our perception of fairness, in which increase in the general welfare should be shared by all those who contributed to the increase.

Both the Kalai-Smorodinsky bargaining solution and the \lexmaxws\ solution share the increase in welfare among the two agents, though to different extents. We leave it to the reader to decide which of the two does it better. The case $n=2$ is too small to illustrate our main reason for preferring \lexmaxws\ over KS, which is the fact that KS does not satisfy decomposition properties.  This is illustrated in the proof of Proposition~\ref{pro:KS}), by an example where $n=4$.

In Section~\ref{app:nucleolus} we explained how the notion of the nucleolus can be adapted to our WS-core, giving the {\bf nucleolus-WS} mechanism. For the two agents case, nucleolus-WS gives the same solution as \lexmaxws, except in the range $0 < \delta < \frac{1}{2}$, where nucleolus-WS gives the pair of utilities $(1 - \frac{3\delta}{2}, \frac{\delta}{2})$. Interestingly, there is a unique value ($\delta = \frac{1}{3}$) in the range $0 < \delta < 1$ for which the KS solution and the nucleolus-WS solution coincide, and in that range KS never coincides with any of the other solution concepts presented above.

\section{Population and resource monotonicity}
\label{sec:monotonicity}

We consider here the unit-demand matching setting. {\em  Population monotonicity} means that by introducing an additional agent, it cannot be that a different agent gains utility. {\em  Resource monotonicity} means that by introducing an additional item, it cannot be that an agent looses utility. Moulin [Econometrica 1992] showed that the Shapley value satisfies both population and resource monotonicity.  Here is an example showing the the lexmax-WS solution does not satisfy population monotonicity and does not satisfy resource monotonicity. In the example both {\em random priority} and the {\em Eating} mechanism give the same disagreement utility, and hence the example applies to both.

\begin{tabular}{|c|c|c|c|c|}
  \hline
  % after \\: \hline or \cline{col1-col2} \cline{col3-col4} ...
  Example 5 & {\bf Item A} & {\bf Item B} & {\bf Item C} & {\bf Item D} \\
\hline
  {\bf Agent 1} & 12 & 0 & 6 & 0 \\
  {\bf Agent 2} & 12 & 6 & 0 & 0 \\
  {\bf Agent 3} & 24 & 12 & 0 & 25 \\
  \hline
\end{tabular}

There are three agents $\{1,2,3\}$ and four items $\{A,B,C,D\}$. The valuation function of agents is as described in the table.
%a vector of length~3, with $v_1 = (12, 0 ,6)$, $v_2 = (12, 6, 0)$ and $v_3 = (24, 12, 0)$.

If only agents $\{1,2\}$ participate and only items $\{A,B,C\}$, then for each agent, both the disagreement utility and the lexmax-WS utility are 9. If the set of agents is changed to $\{1,2,3\}$, the disagreement utilities become~8 for agent~1, only 7 for agent~2, and~14 for agent~3. The lexmax-WS solution has welfare 36, giving agent~1 utility of~9.5 and agent~2 utility of~8.5 (at this point the anticore constraint for the set $\{1,2\}$ is tight). Hence population monotonicity for agent~1 does not hold upon introducing agent~3.

Changing now the set of items to $\{A,B,C,D\}$ %with value~25 for agent~3 and no value to the other agents,
changes the disagreement utilities of each of agents~1 and~2 to be 9. This is also their utility in the respective lexmax-WS solution, hence item monotonicity does not hold -- introducing item $D$ caused the utility of agent~1 to drop from 9.5 to 9.

\section{Algorithms and computational complexity}
\label{sec:algorithms}

%We consider the following {\em algorithmic template} for computing the \lexmax\ solution. It proceeds in iterations. Initially (at iteration~0), all agents are {\em free} and every agent $i$ starts with her disagreement utility $u_{\pi}(i)$. If any of the constraints of the anticore are tight (satisfied with equality) by this initial solution, then the agents involved in the tight constraints become {\em locked}. Thereafter, in every iteration $j \ge 1$ we do the following. If there are no free agents, the algorithm ends and outputs the utilities of the agents. If there are free agents, then the utility of every free agent is incremented by the same value $x_j$, where $x_j > 0$ is the smallest value that leads to some new anticore constraint becoming tight (equivalently, $x_j$ is the largest increase that does not violate any of the anticore constraints). At this point, all agents involved in a newly tight constraint become locked, and the iteration ends.

%The above algorithmic template has at most $n$ iterations, because in every iteration at least one more agent becomes locked. For an agent $i$ that becomes locked at iteration $k$, her final utility is $u_{\pi}(i) + \sum_{j=1}^k x_j$.

The water filling algorithm of Section~\ref{sec:select} can be seen to imply the following proposition.

\begin{proposition}
\label{pro:rational}
Suppose that the valuation functions of the players are expressed as rational numbers (namely, every $v_i(j)$ is expressed as $\frac{p}{q}$ for some integers $p$ and $q$.) If $W_{max}$ is submodular, the transfers of the \lexmax\ solution are also rational.
\end{proposition}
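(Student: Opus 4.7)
The plan is to show rationality by tracking the water filling algorithm of Section~\ref{sec:select}, which by Proposition~\ref{pro:water} computes the \lexmax\ solution whenever $W_{max}$ is submodular. Since the \lexmax\ utilities determine the transfers via $p_i = u_i - v_i(A^*)$, it suffices to prove that the utility of every agent at the end of the algorithm is rational (rationality of $v_i(A^*)$ is assumed). Throughout, I shall implicitly assume that the disagreement utilities $u_\pi(i)$ are rational as well; this is the case for all natural reference mechanisms on rational inputs (e.g.\ \emph{Random Priority} and \emph{Eating}, whose probabilities lie in $\mathbb{Q}$), and if one is willing to build rationality of $u_\pi$ into the hypothesis, the argument below is verbatim.

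The key observation concerns a single iteration $j$ of the water filling algorithm. Let $F$ be the set of free agents and let $u(i)$ be the running utility of agent $i$ at the start of iteration $j$. For every set $S \subseteq {\cal{N}}$ with $|S \cap F| = k \ge 1$, the anticore constraint $u(S) \le W_{max}(S)$ becomes tight after incrementing the utility of each free agent by
\[
x_{j,S} \;=\; \frac{W_{max}(S) - u(S)}{k}.
\]
The increment $x_j$ chosen by the water filling algorithm is the minimum of $x_{j,S}$ over the (finitely many) sets $S$ with $|S\cap F|\ge 1$ that have not already been made tight. Since $W_{max}(S) = \max_{A\in{\cal{A}}} \sum_{i\in S} v_i(A)$ is a maximum of finite sums of rationals, $W_{max}(S) \in \mathbb{Q}$; if $u(S)$ is rational, then $x_{j,S}$ and hence $x_j$ is rational as well.

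Now I would argue by induction on the iteration index $j \ge 0$ that at the start of iteration $j$, the running utility $u(i)$ of every agent is rational. The base case $j=0$ holds by the rationality of the disagreement utilities. For the inductive step, the utility of every free agent after iteration $j$ equals its old value plus $x_j$, and $x_j$ is rational by the previous paragraph; locked agents retain their utility, which is rational by induction. Since the algorithm terminates after at most $n$ iterations, every final utility $u_i$ lies in $\mathbb{Q}$.

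Finally, since $A^*$ is a welfare-maximizing alternative from ${\cal{A}}$, the value $v_i(A^*)$ is rational, and therefore $p_i = u_i - v_i(A^*) \in \mathbb{Q}$ for every agent $i$, completing the proof. The only potentially subtle point is the handling of the disagreement utilities, but this is a matter of the input specification rather than of the algorithm itself; given rational disagreement utilities, the water filling dynamics preserve rationality automatically.
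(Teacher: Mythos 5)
Your proof is correct and takes exactly the route the paper intends: the paper offers no explicit proof, merely remarking that the water filling algorithm ``can be seen to imply'' the proposition, and your induction on iterations (each increment $x_j$ being a minimum of quantities of the form $\bigl(W_{max}(S)-u(S)\bigr)/|S\cap F|$, hence rational) is precisely the fleshed-out version of that remark. Your explicit flagging of the need for rational disagreement utilities is a genuine improvement in precision, since the proposition as stated omits that hypothesis and would fail for an irrational reference distribution.
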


%We remark that in the special case that $W_{max}$ is submodular, the water filling algorithm has a more efficient implementation (presented in the proof of Theorem~\ref{thm:DR89}) in which the number of anticore constraints that need to be checked decreases in every iteration.

In this section we make the convention that valuation functions take only integer values in the range $[-M, \ldots, M]$, where $M$ is taken to be sufficiently large. (If valuation functions take arbitrary rational values, they can be scaled to give integer values, by multiplying by the lowest common denominator.) As to the output of the algorithm, we shall not insist on getting the exact \lexmax\ solution, but rather a solution that gives every agent a utility that differs by at most $\epsilon$ from her utility in the \lexmax\ solution. Here $\epsilon > 0$ is some parameter of the algorithm that corresponds to the smallest unit of money that agents care about (note that if $M$ is very large, it may well be that $\epsilon > 1$, in which case we assume that $\epsilon$ is an integer). Consequently, the numerical values manipulated by algorithms can be restricted to numbers expressible by $O(\log M + \log (1 + \frac{1}{\epsilon}))$ bits, even though the true \lexmax\ solution might require much higher precision. We view the use of $\epsilon$ as justified in essentially all practical situations, as payments can practically be made only at a precision determined by the smallest denomination accepted in the relevant currency.

In general, running times of algorithms can be expressed as functions of the number of players $n$, number of alternatives $|{\cal{A}}|$, range of valuations $M$, the precision parameter $\epsilon$, and the description length of the disagreement distribution $\pi_v$. To reduce the number of parameters in Definition~\ref{def:complexity}, we do not state explicitly the dependency on $|{\cal{A}}|$ and $\pi_v$. Later, in contexts in which it matters, we will also consider the effect on $|{\cal{A}}|$ and $\pi_v$.

\begin{definition}
\label{def:complexity}
Using the above notation, we consider the following classes of running times for algorithms:

\begin{itemize}

\item {\em Weakly polynomial.}  The number of operations performed is polynomial in $(n, M, \frac{1}{\epsilon})$.

    \item {\em Polynomial.}  The number of operations performed is polynomial in $(n, \log M, \log \frac{1}{\epsilon})$.

    \item {\em Strongly polynomial.} The number of operations performed is polynomial in $n$  and independent of $M$ and $\frac{1}{\epsilon}$, though each operation may involve numbers with $O(\log M + \log (1 + \frac{1}{\epsilon}))$ bits, and hence the time per operation (for example, adding two numbers) might be polynomial in $\log M + \log (1 + \frac{1}{\epsilon})$.

\end{itemize}
\end{definition}

To extract an algorithm out of the water filling algorithm, one needs subroutines for the following three tasks:

\begin{enumerate}

\item Compute the disagreement utilities $u_{\pi_v}(i)$ for every agent $i$.

\item Compute the increments $x_j$ for every iteration $j$.

\item Determine at each iteration which agents are involved in constraints that become tight, so as to lock these agents.

\end{enumerate}

Suppose first that the disagreement utilities are easy to compute. A specific case when this happens is when there is one designated {\em disagreement alternative}, and $\pi$ is supported only on this alternative. In fact, whenever the disagreement utilities are easy to compute, we may add a ``dummy" alternative (which will serve as the disagreement alternative) whose value to each agent exactly equals the computed disagreement utility of the agent, and shift $\pi$ to be supported only on the dummy alternative. (Note that adding this dummy alternative does not change any of the constraints of the anticore.) Hence Theorem~\ref{thm:complexity} extends to all cases in which the disagreement utilities are easy to compute.

\begin{theorem}
\label{thm:complexity}
Consider instances in which the number of alternatives is bounded by some polynomial in $n$, and one is given a disagreement alternative that forms the support of disagreement distribution $\pi$. Then:

\begin{enumerate}
\item If there is even a weakly polynomial time algorithm for computing the \lexmax\ solution on such instances, then $P=NP$.

\item Nevertheless, if $W_{max}$ is submodular, then \lexmax\ can be computed in strongly polynomial time. Moreover, this result extends also to the case where the number of alternatives is not bounded by a polynomial in $n$, provided that there is a {\em value oracle} for computing $W_{max}$ (namely, given $S$, the value of $W_{max}(S)$ can be computed in time polynomial in $n$).
\end{enumerate}
\end{theorem}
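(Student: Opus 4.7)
For part~1, my plan is to reduce from a strongly NP-hard combinatorial optimization problem (for concreteness, the densest-$k$-subgraph or min-ratio vertex-cover decision problem, or a suitable reformulation of 3-SAT). The key observation is that after normalizing disagreement utilities to zero, the very first water-filling increment equals
\[
x_1 \;=\; \min_{\emptyset \ne S \subseteq {\cal N}} \frac{W_{max}(S)}{|S|} \;=\; \min_{\emptyset \ne S \subseteq {\cal N}} \frac{\max_{A \in {\cal A}} \sum_{i \in S} v_i(A)}{|S|},
\]
so computing even an $\epsilon$-approximation to the \lexmaxws\ utilities recovers $x_1$ up to $\epsilon$. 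Because $W_{max}$ is the pointwise max of a polynomial collection of additive functions, this min-ratio problem subsumes classical sparsest-subset problems. I would craft an instance in which agents correspond to vertices (or variables), alternatives correspond to constraints, values are $0/1$ and polynomially bounded, and the combinatorial gap between yes and no instances is at least $1/n^{O(1)}$. Choosing $\epsilon$ smaller than the gap, any weakly polynomial time algorithm recovers the answer to the NP-hard decision problem, establishing the claim.

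For part~2, my plan is to implement the water-filling algorithm of Section~\ref{sec:select} using strongly polynomial submodular function minimization (SFM) as a black box. At iteration $j$, given the current utility vector $u$ and the set $F$ of free agents, I need to compute
\[
x_j \;=\; \min_{S \subseteq {\cal N},\, |S \cap F| > 0} \frac{W_{max}(S) - u(S)}{|S \cap F|},
\]
and then identify the maximal set $T_j$ that attains this minimum (so that all agents in $T_j \cap F$ become locked). For any fixed $x \ge 0$, the set function $g_x(S) = W_{max}(S) - u(S) - x \cdot |S \cap F|$ is submodular, because $W_{max}$ is submodular and both $u(\cdot)$ and $|\cdot \cap F|$ are modular. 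Hence $\min_S g_x(S)$ is computable in strongly polynomial time via any of the standard SFM algorithms (Schrijver or Iwata--Fleischer--Fujishige), both in the explicit-alternatives setting and under a value oracle for $W_{max}$. To locate $x_j$ itself I would run a Dinkelbach-style (or parametric SFM) procedure: the optimum is rational with polynomially bounded bit-complexity (Proposition~\ref{pro:rational}), so Newton's iteration converges in a polynomial number of SFM calls to the exact value. The maximal minimizer at $x = x_j$ is recovered from any SFM routine by a standard post-processing step (the minimizers of a submodular function form a distributive lattice with a unique maximum element). Since each iteration locks at least one agent, the algorithm terminates after at most $n$ iterations, giving the strongly polynomial bound in both models.

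The main obstacle is, on the one hand, crafting an NP-hardness reduction whose valuation magnitudes and number of alternatives remain polynomial yet whose yes/no gap survives the $\epsilon$-slack permitted by weakly polynomial time, and on the other hand, verifying that the fractional-programming step in each iteration of water filling genuinely runs in strongly polynomial time (i.e., independent of $M$ and $\epsilon$, with each arithmetic operation on $O(\log M + \log(1/\epsilon))$-bit numbers). This last point relies on using an \emph{exact} parametric SFM rather than any form of binary search over $x$, together with the polynomial bit-bound on $x_j$ from Proposition~\ref{pro:rational}.
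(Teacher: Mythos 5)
Your part~2 is essentially the paper's argument: run the water-filling algorithm and realize each iteration's increment computation and locking step via strongly polynomial line search in submodular polyhedra; the paper cites exactly the parametric/discrete-Newton SFM machinery you describe (Nagano; Goemans--Gupta--Jaillet). One caution: justifying the Newton step by ``the optimum has polynomially bounded bit-complexity, so Newton converges in polynomially many SFM calls'' gives a bound that depends on the bit length, i.e.\ polynomial but not strongly polynomial. For the strongly polynomial claim you need the combinatorial iteration bound for discrete Newton on this specific line-search problem, which is precisely what those references supply; you flag this obstacle yourself, so it is a citation-level rather than conceptual issue.

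Part~1 has a genuine gap. Your ``key observation'' --- that an approximation to the \lexmax\ utilities recovers $x_1=\min_{\emptyset\neq S}W_{max}(S)/|S|$ --- identifies the first increment of the \emph{water-filling} algorithm with the minimum coordinate of the \emph{\lexmax} solution. These coincide only when $W_{max}$ is submodular, and the hardness you must prove lives precisely in the non-submodular regime (part~2 shows the submodular case is easy). The paper itself exhibits a three-agent instance where water filling stalls at $\tfrac12$ while the unique WS-core (hence \lexmax) solution has minimum coordinate $0$: in general one only has $\min_i u_i \le \min_S W_{max}(S)/|S|$, possibly strictly, because the uniform vector at level $x_1$ need not extend to a feasible point that distributes all of $W_{max}({\cal N})$ without violating some other anticore constraint. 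So an $\epsilon$-approximation to \lexmax\ does not recover $x_1$, and reducing a min-ratio/sparsest-subset problem to ``compute $x_1$'' does not by itself yield hardness of \lexmax.

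What is needed, and what the paper does, is a reduction in which the value of the \lexmax\ solution itself is controlled on both sides of the gap. The paper reduces from gap Maximum Independent Set: vertices are agents, each edge is an alternative worth $n$ to its two endpoints and $0$ to everyone else, and a disagreement alternative gives every agent value $1/c$. In a no-instance one \emph{verifies feasibility} of the uniform vector at utility $2$ (every set containing an edge has $W_{max}=2n$, and independent sets are small enough), so the \lexmax\ minimum is $2$; in a yes-instance an independent set $S$ of size $cn$ has $W_{max}(S)=n$ while domination forces $u_i\ge 1/c$, pinning every member of $S$ at exactly $1/c<2$. Your sketch would need an analogous explicit argument certifying what the \lexmax\ solution is in the ``easy'' case; asserting that min-ratio over XOS functions subsumes hard subset problems does not control the \lexmax\ utilities.
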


\begin{proof}
We first prove the NP-hardness result. It is by reduction from the maximum independent set problem MIS. Let $\alpha(G)$ denote the maximum size of an independent set in a graph $G$. We shall consider the standard gap version $MIS_{c,s}$, where $c$ is the {\em completeness} parameter, $s$ is the {\em soundness} parameter, and $0 < s < c <1$. The input to $MIS_{c,s}$ is a graph $G$ on $n$ vertices, and the computational task is to output {\em yes} if $\alpha(G) \ge cn$, to output {\em no} if $\alpha(G) \le sn$, and any output is allowed if $sn < \alpha(G) < cn$. We may assume without loss of generality that $G$ has no isolated vertices. It is known that $MIS_{c,s}$ is NP-hard for some values of $0 < s < c < 1$ (this is a consequence of the PCP theorem~\cite{ALMSS}, and following a long line of work, the current best bounds can be found in~\cite{KMS18}). This easily implies also NP-hardness when $s = \frac{1}{2}$ for some $c > \frac{1}{2}$. (For example, starting with $MIS_{c,s}$ with $s < 1/2$, add to the graph $(1 - 2s)n$ isolated vertices, and connected them all to the same vertex in the graph.)

We reduce an instance of $MIS_{c,\frac{1}{2}}$ with $\frac{1}{2} < c < 1$ to an instance of \lexmax\ as follows. Given an input graph $G(V,E)$ with $n$ vertices (an instance of $MIS_{c,\frac{1}{2}}$), every vertex $v\in V$ corresponds to an agent and every edge $(u,v)\in E$ corresponds to an alternative. Every agent $v$ derives value $n$ from each of the alternatives that correspond to the edges incident with $v$, and value~0 from every other alterative. Hence every alternative has welfare exactly $2n$. In addition, there is the disagreement alternative $D$ that gives each agent value $\frac{1}{c}$ (where $c$ is the completeness parameter of the $MIS_{c,\frac{1}{2}}$ instance).

This completes the description of the \lexmax\ instance. Observe that the number of agents is $n$ and that $M = n$. Fix $\epsilon = \frac{2 - \frac{1}{c}}{2} = \frac{2c-1}{2c}$. Hence a weakly polynomial time algorithm for \lexmax\ simply needs to run in time polynomial in $n$.

The amount of welfare offered by the maximum welfare alternative (any of the edges) is $2n$. If the welfare could be distributed evenly over all agents, it would give each agent a utility of~$2$. However, such a solution might not be in the anticore. So let us consider the value of $W_{max}(S)$ for various nonempty subsets $S$ of agents. If $S$ forms an independent set in $G$, then $W_{max}(S) = \max[n,\frac{|S|}{c}]$. Else, $W_{max}(S) = 2n$.

It follows that if $G$ is a {\em no} instance of $MIS_{c,\frac{1}{2}}$, giving each agent a utility of~2 is feasible (it is in the anticore). However, if $G$ is a {\em yes} instance, there is a set $S$ of $cn > n/2$ agents for which $W_{max}(S)=n$. As the disagreement utility is $\frac{1}{c}$, all members of this set $S$ get utility exactly $\frac{1}{c} < 2$. Hence the minimum utility in the \lexmax\ solution is $2$ when $G$ is a {\em no} instance and $\frac{1}{c} < 2$ if $G$ is a {\em yes} instance. Computing \lexmax\ with error smaller than $\frac{2c-1}{2c}$ will allow us to distinguish between these cases. This completes the proof of the NP-hardness result.

We now show that \lexmax\ can be computed in strongly polynomial time when $W_{max}$ is submodular. For this, we explain how each of the three tasks of the water filling algorithm can be computed in strongly polynomial time.

\begin{enumerate}

\item {\em Compute the disagreement utilities $u_{\pi}(i)$ for every agent $i$.} This can be done in strongly polynomial time because the disagreement alternative is given.

\item {\em Compute the increments $x_j$ for every iteration $j$.} This task is a special case of a problem known as {\em line search in submodular polyhedra}, and can be solved in strongly polynomial time, given value oracle access for the underlying submodular function~\cite{Nagano2007,GGJ2017}.

\item {\em Determine at each iteration which agents are involved in constraints that become tight, so as to lock these agents.} Once $x_j$ for iteration $j$ has been computed, the agents involved in tight constraints are those agents whose utility cannot be increased, unless either an anticore constraint is violated, or the utility of some other agent is decreased. Computing the maximum increase of utility of an agent, subject to keeping the utility of other agents unchanged and not violating an anticore constraint, is again a special case of {\em line search in submodular polyhedra}. Going over all agents not locked in previous iterations and determining for which of them the maximum increase is~0 gives us the newly locked agents.

\end{enumerate}

\end{proof}

\section{Continuity of the \lexmax\ solution}
\label{sec:continuity}

In this section we consider continuity properties of the \lexmax\ solution as a function of the cardinal valuations of the agents. First, it is important to note that when the disagreement utilities are a function of the valuations, this function might have discontinuity points. For example, this might happen if the disagreement utilities are computed as the outcome of the Random Priority mechanism, and cardinal valuations change to the extent that ordinal preferences over alternatives also change. At discontinuity points for disagreement utilities we shall not require (and do not expect) that the \lexmax\ solution will be continuous. Hence we shall assume in this section that the disagreement utility is the outcome of some distribution $\pi_v$ over the set ${\cal{A}}$ alternatives, and that this distribution does not change when cardinal valuations of agents change (though the disagreement utility itself might change, due to the change in valuations of the alternatives). Hence for agent $i$ the disagreement utility can be expressed as $E_{A \leftarrow_{\pi_v} {\cal{A}}}[v_i(A)]$. We remark that if the disagreement utilities are computed as the outcome of the Random Priority mechanism, our results hold with respect to changes of cardinal valuations that do not alter the ordinal preferences over the alternatives.

As a convention, we shall apply additive shifts to the valuations so that the disagreement utilities are~0. This is done without loss of generality, because the allocation and the transfers of the \lexmax\ solution remain unchanged when an additive shift is applied to the valuation function of an agent. After these additive shifts, $E_{A \leftarrow_{\pi_v} {\cal{A}}}[v_i(A)] = 0$ for every agent $i$. We let $u_i$ denote the utility of agent $i$ in the \lexmax\ solution.

We now introduce additional notation for the purpose of discussing continuity, and the associated Lipshitz constant. We shall consider the effects of the change of the valuation function of a single agent $i$ (while keeping the valuation functions of all other agents fixed) on the utilities of each of the agents. Let $I$ be an instance, and let $v_i$ be the valuation function of agent $i$ (viewed as a vector in $R^m$ where $m =|{\cal{A}}|$). Let $e \in R^m$ be a modification vector, leading to a new valuation function $v'_i  = v_i + e$, and correspondingly a new instance $I'$. To satisfy our convention that the disagreement utility of agent $i$ is~0, we require the modification vector to satisfy $E_{A \leftarrow_{\pi_v} {\cal{A}}}[e(A)] = 0$, which then implies that also $E_{A \leftarrow_{\pi_v} {\cal{A}}}[v'_i(A)]=0$.

Introduce a parameter $t$ (for {\em time}) that changes gradually from $-1$ to $1$. For $-1 \le t \le 1$, let instance $I_t$ be the instance in which the valuation function of $i$ is $v_i + te$. Hence $I_0 = I$ and $I_1 = I'$. Let $|e|$  denote the maximum difference between entries in vector $e$ (the maximum value minus the minimum value).
%The disagreement utility of player $i$ changes by at most $|t| \cdot |e'|$ from instance $I$ to instance $I_t$, and the disagreement utility of other players does not change at all.
For every set $S$ and every $-1 \le t \le 1$, the change in $W_{max}(S)$ from instance $I$ to instance $I_t$ is at most $|t|\cdot |e|$ if $i \in S$, and there is no change if $i \not\in S$.
%After this change, we are left with the situation that only $W_{max}(S)$ changed (and not the disagreement utility), and also now, the change in $W_{max}(S)$ is at most $|t|\cdot |e|$ (when $i \in S$), and there is no change if $i \not\in S$.

For an arbitrary player $j$ (it can be that $j = i$), let $u_j(t)$ denote the utility of player $j$ under the \lexmax\ solution on instant $I_t$.

\begin{proposition}
\label{pro:continuous}
For instance $I$ and modification vector $e$ as above, for every agent $j$ the associated utility function $u_j(t)$ is continuous in the interval $-1 \le t \le 1$ (as long as the WS-core remains nonempty).
\end{proposition}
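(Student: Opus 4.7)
The plan is to reduce continuity of the \lexmaxws\ vector to continuity of the min-square solution, exploiting Corollary~\ref{cor:submodular} which (in the submodular regime that underlies this appendix) equates the two selection rules. Continuity of the min-square solution will then follow by a standard Berge-type argument once the WS-core is shown to vary continuously in $t$.

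First I would verify that the WS-core $P(t)\subset\mathbb{R}^n$ depends Lipschitz-continuously on $t$ in the Hausdorff metric. After the normalization carried out at the start of the appendix, $P(t)$ is the polytope defined by: (i) the domination inequalities $u_k\ge 0$, which are invariant in $t$ because $E_{A\leftarrow_{\pi_v}{\cal A}}[e(A)]=0$ keeps agent~$i$'s disagreement utility at zero under the shift $te$; (ii) the anticore inequalities $\sum_{k\in S}u_k\le W_{\max}(S;t)$; and (iii) the efficiency equation $\sum_k u_k=W_{\max}(\mathcal{N};t)$. For any fixed alternative $A$, perturbing $t$ by $\Delta t$ changes $v_i(A)+t\,e(A)$ by at most $|\Delta t|\cdot|e|$, so each $W_{\max}(S;t)$ with $i\in S$ is $|e|$-Lipschitz in $t$, and those with $i\notin S$ are constant. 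Combined with the standing assumption that $P(t)$ is nonempty throughout the interval, this yields Hausdorff continuity of $P(t)$ with Lipschitz constant $O(|e|)$.

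Next I would deduce continuity of the min-square selection $u^{(m)}(t):=\arg\min_{u\in P(t)}\|u\|_2^2$ by a Berge-type argument. Uniqueness is given by strict convexity (Proposition~\ref{pro:unique}(1)). For any $t_k\to t_0$, extract a subsequential limit $u^{(m)}(t_k)\to u'$; Hausdorff continuity of $P(\cdot)$ places $u'$ in $P(t_0)$ and supplies approximants $v_k\in P(t_k)$ with $v_k\to u^{(m)}(t_0)$. Passing the inequality $\|u^{(m)}(t_k)\|_2^2\le\|v_k\|_2^2$ to the limit forces $\|u'\|_2^2\le\|u^{(m)}(t_0)\|_2^2$, and uniqueness in $P(t_0)$ forces $u'=u^{(m)}(t_0)$. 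Finally, Corollary~\ref{cor:submodular} gives $u^*(t)=u^{(m)}(t)$, so each coordinate $u_j(t)$ is continuous; the quantitative Lipschitz constant of~$1$ promised in the body would follow from combining the Lipschitz constant of the constraint data with first-order optimality for the min-square program.

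The principal obstacle is the appeal to submodularity in the last step. In the general case (nonempty WS-core but non-submodular $W_{\max}$), the min-square and \lexmax solutions need not coincide (Proposition~\ref{pro:unique}(3)), and a direct continuity proof for \lexmax is delicate because the lex ordering on sorted utility vectors is not continuous at ties: even when uniqueness (Proposition~\ref{pro:unique}(2)) and Hausdorff continuity force a putative limit $\bar u\ne u^*(t_0)$ to satisfy $u^*(t_0)>_{Lex}\bar u$, upgrading this to a strict lex inequality between approximating points $y_k\in P(t_k)$ and $u^*(t_k)$ requires a careful treatment of the first coordinates of (near-)disagreement, which could be handled by a small perturbation that breaks ties.
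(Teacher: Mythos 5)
Your proof is correct in the regime the paper actually covers (submodular $W_{max}$), but it takes a genuinely different route from the paper's. The paper's (one-line) argument is dynamic: it tracks the water-filling algorithm and argues that its increments and locking times vary continuously with $t$. You instead argue statically: the feasible polytope $P(t)=\{u: Au\le b(t)\}$ has a fixed constraint matrix and right-hand sides that are Lipschitz in $t$, hence (by the Walkup--Wets/Hoffman bound for polyhedral multifunctions, which you should cite or prove rather than assert) $P(t)$ is Hausdorff-continuous wherever nonempty; a Berge argument then gives continuity of the unique min-square selection, and Corollary~\ref{cor:submodular} transfers this to \lexmaxws. What your approach buys is a cleaner separation of concerns -- continuity of the feasible set plus continuity of a strictly convex argmin -- and it sidesteps the delicate tie-breaking analysis of the lexicographic order that you correctly identify as the obstacle to a direct argument. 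What it costs is that both the reduction to min-square and (implicitly) the water-filling route need submodularity, so neither your proof nor the paper's establishes the proposition at the literal generality of its statement, which only hypothesizes nonemptiness of the WS-core; you are more honest than the paper in flagging this. Two quantitative caveats: your claimed Hausdorff--Lipschitz constant $O(|e|)$ for $P(t)$ silently absorbs the Hoffman constant of the $2^n\times n$ incidence matrix, which can grow with $n$ (indeed the paper's own Remark~\ref{rem:Lipshitz} shows the solution's Lipschitz constant can depend on $n$ without submodularity), and your closing suggestion that the Lipschitz constant~$1$ of Theorem~\ref{thm:Lipshitz} would ``follow from first-order optimality'' is too optimistic -- that bound genuinely needs the lattice structure of tight sets supplied by Lemma~\ref{lem:lattice}. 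Neither caveat affects the continuity claim itself.
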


\begin{proof}
This is a consequence of the water filling algorithm. Details omitted.
\end{proof}

Having established continuity, we now analyse the associated Lipshitz constant. Let $B$ denote the total welfare of the maximum welfare alternative, and let $b_S$ stand for $W_{max}(S)$. The \lexmax\ solution satisfies the following set of linear constraints that we refer to as the WS-constraints:

\begin{enumerate}

\item $\sum_{i \in S} u_i \le b_S$ for every $S \subset [n]$.

\item $\sum_{i = 1}^n u_i = B$ (where $B = b_{[n]}$).

\item $u_i \ge 0$ for every $1 \le i \le n$.

\end{enumerate}

Given a feasible solution $u$ to the WS-constraints (the \lexmax\ solution is one such solution), we say that a set $S \in [n]$ is {\em tight} if its corresponding constraint is satisfied with equality (namely, $\sum_{i \in S} u_i = b_S$). Observe that the set $[n]$ is always tight, by constraint~2, and we may treat the empty set as tight as well.

A collection $\cal{S}$ of sets is a (distributive) {\em lattice} if for every $S \in {\cal{S}}$ and $T \in {\cal{S}}$ it holds that $S\cap T \in {\cal{S}}$ and $S \cup T \in {\cal{S}}$.

\begin{lemma}
\label{lem:lattice}
If $W_{max}$ is submodular, then for every feasible solution $u$, the collection of tight sets (w.r.t. the WS-constraints) forms a lattice.
\end{lemma}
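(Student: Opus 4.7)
The plan is to use the classical submodularity argument: combine the tightness of two sets $S$ and $T$ with the anticore inequalities for $S\cap T$ and $S\cup T$, and squeeze via submodularity of $W_{max}$.

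First I would note that the utility vector $u$ is additive as a set function, so the identity
\[
u(S) + u(T) \;=\; u(S\cap T) + u(S\cup T)
\]
holds unconditionally, where I write $u(X)=\sum_{i\in X}u_i$. Suppose $S$ and $T$ are tight, meaning $u(S)=b_S$ and $u(T)=b_T$. Since $u$ is a feasible solution of the WS-constraints, the anticore inequalities give $u(S\cap T)\le b_{S\cap T}$ and $u(S\cup T)\le b_{S\cup T}$. Combining these with the identity above yields
\[
b_S + b_T \;=\; u(S)+u(T) \;=\; u(S\cap T)+u(S\cup T) \;\le\; b_{S\cap T}+b_{S\cup T}.
\]

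Now I would invoke the submodularity of $W_{max}$, namely $b_{S\cap T}+b_{S\cup T}\le b_S+b_T$. Chaining this with the previous display forces every inequality to be an equality, and in particular
\[
u(S\cap T) = b_{S\cap T} \qquad\text{and}\qquad u(S\cup T) = b_{S\cup T},
\]
so both $S\cap T$ and $S\cup T$ are tight as well. This shows that the collection of tight sets is closed under intersection and union, hence a (distributive) lattice. A small remark worth including is that the empty set and the ground set $[n]$ are automatically tight (the former trivially, the latter by constraint~2), so the lattice is nonempty and contains a maximum and a minimum element.

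There is no real obstacle here; the only subtlety is being careful that constraint~2 of the WS-constraints (the equality $\sum_i u_i = B$) is used to ensure that $[n]$ is tight, and that nothing in the argument required any property of $u$ beyond feasibility and the modular (additive) identity on disjoint unions. In particular, the argument does not use the domination constraints at all, so the lemma holds for every feasible $u$ with respect to the anticore portion of the WS-constraints, which is all the statement asks.
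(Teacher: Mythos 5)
Your proof is correct and follows essentially the same argument as the paper: additivity of $u$ on disjoint unions, the anticore inequalities for $S\cap T$ and $S\cup T$, and submodularity of $W_{max}$ force all inequalities to be equalities, so the tight sets are closed under intersection and union. The extra remarks about $\emptyset$ and $[n]$ being tight and about the domination constraints not being needed are accurate but not required.
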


\begin{proof}
Recall that $b_S = W_{max}(S)$.
Let $S$ and $T$ be two sets that are tight under the feasible solution $u$, and for $Y \subset [n]$ let $U(Y)$ denote the sum of utilities derived in $u$ by the players in set $Y$. The tightness implies that $U(S) = W_{max}(S)$ and $U(T) = W_{max}(T)$. Additivity of $U$ implies that $U(S) + U(T) = U(S\cap T) + U(S\cup T)$. Submodularity of $W_{max}$ implies that $W_{max}(S) + W_{max}(T) \ge W_{max}(S\cap T) + W_{max}(S \cup T)$. Consequently, it must hold that $U(S \cap T) = W_{max}(S \cap T)$ and $U(S \cup T) = W_{max}(S \cup T)$, since for every set $Y\subseteq [n]$ it holds that $U(Y)\leq W_{max}(Y)$.
Namely, both $S \cap T$ and $S \cup T$ are tight.
\end{proof}

\begin{remark}
\label{rem:tight}
If $W_{max}$ is submodular, then given only the collection of sets that are tight for \lexmax\ solution (but not \lexmax\ itself), we can compute \lexmax\ as follows. Process the tight sets in an order consistent with the natural partial order over these sets (a tight set is processed only after all tight sets that it contains are processed). Given a set $S$ in the collection, let $S' \subset S$ denote those variables $u_i \in S$ whose value was already determined by sets previously visited in the partial order. Then every variable in $S \setminus S'$ gets value $\frac{b_S - \sum_{j \in S'} x_j}{|S| - |S'|}$.
\end{remark}

\begin{theorem}
\label{thm:Lipshitz}
When $W_{max}$ is submodular and the disagreement utilities are the outcome of some distribution $\pi_v$ over the alternatives, the \lexmax\ solution (which is continuous in $t$, see Proposition~\ref{pro:continuous}) has Lipshitz constant at most~1.
\end{theorem}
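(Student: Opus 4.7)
The plan is to establish $|u_j(t) - u_j(t')| \le |t - t'| \cdot |e|$ for every agent $j$ and all $t, t' \in [-1,1]$, which is the natural formalization of ``Lipshitz constant at most~1'' in this context. This bound will be proved first locally and then extended globally by concatenation via the continuity guaranteed by Proposition~\ref{pro:continuous}.

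For the local version, fix $t_0 \in [-1,1]$ and restrict to a sub-interval containing $t_0$ on which (i) the tight-set lattice of the \lexmax\ solution is fixed, and (ii) for every set $S$ of agents, the argmax $A^*_S$ realizing $W_{max}(S)(t)$ is fixed. Both objects are piecewise constant in $t$ (each transition corresponds to a crossing in a finite hyperplane arrangement), so such a sub-interval exists. Using Lemma~\ref{lem:lattice}, pick a maximal chain $\emptyset = N_0 \subsetneq N_1 \subsetneq \cdots \subsetneq N_k = \mathcal{N}$ in the tight-set lattice, and set $S_\ell = N_\ell \setminus N_{\ell-1}$. Applying Remark~\ref{rem:tight} along this chain, and using the tightness equation $\sum_{k \in N_{\ell-1}} u_k(t) = b_{N_{\ell-1}}(t)$, gives the clean formula
\[
u_j(t) \;=\; \frac{b_{N_\ell}(t) - b_{N_{\ell-1}}(t)}{|S_\ell|} \qquad \text{for every } j \in S_\ell,
\]
where $b_S(t) := W_{max}(S)(t)$. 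Hence $u_j(t) - u_j(t') = \bigl(\Delta b_{N_\ell} - \Delta b_{N_{\ell-1}}\bigr)/|S_\ell|$ with $\Delta b_S := b_S(t) - b_S(t')$.

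The crux is to show $|\Delta b_{N_\ell} - \Delta b_{N_{\ell-1}}| \le |t-t'| \cdot |e|$, split into three cases based on how the perturbed agent $i$ meets the chain. If $i \notin N_\ell$ then $i \notin N_{\ell-1}$ and both differences vanish. If $i \in S_\ell$ then $\Delta b_{N_{\ell-1}} = 0$ while $\Delta b_{N_\ell} = (t-t')\,e(A^*_{N_\ell})$ since $A^*_{N_\ell}$ is fixed on the sub-interval, and this is bounded by $|t-t'| \cdot \max_A |e(A)| \le |t-t'|\,|e|$, using that $E_{\pi_v}[e] = 0$ forces $e_{\min} \le 0 \le e_{\max}$. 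The delicate case is $i \in N_{\ell-1}$: both differences then equal $(t-t')\,e(\cdot)$ evaluated at their respective (fixed) argmaxes, so their subtraction is $(t-t')\bigl(e(A^*_{N_\ell}) - e(A^*_{N_{\ell-1}})\bigr)$, whose absolute value is at most $(e_{\max} - e_{\min})\cdot|t-t'| = |e|\cdot|t-t'|$. Dividing by $|S_\ell| \ge 1$ gives the local Lipshitz bound. Globally, $[-1,1]$ is partitioned into finitely many sub-intervals by transitions in the tight-set lattice and in the argmaxes; the local bound applies on each, and continuity at the transition points combined with the triangle inequality yields the bound across all of $[-1,1]$. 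I expect the case $i \in N_{\ell-1}$ to be the main obstacle: a naive bound on $|\Delta b_{N_\ell}|$ and $|\Delta b_{N_{\ell-1}}|$ separately only yields a factor of $2$, so one must keep the two terms combined and exploit that $|e|$ is defined as $e_{\max}-e_{\min}$ rather than $\max_A|e(A)|$.
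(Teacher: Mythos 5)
Your proof is correct and follows essentially the same route as the paper's: partition $[-1,1]$ at the finitely many points where the tight-set structure (and, in your refinement, the argmaxes realizing $W_{max}$) changes, express the utilities via Lemma~\ref{lem:lattice} and Remark~\ref{rem:tight}, and observe that the only delicate case is when $i$ lies in both relevant tight sets, where the two terms must be kept together so that the difference $e(A^*_{N_\ell})-e(A^*_{N_{\ell-1}})$ is bounded by $e_{\max}-e_{\min}=|e|$ rather than by $2\max_A|e(A)|$. The paper organizes the computation via the maximal proper tight subsets $T_k$ of each tight set rather than via a maximal chain, but the cancellation argument is identical; your device of fixing the argmax on each sub-interval makes the evaluation of $\Delta b_S$ somewhat more explicit than the paper's.
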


\begin{proof}
We shall refer to a value $-1 \le t \le 1$ as a {\em breakpoint} if at that point (either approaching it from the left of from the right or both) some set that was not tight (with respect to the WS-constraints for \lexmax) becomes tight. There are only finitely many breakpoints.
%(In fact, it suffices for our argument that there are only countably many breakpoints.)
Removing these breakpoints, the interval $-1 \le t \le 1$ breaks into finitely many subintervals (open subintervals, except at the points $t = \pm 1$). By continuity, if we bound the Lipshitz constant in every subinterval, this implies the same bound on the Lipshitz constant for the whole interval.

%Consider now an arbitrary subinterval. By Lemma~\ref{lem:lattice}, the collection $\cal{T}$ of tight sets forms a lattice. As in the proof of Proposition~\ref{pro:rational}, the optimal solution (minimizing $\sum_i (x_i)^2$) can be uniquely determined by following the partial order over the sets. Given a set $S\in {\cal{T}}$, let $S' \subset S$ denote those variables $x_i \in S$ whose value was already determined by sets previously visited in the partial order. Then every variable in $S \setminus S'$ gets value $\frac{b_S - \sum_{j \in S'} x_j}{|S| - |S'|}$.

Consider now an arbitrary subinterval. By Lemma~\ref{lem:lattice}, the collection $\cal{T}$ of tight sets forms a lattice. Given two instances $I_t$ and $I_{t + \epsilon}$ in the subinterval, by how much could the solutions change? Recalling Remark~\ref{rem:tight} (and the notation $S$ and $S'$ from that remark), this entails checking by how much $b_S - \sum_{j \in S'} x_j$ might change (due to the change in $v_i$ between the two instances). If $i \not\in S$ then there is no change. If $i \in S$ but $i \not\in S'$ then $b_S$ changes by at most $\epsilon |e|$ and $\sum_{j \in S'} x_j$ does not change, and hence the change is at most $\epsilon |e|$. If $i \in S'$ then consider the collection of sets $\{T_k\}$ such that for every $k$ it holds that $T_k \in {\cal{T}}$, $T_k \subsetneq S$ and there is no set $T \in {\cal{T}}$ such that $T_k \subsetneq T \subsetneq S$ (the $T_k$ are maximal). By the lattice structure, the sets $T_k$ are disjoint. Without loss of generality, $i \in T_1$.  Then the change of value for each variable in $S \setminus S'$ is exactly $\frac{b_S[I_{t+ \epsilon}] - b_S[I_t] - \sum_k (b_{T_k}[I_{t+ \epsilon}] -b_{T_k}[I_t])}{|S| - |S'|}$. Observe that $b_S[I_{t+ \epsilon}] - b_S[I_t]$ is nonzero only as a result of an alternative changing value for $i$, and likewise for $b_{T_1}[I_{t+ \epsilon}] -b_{T_1}[I_t]$ (and for the rest of the $T_k$ we have that $b_{T_k}[I_{t+ \epsilon}] -b_{T_k}[I_t] = 0$). By the definition of $|e|$, the difference in these changes cannot exceed $\epsilon |e|$. (The change in $b_S[I_{t+ \epsilon}]$ is upper bounded by the change in value of the alternative allocated to $i$ when computing $b_S[I_{t+ \epsilon}]$, and similarly for the change in $b_{T_1}[I_{t+ \epsilon}]$. Hence the difference between these two changes cannot exceed $\epsilon |e|$.) Consequently, we get in all cases a Lipshitz constant of at most~1.
\end{proof}

\begin{remark}
\label{rem:Lipshitz}
If the WS-core is nonempty and $W_{max}$ is not submodular, then the Lipshitz constant of \lexmax\ might depend on $n$, the number of agents. For example, suppose that there are $n$ agents and four alternatives, $A_1, A_2, A_3, A_4$. Let $v_1 = (1,2,0,0)$, $v_2 = \ldots = v_{n-1} = (1,0,6,0)$, and $v_n = (1,0,0,6n)$. Alternative $A_1$ serves as the disagreement alternative, and alternative $A_4$ maximizes welfare (which is $6n$). The function $W_{max}$ is not submodular. In particular, $W_{max}(\{1\}) = 2$, $W_{max}(\{1,2\}) = 6$, $W_{max}(\{1,3\}) = 6$, $W_{max}(\{1,2,3\}) = 12$, showing that $W_{max}(\{1,2\}) + W_{max}(\{1,3\}) < W_{max}(\{1\}) + W_{max}(\{1,2,3\})$. The \lexmax\ solution will allocate utilities $(2, 4, \ldots, 4, 2n+6)$. Changing $v_1$ to $(1,3,0,0)$, \lexmax\ will allocate utilities $(3, \ldots, 3, 3n+3)$. Hence a change of~1 in $v_1$ results in a change of $n-3$ in $u_n$.
% this may be a reason to prefer the min-square mechanism over \lexmax\ when $W_{max}$ is not submodular.
\end{remark}

\section{Computing disagreement utilities}

The water filling algorithm for computing the \lexmax\ solution requires the computation of the disagreement utilities $u_{\pi_v}(i)$. Let us discuss briefly the computational complexity of this task in the special case of the room (item) allocation problem. Suppose that there are $n$ agents and $n$ items, that the valuation functions of the agents for the items are given (where $v_i(j)$ is the value that agent $i$ associates with item $j$, and is an integer with absolute value at most $M$), and one needs to allocate one item to each agent. In our setting, the disagreement utility for agent $i$ in \lexmax\ is the expected utility that agent $i$ derives from some default  allocation mechanism with no transfers. We consider here the three candidate default mechanisms that were presented in Section~\ref{sec:ws-core-lexmax}, sketch how they can be implemented algorithmically, and briefly discuss the modifications employed in these mechanisms to handle situations in which the valuation function of an agent might have ties.

\begin{itemize}

\item {\em Uniform (U)}. The disagreement utility of agent $i$ is $\frac{1}{n}\sum_{j=1}^n v_i(j)$. It can be computed exactly in time polynomial in $n$ and $\log M$.

\item {\em Random priority (RP)}. Suppose first that for every agent, her valuation function has no ties (there is no agent $i$ and items $j \not= j'$ such that $v_i(j) = v_i(j')$). The naive approach for computing the disagreement utilities (exactly) involves considering all $n!$ permutations over the agents, and for each permutation determining which item is received by which agent, based on the ordinal preferences of the agents. This procedure can be implemented in polynomial space (in $n$), but it is not polynomial time, and we (the authors) have no reason to believe that there is an alternative algorithm that does compute the disagreement utilities in polynomial time. (Computation of the Shapley value, which is also defined in terms of all possible permutations, is known to be $\#P$ complete in some settings~\cite{DP94}.)

    The RP mechanism is adapted as follows to allow for valuation functions that have ties. Recall that an alternative $A$ is a matching of items to agents. Given a permutation over the agents (as before, all $n!$ permutations are considered), each agent in her turn is faced with a list of alternatives that are still available, and items that are still available to her (matched to her in an least one of the remaining alternatives). Of the items available to her, the agent selects one or more items as most desirable, and discards those alternatives that match to her an item that is not one of the most desirable available items. Implementing this mechanism naively seems to require $n!$ space to store all alternatives. However, it can also be implemented in polynomial space (though still not polynomial time) as follows.    When it is the turn of an agent $i$ to select an item, and several of the remaining items are tied in being most desirable (all have the highest value under $v_i$), the agent is temporarily put on hold, allowing subsequent agents to select items. At every step, if the set of agents on hold contains a subset $S$ of agents whose union of desirable items is also of size $|S|$ (we call such a set {\em tight}), the members of $S$ each get one of their desired items. Finding tight sets can be done in polynomial time (using standard algorithms for bipartite matching).

    %Such an adaptation has the desirable property that RP always gives solutions that are strongly Pareto efficient: agents cannot exchange items in such a way that some agent gains value but no agent loses value.

    We remark that one can also compute the disagreement utilities up to precision $\epsilon$ using a randomized weakly polynomial time algorithm that succeeds with high probability. This is done by randomly sampling $O(\frac{M^2\sqrt{\log n}}{\epsilon^2})$ permutations, for each of them computing the allocation obtained when agents serially select their most preferred item, and for every agent averaging over the utilities that she derives from all the allocations.

\item {\em Eating mechanism (EAT)}. The Eating mechanism can naturally be adapted to the case that the valuation function of an agent may have ties: instead of ``eating" one item at rate~1, the agent can ``eat" all $k$ tied items, each at rate $1/k$. EAT has at most $n$ phases, where a phase ends when some item becomes fully consumed. The length of each phase can be computed in a number of operations that is polynomial in $n$. The precision required in order to express the exact length of a phase may grow significantly as phases progress. Hence practically one would compute the output up to some desirable precision $\epsilon$. This can be done in strongly polynomial time (details omitted).

\end{itemize}

Combining the above discussion on the Eating mechanism and Theorem~\ref{thm:complexity}, we have the following corollary.

\begin{corollary}
\label{cor:roomalg}
In the room allocation problem with $n$ agents and with the Eating mechanism serving as a disagreement point, the \lexmax\ solution can be computed in strongly polynomial time.
\end{corollary}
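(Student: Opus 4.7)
The plan is to assemble three ingredients already in place: the submodularity of $W_{max}$ in the unit-demand setting, the existence of a strongly polynomial value oracle for $W_{max}$, and the strongly polynomial computation of the disagreement utilities under the Eating mechanism. Together with the reduction to a single designated disagreement alternative that is described just before Theorem~\ref{thm:complexity}, these ingredients reduce the corollary to an immediate application of the second part of that theorem.

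First, I would recall that for any set $S$ of agents the quantity $W_{max}(S)$ equals the value of a maximum-weight bipartite matching between $S$ and the items. Such a matching is computable in strongly polynomial time (e.g.\ by the Hungarian algorithm), which gives a value oracle for $W_{max}$ in the sense required by Theorem~\ref{thm:complexity}. Moreover, by Proposition~\ref{prop:unit-demand-ours}, $W_{max}$ is submodular.

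Next, I would use the paragraph of Section~\ref{sec:algorithms} describing the Eating mechanism: EAT proceeds in at most $n$ phases, and the fractional allocation it produces, as well as the expected utility $u_{\pi}(i) = \sum_j v_i(j)\,x_{ij}$ it assigns to each agent $i$, can be computed in strongly polynomial time. Having these numbers in hand, I would invoke the reduction explained just before Theorem~\ref{thm:complexity}: introduce a dummy alternative $D^\ast$ with $v_i(D^\ast) = u_{\pi}(i)$ for every $i$, and replace $\pi$ by the point mass on $D^\ast$. This is a legitimate reduction because, for every $S \subseteq \mathcal{N}$,
\[
\sum_{i \in S} v_i(D^\ast) \;=\; \sum_{i \in S} u_{\pi}(i) \;=\; \mathbb{E}_{A \sim \pi}\!\Bigl[\sum_{i \in S} v_i(A)\Bigr] \;\le\; \max_{A \in \mathcal{A}} \sum_{i \in S} v_i(A) \;=\; W_{max}(S),
\]
so adding $D^\ast$ changes neither $W_{max}$ nor the anticore, and of course it does not alter the disagreement utilities.

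At this point every hypothesis of the second part of Theorem~\ref{thm:complexity} holds: $W_{max}$ is submodular, a value oracle for $W_{max}$ is available in strongly polynomial time, and we have a designated disagreement alternative $D^\ast$. The theorem therefore yields a strongly polynomial algorithm for the \lexmax\ solution, completing the proof. The only nontrivial step is verifying the strongly polynomial computation of the Eating outcome; this is asserted (with details omitted) in Section~\ref{sec:algorithms}, so I would simply cite it rather than reprove it here.
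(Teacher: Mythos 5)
Your proposal is correct and follows the same route as the paper, which derives the corollary by combining the strongly polynomial computation of the Eating disagreement utilities with the dummy-disagreement-alternative reduction and the second part of Theorem~\ref{thm:complexity} (using submodularity of $W_{max}$ and a value oracle via maximum-weight bipartite matching). Your explicit check that the dummy alternative leaves the anticore unchanged is a detail the paper only asserts in passing, but it does not change the argument.
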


We remark that for small values of $n$, the polynomial time algorithms of Theorem~\ref{thm:complexity} and Corollary~\ref{cor:roomalg} might be slower than other simpler to implement algorithms that are not polynomial time. Consider for example the room allocation problem.  If the disagreement point is the random priority mechanism, then the disagreement utility of all agents can be determined by considering all $n!$ orders over players. The anticore has $2^n - 1$ constraints. In each iteration $j$ of the water filling algorithm, one can check what upper bound each of these constraints places on $x_j$, and take for $x_j$ the smallest of these upper bounds. Shared apartments rarely have more than $n=4$ rooms, and the above simple algorithm will run very quickly on such instances.

\ignore{

\begin{appendix}

In continuation to Remark~\ref{rem:Lipshitz}, consider the following system of linear equations (one may think of them as the tight constraints for some optimization problem), with $n \ge 4$ even.

\begin{itemize}

\item $x_1 = b_1$

\item $x_1 + x_2 = b_2$

\item $x_1 + x_3 = b_3$

\item For every $2 \le i \le \frac{n}{2} - 1$:

\begin{itemize}

\item $x_{2i-2} + x_{2i-1} + x_{2i} = b_{2i}$

\item $x_{2i-2} + x_{2i-1} + x_{2i + 1} = b_{2i+1}$

\end{itemize}

\item $x_{n-2} + x_{n-1} + x_n = b_n$.

\end{itemize}

This system has a unique solution. Changing $b_1$ to $b_1 + \epsilon$ changes $x_{2i}$ by $(-1)^{i}2^{i}\epsilon$. Hence for some systems of linear equations with 0/1 coefficients, the Lipshitz constant may be exponential in $n$.
}
\ignore{
\end{appendix}

\section{Full information upon request}
\label{sec:discuss}

In this work we assume (as is often assumed in literature on cooperative games and in systems like Spliddit \cite{GP14}) that knowledge of the true valuation functions of the agents is available to our mechanisms. This should not be interpreted as if we assume that all agents know the valuation functions of all other agents. Rather, the interpretation is that a mechanism can request information from the agents about their valuation functions (e.g., the most preferred alternative from a set of alternatives in the {\em random priority} mechanism, full ordinal preferences for the {\em eating} mechanism, cardinal valuations for \lexmaxws) and obtain truthful replies.  In practice, presumably agents will be asked to report their valuation functions to the mechanism in private -- in our mechanisms there is no need for an agent to know the valuation functions of other agents.

Below we explain why we view it as both necessary and reasonable to assume that agents will supply truthful information to our mechanisms.

%The main points of the discussion are as follows. In general, we have no control on whether agents will be truthful or not. All we can attempt to do is to design the mechanism in such a way that it provides incentives for the agents to be truthful (but we do not have guarantees that the agents will respond to these incentives). The most common incentive for truthfulness in game theory literature is to make being truthful a dominant strategy. However, this is impossible to achieve in our setting.

{\bf Dominant strategies.} Informally, a (deterministic) mechanism has  {\em dominant strategies}, if for every agent, whenever the agent needs to provide information to the mechanism, then the combination of the information previously provided to the agent by the mechanism and the valuation function of the agent itself suffice in order for the agent to figure out a ``best response". Here the response is the information that the agent provides, and being ``best" means that regardless of the information not available to the agent (such as valuation functions of other agents), no other response will lead to higher end utility for the agent. (The definition for randomized mechanisms is somewhat more complicated, but not needed for the discussion here.) It is well known (and easy to prove) that in our setting, even in the special case of \rentalprob\ with only two agents, there is no mechanism that satisfies the following three properties simultaneously: maximizing welfare, being budget balanced, and having dominant strategies. Hence it is unavoidable to give up at least one of three properties, and we choose to give up having dominant strategies.

%{\bf On valuation functions.} As is common in related literature, we assume that agents have well defined valuation functions, and that they can compute them. However, this is an abstraction of reality, and typically holds only in some approximate sense. For example, the true value of a room for a student may depend also on factors such as the average temperature in the upcoming summer, or on whether she will break a leg in her upcoming ski vacation. The student may not be able to reliably incorporate such unpredictable factors into her estimate for the value of the room. For such reasons, agents often do not really have dominant strategies even in mechanisms in which they supposedly do.

{\bf Why would agents report their true valuation function to the mechanism?} As remarked earlier, our mechanisms do not have the property that being truthful is a dominant strategy (as this is theoretically impossible). So why is it reasonable to assume that agents will be truthful? We propose here several practical reasons why this may be (approximately) the case in some real world settings. One reason is that games are not played in isolation, but in a larger social context that involves various educational processes and social norms, and this context may encourage truthful behavior in the game. For example, a common social norm is that cheating by someone who has been treated unfairly is more socially acceptable than cheating by someone who was treated fairly. Our solution concept incorporates fairness features (an agent is guaranteed utility at least as high as the disagreement utility, and is furthermore guaranteed that her monetary payments are used only so as to compensate those agents for which the chosen alternative is less desirable, and not so as to provide agents with profits beyond what they could obtain from their best alternative), and this may help reduce the drive to cheat. Another reason why agents may report their true valuation functions is because in our mechanisms, being truthful is a strategy that is not dominated by any other strategy. Unless an agent knows the valuation function of other agents,
being untruthful might cause the agent to lose utility.

{\bf The burden of reporting valuations.} Two features of our \lexmax\ mechanism alleviate some of the cognitive/computational burden that an agent might suffer when computing what to report to the mechanism. One aspect is the continuity property, and moreover, the small Lipshitz constant. For example, in situations where Theorem~\ref{thm:Lipshitz} applies, an agent may provide the requested information up to an additive error of $\epsilon$ of her choice, and be guaranteed that the effect of this error of her final utility will be a difference of at most $\epsilon$. So an agent that is not sensitive to a difference of $\epsilon$ in her utility can afford to compute only $\epsilon$-approximations to her valuation function. The other aspect that sometimes alleviates the burden of reporting valuations is the issue of decomposability, especially in settings like \rentalprob. The mechanism can be broken into two phases, where in the first phase it suffices to report only ordinal valuations, and in the second phase, cardinal valuations need to be reported only for the component to which the agent ends up belonging, and not for the whole input instance.

\section{Discussion of some other modeling assumptions}

{\bf Quasi-linear utilities.} We assume that the utility functions of the agents are {\em quasi-linear}. It is desirable to limit quasi-linearity assumptions so that they need to hold only in a limited range of values. All solutions in the WS-core satisfy the property that for every agent $i$ her utility lies between $\min_{A}[v_i(A)] = W_{min}(i)$ and $\max_{A}[v_i(A)] = W_{max}(i)$. Hence it suffices for our purposes that for every agent $i$, her utility function is quasi-linear in the range $[W_{min}(i), W_{max}(i)]$.

{\bf The role of monetary transfers.} Monetary transfers are used in our solution concept in order to make it beneficial for all agents to move from the disagreement point to a solution that maximizes welfare. Monetary transfers can be used for other purposes as well, such as taxing those agents that happen to be rich and subsidizing those agents that happen to be poor, but these uses of monetary transfers are beyond the scope of this work, and can be applied (if desired) independently of our solution concept.

{\bf Outcome versus process.} We assume that what the agents care about is the final outcome -- the alternative chosen and the transfers. However, sometimes agents care also about the process by which the outcome was reached. For example, players may derive satisfaction not only from ``winning", but also from a sense ``playing well" (e.g., making clever moves in challenging situation, regardless of the outcome). In our setting, by changing the mechanism (e.g., from RP to \lexmax) the nature of the ``game" changes, and the amount of ``pleasure" (or displeasure) derived from playing the game changes. Aspects of this nature are not captured by our work.

{\bf Valuation functions and fairness.} The mechanisms discussed in this paper are based on either ordinal preferences or cardinal valuations of the agents. We remark that there are studies that suggest that even full knowledge of cardinal valuations is insufficient information if the goal is to achieve a solution that is deemed fair by humans. It turns out (see~\cite{YaariBarHillel}, for example) that depending on additional annotation that is provided for the same cardinal valuations, such as whether the valuation is based on {\em needs}, on {\em preferences} or on {\em beliefs}, humans tend to choose different solutions as being fair.

{\bf Disagreement point for \rentalprob.} We assumed a situation in which the students who are faced with the room allocation problem already rented the apartment, and for this setting we used RP as a disagreement mechanism. One may consider also a situation in which the students are contemplating the possibility of renting the apartment, but have not yet committed to renting it. In this case the problem changes because another alternative is introduced, that of not renting the apartment. The value of this alternative for each student is $\frac{r}{n}$ (where $r$ is the total rent and $n$ is the number of students), because this is the amount of money saved by the student by not renting. It is natural to treat this new alternative as the disagreement point. As our \lexmaxws\ mechanism is sensitive to the choice of disagreement point, this will lead to a difference between the solutions that it proposes in the two settings: the one in which the students already committed to rent the apartment, and the one in which they maintain the outside option of not renting.

%\input{old-sections}

%\begin{acks}
%This is where we put acknowledgments.
%\end{acks}

%\input{solutionConcepts}

%\input{app-algorithms}
\end{document}